\DeclarePairedDelimiter\ceil{\lceil}{\rceil}
\DeclarePairedDelimiter\floor{\lfloor}{\rfloor}
\def\BState{\State\hskip-\ALG@thistlm}
\DeclareMathOperator*{\defeq}{\triangleq}
\newtheorem{theorem}{Theorem}
\newtheorem{lemma}{Lemma}
\newtheorem{definition}{Definition}
\newtheorem{proposition}{Proposition}
\newcommand{\bit}{\begin{itemize}}
\newcommand{\eit}{\end{itemize}}
\newcommand{\bc}{\begin{center}}
\newcommand{\ec}{\end{center}}
\newcommand{\ba}{\begin{array}}
\newcommand{\ea}{\end{array}}
\newcommand{\beq}{\begin{equation}}
\newcommand{\eeq}{\end{equation}}
\newcommand{\beqn}{\begin{equation*}}
\newcommand{\eeqn}{\end{equation*}}
\newcommand{\bean}{\begin{eqnarray*}}
\newcommand{\eean}{\end{eqnarray*}}
\newcommand{\bea}{\begin{eqnarray}}
\newcommand{\eea}{\end{eqnarray}}
\def\F{\mathbb{F}}
\def\hv{\boldsymbol{h}}
\def\sv{\boldsymbol{s}}
\def\uv{\boldsymbol{u}}
\def\yv{\boldsymbol{y}}
\def\zv{\boldsymbol{z}}
\newcommand{\Ac}{{\mathcal A}}
\newcommand{\Bc}{{\mathcal B}}
\newcommand{\Cc}{{\mathcal C}}
\newcommand{\Mc}{{\mathcal M}}
\newcommand{\Sc}{{\mathcal S}}
\newcommand{\Wc}{{\mathcal W}}
\newcommand{\T}{{\scriptscriptstyle\mathsf{T}}}
\newcommand{\non}{\nonumber}
\newcommand{\Hen}{\mathbb{H}}
\newcommand{\Mtn}{M_{\text{total}}}
\newcommand{\Mwtn}{M_{\text{worst}}}
\newcommand{\Mt}{M_{\text{total}}^{*}}
\newcommand{\Mwt}{M_{\text{worst}}^{*}}
\newcommand{\Rp}{\gamma_{r,p}}
\newcommand{\Cp}{\gamma_{c,p}}
\newcommand{\ksum}{(k+1)^\dagger}
\newcommand{\ksumm}{(k+2)^\dagger}
\algnewcommand\algorithmicforeach{\textbf{for each}}
\algnewcommand{\IfThenElse}[3]{
  \State \algorithmicif\ #1\ \algorithmicthen\ #2\ \algorithmicelse\ #3}
\renewcommand\algorithmicdo{}
\renewcommand\algorithmicthen{}
\newcommand{\bunderline}[1]{\underline{#1\mkern-4mu}\mkern4mu }
\newcommand{\hvu}{\bunderline{\boldsymbol{h}}}
\newcommand{\xvu}{\bunderline{\boldsymbol{x}}}
\newcommand{\uvu}{\bunderline{\boldsymbol{u}}}
\newcommand{\vvu}{\bunderline{\boldsymbol{v}}}
\newcommand{\yvu}{\bunderline{\boldsymbol{y}}}
\newcommand{\zvu}{\bunderline{\boldsymbol{z}}}
\newcommand{\svu}{\bunderline{\boldsymbol{s}}}
\newcommand{\qvu}{\bunderline{\boldsymbol{q}}}
\newcommand{\fvu}{\bunderline{\boldsymbol{f}}}
\newcommand{\xvut}[1][]{\ifthenelse{\isempty{#1}}{\xvu_{t}}{\xvu_{#1}}}
\newcommand{\hvut}[1][]{\ifthenelse{\isempty{#1}}{\hvu_{t}}{\hvu_{#1}}}
\newcommand{\uvut}[1][]{\ifthenelse{\isempty{#1}}{\uvu_{t}}{\uvu_{#1}}}
\newcommand{\vvut}[1][]{\ifthenelse{\isempty{#1}}{\vvu_{t}}{\vvu_{#1}}}
\newcommand{\svut}[1][]{\ifthenelse{\isempty{#1}}{\svu_{t}}{\svu_{#1}}}
\newcommand{\qvut}[1][]{\ifthenelse{\isempty{#1}}{\qvu_{t}}{\qvu_{#1}}}
\newcommand{\zvut}[1][]{\ifthenelse{\isempty{#1}}{\zvu_{t}}{\zvu_{#1}}}
\newcommand{\yvut}[1][]{\ifthenelse{\isempty{#1}}{\yvu_{t}}{\yvu_{#1}}}
\newcommand{\fvut}[1][]{\ifthenelse{\isempty{#1}}{\fvu_{t}}{\fvu_{#1}}}
\newcommand{\hvt}[1][]{\ifthenelse{\isempty{#1}}{\hv_{t,m}}{\hv_{t,#1}}}
\newcommand{\hvot}[1][]{\ifthenelse{\isempty{#1}}{\hv_{1,m}}{\hv_{1,#1}}}
\newcommand{\zt}[1][]{\ifthenelse{\isempty{#1}}{\zv_{t}}{\zv_{#1}}}
\newcommand{\yt}[1][]{\ifthenelse{\isempty{#1}}{\yv_{t}}{\yv_{#1}}}
\newcommand{\st}[1][]{\ifthenelse{\isempty{#1}}{\sv_{t}}{\sv_{#1}}}
\begin{document}
\sloppy

\title{Distributed Computing with Heterogeneous Communication Constraints:  The Worst-Case Computation Load and Proof by Contradiction}

\author{Nishant Shakya, Fan Li and Jinyuan Chen
\thanks{Nishant Shakya, Fan Li and Jinyuan Chen are with Louisiana Tech University, Department of Electrical Engineering, Ruston, LA 71272, US (emails: nsh018@latech.edu, fli005@latech.edu, jinyuan@latech.edu).  The work was partly supported by Louisiana Board of Regents Support Fund (BoRSF) Research Competitiveness Subprogram (RCS) under grant 32-4121-40336. This work was presented in part at the 52nd Annual Asilomar Conference on Signals, Systems, and Computers, October 2018.}}

\maketitle
\pagestyle{headings}

\begin{abstract}

We consider a distributed computing framework where the distributed nodes have different communication capabilities, motivated by the heterogeneous networks in data centers and mobile edge computing systems. Following the structure of MapReduce, this framework consists of Map computation phase, Shuffle phase, and Reduce computation phase. The Shuffle phase allows distributed nodes to exchange intermediate values, in the presence of heterogeneous communication bottlenecks for different nodes (\emph{heterogeneous communication load constraints}). 
For this setting, we characterize the   \emph{minimum total computation load} and the  \emph{minimum worst-case computation load} in some cases, under the heterogeneous communication load constraints. 
While the total computation load depends on the sum of the computation loads of all the nodes,  the worst-case computation load depends on the computation load of a node with the heaviest job.
We show an interesting insight that, for some cases, there is a tradeoff between the minimum total computation load  and the minimum worst-case computation load, in the sense that both cannot be achieved at the same time. The achievability schemes are proposed with careful design on the file assignment and the data shuffling.  
Beyond the cut-set bound, a novel converse is proposed using the proof by contradiction.
For the general case, we identify two extreme regimes in which both the scheme with coding  and the scheme without coding are optimal, respectively.

\end{abstract}

\section{Introduction}

In recent years, with the availability of low-cost servers and big data, distributed computing systems have come to prominence within industrial sectors. Distributed computing frameworks such as  MapReduce\cite{DGmapreduce:2004}, Hadoop\cite{SKRC:10} and Spark\cite{ZCFSS:10} have been used in many applications that require complex computations, e.g., machine learning and distributed virtual reality (VR).

In distributed computing systems, since a complex computational task is split and assigned to distributed nodes (workers), communication is an important step that facilitates the information exchange across distributed nodes. 
In the MapReduce-based distributed computing framework  (cf.~\cite{DGmapreduce:2004}), data is first split and processed (called \emph{Map}) at the distributed nodes, and then the results are \emph{shuffled} (called \emph{Shuffle}), and  processed again (called \emph{Reduce}). As the amount of data and the number of nodes grow, heavy communication is required for data shuffling phase, which could lead to a non-negligible delay for the overall performance. 

Recently,  a significant number of works have focused on improving the performance of distributed computing system with the use of ``coding''\cite{LMA:15,  LMYA:17, LMAAllerton:16, LYMA:16, YLMA:17, EKF:17, AT:17, DCG:16, DCG:17,  Lee2015speeding, LSR:17, KSD:17, RPPA:19, RP:17, TLDK:17, AT:16,  SFZ:17,  ATallerton:16, PLSSM:18,PRPA:18,PLEisit:18,ZS:19, LTC:18, LCW:19,WCJ:19,YYW:18}.  Specifically, the works in \cite{LMA:15, LMYA:17, LMAAllerton:16, LYMA:16, YLMA:17} have introduced \textit{Coded Distributed Computing (CDC)} framework that utilizes the computational power of the distributed nodes with the addition of redundant jobs, which can reduce the communication load in the Shuffle phase. CDC has been developed for \emph{homogeneous} computing systems. However, distributed computing systems  are \emph{heterogeneous} in nature \cite{KWA:17}. In distributed computing systems, different nodes would have different computation capabilities, as well as different communication capabilities. 
For example, in data centers and mobile edge computing systems, the communication typically takes place over heterogeneous networks, in the presence of \emph{heterogeneous} communication bottlenecks that limit differently on different links of the networks \cite{PL:18,MYZHL:17,MB:17, CTJ:17}.

In this work, we consider a distributed computing framework with heterogeneous communication constraints, where different nodes have different communication capabilities. 
Based on the structure of MapReduce, this framework consists of Map computation phase,  Shuffle phase, and Reduce computation phase.
Specifically, the system seeks to compute $Q$ output functions for $N$ input files over $K$ distributed nodes, for some positive integers $Q, N$ and $K$. At first, the input files  are assigned to $K$ distributed nodes by design. The output functions are decomposed into some Map functions and Reduce functions.
  Each Map function then takes one file, e.g., file~$n$, as input and outputs  $Q$ intermediate values $\{a_{q,n}\}_{q=1}^{Q}$, that will be used  for the Reduce functions.
In the Shuffle phase, some intermediate values are shuffled among the distributed nodes. In this work we consider the scenarios  with \emph{heterogeneous} communication constraints that limit differently on the amounts of data to send (\emph{communication loads}) for different nodes,  captured by the parameters $L_1, L_2, \cdots, L_K$.
Given the heterogeneous communication load constraints, we seek to characterize the  \emph{minimum total computation load} (denoted by $\Mt$), as well as the \emph{minimum worst-case computation load} (denoted by $\Mwt$) of a distributed computing  system.  
In our setting, the \emph{computation load} of a node is defined by the number of files computed in the node, and $\Mt$ and $\Mwt$ are captured by the sum and the maximum one of all $K$ computation loads, respectively.
In a distributed computing system,  intuitively $\Mt$ is connected with the total resource consumption, while $\Mwt$ is connected with the overall latency because it is affected by the computation time of a node with the  heaviest job.

The main contribution of this work is  the \emph{information-theoretical} characterization  of the minimum  total computation load $\Mt$   and the minimum   worst-case computation load $\Mwt$ in some cases, for the distributed computing systems with heterogeneous communication load constraints. 
The results reveal that, in most of the cases,  $\Mt$  depends on the total communication load constraint parameter, which is defined as $L \defeq \sum_k L_k$, while $\Mwt$ depends on the individual communication load constraint parameters $L_1, L_2, \cdots, L_K$.
Table~\ref{tb:introtable} provides  a summary of $\Mt$ and  $\Mwt$ for three cases of parameters. 
 One can see that, given a fixed $L$,  $\Mt $ is fixed for all of these three cases but $\Mwt$ is not; $\Mwt$ depends on  individual $L_1, L_2, L_3$.

We also show an interesting insight that, in a certain region, there is a tradeoff between the \emph{minimum total computation load} $\Mt$   and the \emph{minimum worst-case computation load} $\Mwt$, in the sense that both cannot be achieved at the same time. For one example with  $(K=Q=3,  N=7, L_1=L_2=2, L_3 =14)$,  we have $\Mt = 7$ and $\Mwt = 4$ (see Table~\ref{tb:introtable}), but we prove that these two cannot be achieved at the same time. 

\begin{table}
\begin{center}
\caption{A summary of $\Mt$ and  $\Mwt$ for three cases of parameters, given $K=Q=3$.} \label{tb:introtable}
\begin{tabular}{cccccccc}
\toprule
Case & $N$  & $L_1$ & $L_2$ & $L_3$ & $L$   & $\Mt$ & $\Mwt$ \\ 
\midrule
$A$ & 7    & 2   &  2  &  14 & 18  &  7  &   4 \\    
$B$ & 7    & 2   &  4  &  12 & 18  &  7  &   3 \\      
$C$ & 7    & 6   &  6  &  6 & 18  &  7  &   3    \\  
\bottomrule
\end{tabular}
\end{center}
\end{table}

To prove our results, the achievability schemes are proposed with careful design on the file assignment and data shuffling, under the heterogeneous communication load constraints.   For the general case, we identify two extreme regimes in which both the scheme \emph{with coding}  and the scheme \emph{without coding} are optimal, respectively.  Note that for the scheme with coding, the transmitted symbol is usually a function (e.g., XOR function) of some information symbols.
In this work, a novel converse is proposed using the \emph{proof by contradiction}.
We show that in some cases, proof by contradiction is a very powerful approach to derive the optimal converse bound, which is strictly better than cut-set bounds and other converse bounds derived from the existing techniques.
Let us focus on one example with $(K=Q=3,  N=7, L_1=L_2=2, L_3 =14)$. For this example, the   \emph{minimum  worst-case computation load} is $\Mwt = 4$, which is achievable by a proposed scheme described in Fig.~\ref{egcontradict0}. 
For the converse,  proof by contradiction produces a novel bound $\Mwt \geq 4$, which is tighter than the cut-set bound ($\Mwt \geq 3$) and the bound derived from \cite{LMYA:17, KWA:17}. Note that the scheme depicted in Fig.~\ref{egcontradict0} is optimal even though coding is not used to broadcast the information symbols. When each node has sufficient communication capabilities, the nodes can use more communication load by sending out uncoded intermediate values which incur lesser computation load at the nodes. As coding increases the computation load, and  as we focus on  minimizing the computation load, coding of information symbols is not always beneficial. More details of this example can be found in Section~\ref{eg:worstcaseexample}.

The paper is organized as follows. Section \ref{sec:system} presents the system model. Section \ref{sec:mainresult} provides the main results of this work. 
Section \ref{sec:example} presents some examples of our proposed schemes. 
The proofs are provided in Sections~\ref{sec:achiK}-\ref{sec:contradp2}.  
Specifically, the achievability schemes are described in  Sections~\ref{sec:achiK}-\ref{sec:achiMwt3}, and the converse is provided in Section~\ref{sec:converse2}.  
The work is finally concluded in Section \ref{sec:conclusion}.  
Throughout this work, $|\bullet|$ denotes a cardinality of a set.  $(\bullet)^\T$  denotes the transpose operation. $[N_1: N_2]$ denotes a set of integers from $N_1$ to $N_2$, for some integers $N_2\geq N_1$.  If $N_2 <N_1$, then $[N_1: N_2] = \emptyset$. 
 $\F^{q}_{2}$ denotes a set of $q$-tuples of binary numbers.   $(\bullet)^+ = \max\{0, \bullet \}$. 
   $\mathbb{N}$ denotes the set of  natural numbers including $0$ and $\mathbb{N}^+$ denotes the set of positive natural numbers.
    $\uv[i]$ denotes the $i^{\text{th}}$ element of vector $\uv$.
$\ceil*{c}$ denotes the least integer that is greater than or equal to $c$. 
Similarly, $\floor*{c}$ denotes the largest integer that is smaller than or equal to $c$.   $[x \text{ mod } y]$ denotes a modulo operation that produces  the remainder after division of $x$ by $y$ for two positive numbers $x$ and $y$.
$\Hen(x)$ denotes the entropy of a random variable $x$.
$\oplus $ denotes a bitwise operation (XOR). 
If the XOR operation is over vectors (or matrices) then, the output is also a vector (or a matrix, respectively).

\begin{figure}
\centering
\includegraphics[scale=1.25, width=15cm]{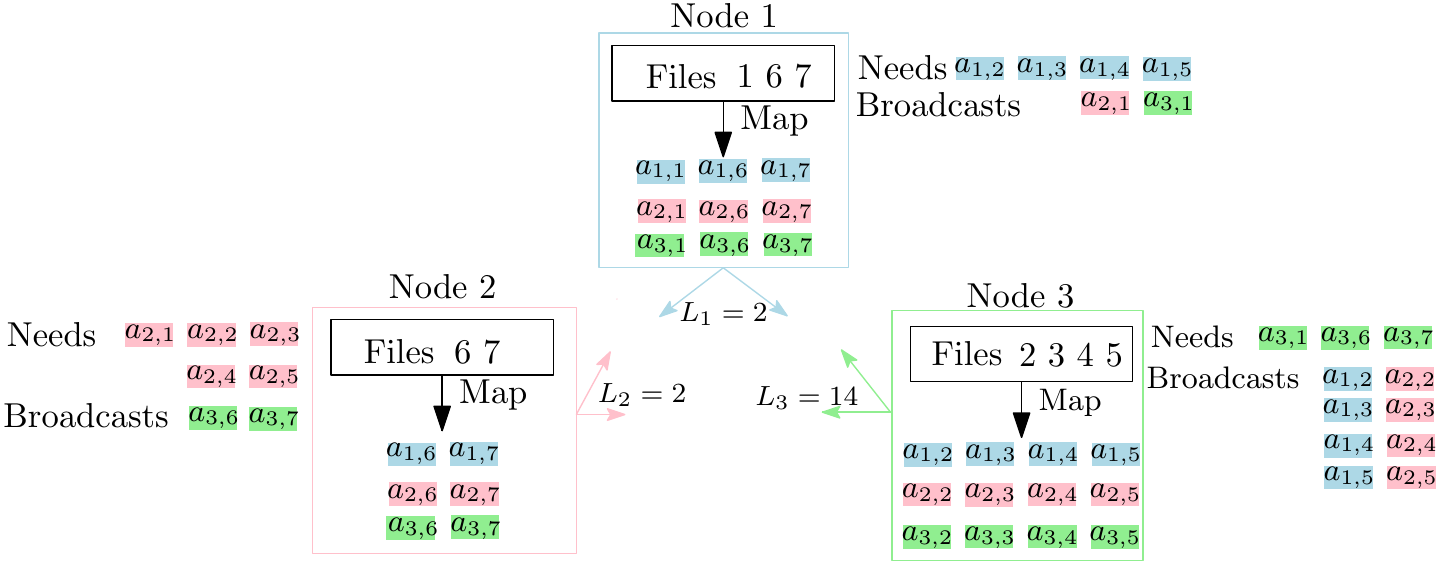}
\caption{A scheme for a distributed computing system with  $( K=Q=3,  N=7, L_1=L_2=2, L_3 =14)$. 
The scheme is optimal in terms  of the minimum \emph{worst-case computation load}, and the optimality proof is based on proof by contradiction. The intermediate values highlighted in blue, pink and green are the intermediate values required by Node~$1$, Node~$2$ and Node~$3$, respectively, to compute their output functions.
 } 
\label{egcontradict0}
\end{figure}

\section{System model \label{sec:system} }

We consider a distributed computing system based on a MapReduce framework (cf.~\cite{DGmapreduce:2004, LMYA:17}), in which  $K$ distributed nodes seek to compute $Q$ output functions using $N$ input files, for some $K,Q,N \in \mathbb{N}^+$, with $N\geq K$. The process of computing $Q$ output functions from $N$ input files can be broken down into three phases, that is, Map, Shuffle and Reduce (see Fig.~\ref{systemmodel}). Next, we will discuss these three phases in detail.

{\bf{\emph{Map phase}}}: In this phase,  $N$ input files, denoted by $\{w_n \in \mathbb{F}^{F}_2: n \in [1:N]\}$, are first assigned to $K$ nodes, for some  $F \in  \mathbb{N}^+$.  Let $\Mc_k \subseteq [1:N]$ denote a set of indices of the files assigned at Node~$k$, $k\in [1:K]$. Let \[M_{k}\defeq|\mathcal{M}_k|.\] In our setting,   $M_{k}$  denotes the \emph{computation load} of Node~$k$. 
For each input file $w_n,  n \in \Mc_k$, Node~$k$ generates $Q$ \emph{intermediate values}, denoted by $\{a_{q,n} \in \mathbb{F}_2^{B}: q \in [1:Q]\}$, for some $B \in \mathbb{N}^+$, where
\[a_{q,n}=  g_{q, n}(w_n).\]
In this setting, $g_{q, n}(w_n)$ is a \emph{Map function} that maps the input file $w_n$ to a length-$B$ value $a_{q,n}$.
We assume that all the intermediate values are  independent and identically distributed (i.i.d.) random variables uniformly distributed over $\mathbb{F}^{B}_2$. 
The realization of $a_{q,n}$ is determined by the input file $w_n$ and the Map function $g_{q,n}$, for $q \in [1:Q]$ and $n \in [1:N]$.

{\bf{\emph{Shuffle phase}}}: 
In our setting, each node is responsible for computing a subset of output functions.  
We use $\mathcal{W}_k$ to denote a set of indices of the output functions computed at Node~$k$, $k \in [1:K]$. In order to complete the whole computation,  $K$ distributed nodes need to exchange intermediate values in the Shuffle phase. 
Specifically, in this phase Node~$k$ multicasts to  the other nodes a message  
\begin{align}
x_{k} = f_{k}(a_{\mathcal{W}_k^c,\Mc_k}),  \label{eq:xkdef}
\end{align}
which is a deterministic function of the  intermediate values cached at Node~$k$ and intended for the other nodes, where $a_{\mathcal{W}_k^c,\Mc_k} \defeq \{a_{q, n }:  q \in [1:Q] \setminus  \mathcal{W}_k, n \in \Mc_k \}$,  for $k \in [1:K]$.
At the end of this phase, Node~$k$ has all the intermediate values needed for its Reduce functions in the next phase.

{\bf{\emph{Reduce phase}}}: In this phase, each node proceeds to compute a set of final output values by using the intermediate values  acquired from Shuffle phase and the local intermediate values computed from Map phase. 
Specifically, for each $q\in \mathcal{W}_k$, Node~$k$ computes the final output value 
\begin{align}
 b_{q}=  \varphi_{q}(a_{q,1},a_{q,2}, \cdots,  a_{q, N}),  \non 
\end{align}
where  $ \varphi_q$ is a \emph{Reduce} function that maps the intermediate values  $\{a_{q,1}, a_{q,2}, \cdots,  a_{q, N}\}$  into an  output value  $b_{q} \in \mathbb{F}_2^{B'}$,   for some  $B' \in \mathbb{N}^+$, $k \in [1:K]$.
We assume  a symmetric job assignment, i.e., each node calculates $Q/K$ Reduce functions, for $Q/K \in \mathbb{N}^+$,  that is, 
\begin{align}
|\mathcal{W}_1|=|\mathcal{W}_2|=\cdots=|\mathcal{W}_K| = Q/K,   \label{eq:funcsym}
\end{align}
and $\mathcal{W}_k \cap \mathcal{W}_j= \varnothing$ for any $k, j \in [1:K]$, $k\not =j$.

We consider a \emph{communication load constraint} for Node~$k$ such that
\begin{align}
& \Hen(x_{k}) \leq    L_{k}\cdot QB/K   ,  \quad k=1,2,\cdots, K,  \label{eq:comc} \\
\text{or equivalently, } & K \cdot \Hen(x_k)/QB \le L_k,  \quad k=1,2,\cdots, K,   \label{eq:comc2}
\end{align}
for some $L_{k} \in  \mathbb{N}$, where $\Hen(x_{k})$ denotes the entropy of the message $x_{k}$ that will be multicast from Node~$k$ to the other nodes. The constraint in \eqref{eq:comc} can be considered as the communication bottleneck of Node~$k$, i.e., the maximum number of  bits of the information that can be sent. The  level of this bottleneck for Node~$k$ is reflected by the parameter $L_{k}$.
We also let
\begin{align} \label{eq:sumL}
L \defeq \sum_{k=1}^{K} L_k
\end{align}
be a parameter of  the \emph{total communication load constraint} of all $K$ nodes. 

In our setting, any MapReduce scheme consisting of Map, Shuffle and Reduce phases should be designed under the communication load constraint in \eqref{eq:comc}.
For any MapReduce scheme,  the total computation load  and  the worst-case computation load are defined by 
\begin{align}
\Mtn  =    \sum_{k=1}^{K} |\mathcal{M}_k|  \quad 
\text{and} \quad \Mwtn  =   \max \Big\{\! |\mathcal{M}_{1}|,\cdots,|\mathcal{M}_{K}|\Big\},
\end{align}
respectively.
In this work, we consider the \emph{minimum} (optimal) total computation load  and  \emph{minimum} (optimal) worst-case computation load of the system, which are defined as follows.

\begin{figure}
\centering
\includegraphics[width=10cm]{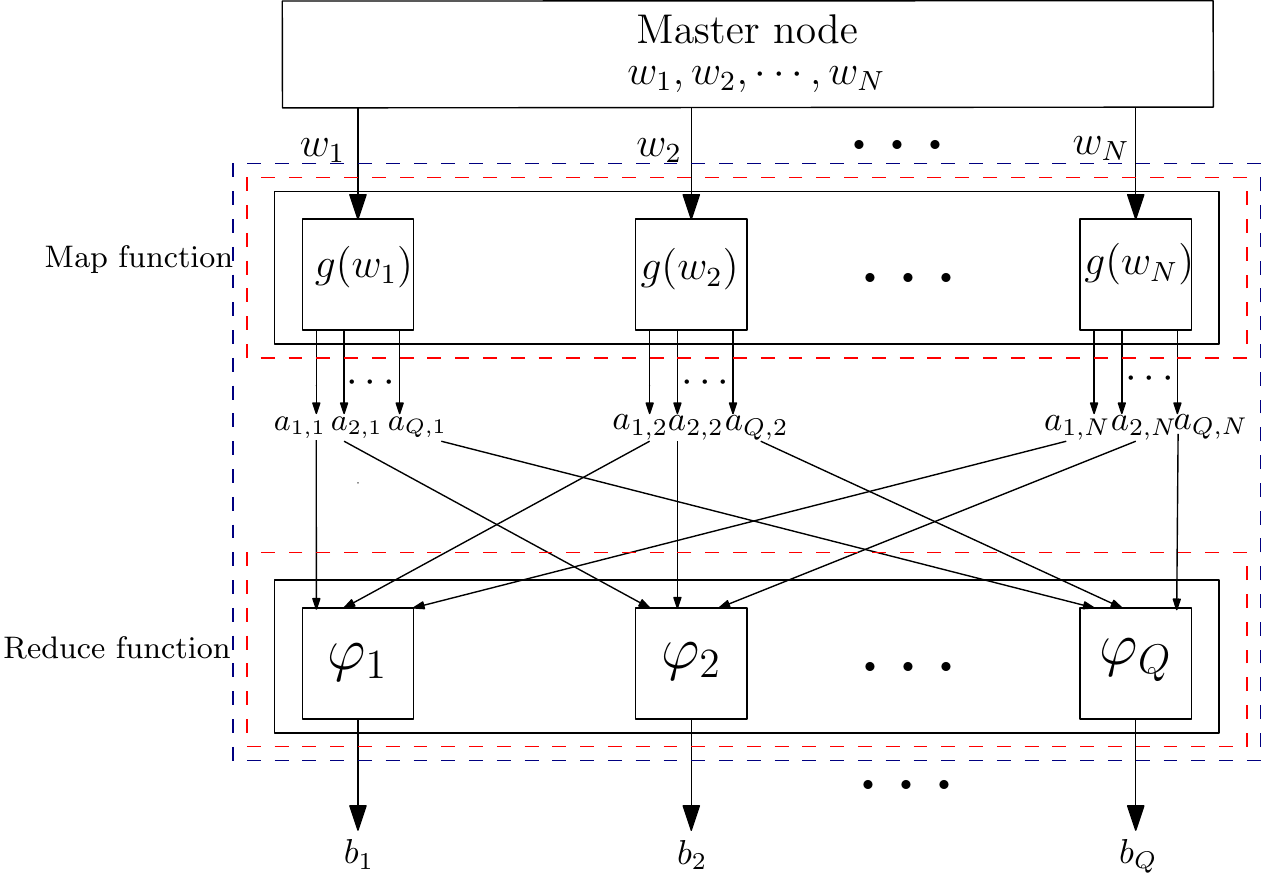}
\caption{A distributed computing system model.  The system consists of Map phase, Shuffle phase and Reduce phase. The Shuffle phase allows distributed nodes to exchange intermediate values, in the presence of heterogeneous communication load constraints. 
 } 
\label{systemmodel}
\end{figure}

\begin{definition}  \label{def:totalcomputation}
Given the communication load constraints with 
${L}_: \defeq \{{L}_1, {L}_2, \cdots, {L}_K\}$, 
and  the  number of files $N$,  the minimum \emph{total computation load} is defined by 
\begin{align*}
\Mt(N,{L}_{:})  = \min_{\Mc_{:} \ : \ \mathbb{S}(N, {L}_{:}, \Mc_{:}) \subseteq \mathbb{S}(N, {L}_{:})}   \sum_{k=1}^{K} |\mathcal{M}_k|,
\end{align*}
where $\Mc_{:}\defeq \{\Mc_{1}, \Mc_{2}, \cdots, \Mc_{K}\}$; $\mathbb{S}(N, {L}_{:})$ denotes the set of all MapReduce schemes with parameters  $(N, {L}_{:})$; and $\mathbb{S}(N, {L}_{:}, \Mc_{:})$ denotes a class of MapReduce schemes  whose file assignment is determined by $\Mc_{:}$.
\end{definition}
\vspace{3pt}

\begin{definition}  \label{def:computation}
Given the communication load constraints with 
${L}_: \defeq \{{L}_1, {L}_2, \cdots, {L}_K\}$, 
and  the number of files $N$,  the  minimum \emph{worst-case computation load} is defined by
\begin{align*}
&\Mwt(N,{L}_{:}) =  \!\!\min_{\Mc_{:} \ :  \ \mathbb{S}(N, {L}_{:}, \Mc_{:}) \subseteq \mathbb{S}(N, {L}_{:})}  \!\! \!\! \! \! \max \! \Big\{\! |\mathcal{M}_{1}|,\cdots,|\mathcal{M}_{K}|\Big\}.
\end{align*}
\end{definition}

\vspace{5pt}

In a distributed computing system,  intuitively $\Mt$ is connected with the total resource consumption, while $\Mwt$ is connected with the overall latency because it is affected by the computation time of a node with the  heaviest job.

For notational convenience, we define $\Sc_\Ac$ as the indices of the files placed in each of the nodes indexed by $\Ac$ but not in the other nodes indexed by $\Ac^c$, that is,
\begin{align} \label{eq:Sdefine}
\Sc_\Ac \defeq \cap_{i \in \Ac} \Mc_i \setminus \cup_{j \in \Ac^c }\Mc_j
\end{align}
for a set $\Ac \subseteq [1:K]$ and $\Ac^c \defeq [1:K] \setminus \Ac$. 
For example, $\Sc_{\{1,2\}} = \Mc_1 \cap \Mc_2 \setminus \Mc_3 \cup \Mc_4$ for the setting of $K=4$. 
The cardinality of $\Sc_\Ac$ is denoted by $S_\Ac \defeq |\Sc_\Ac|$.
For simplicity we will use notation $S_{12}$ to represent $S_{\{1,2\}}$ and similar notations are used for the other set $\Ac$.

\section{Main results  \label{sec:mainresult}}

In this section, we provide the main results of this work. 
We will begin with the two-node ($K=2$) and three-node ($K=3$)  distributed computing systems defined in Section~\ref{sec:system}, and  provide an  information-theoretical characterization of the minimum (optimal) total computation load, as well as the minimum (optimal) worst-case computation load. 
After that, we will focus on the distributed computing system with a general $K$.

\subsection{The case with $K=2$}

For the two-node  distributed computing system, the results are provided in the following  theorem.

\vspace{3pt}
\begin{theorem} [$K=2$] \label{thm:K2}
For a  two-node distributed computing system defined in Section~\ref{sec:system}, the minimum total computation load and the minimum worst-case computation load are characterized by
\begin{align}
\Mt  &=  \max \{N, \ 2N- L\} ,  \\ 
\Mwt &= N - \min \big\{{L}_1, {L}_2,  N - \ceil*{N/2}\big\}.   
\end{align}
\end{theorem}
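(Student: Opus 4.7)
\textbf{Proof plan for Theorem \ref{thm:K2}.} My plan is to establish matching information-theoretic lower and upper bounds on $\Mt$ and $\Mwt$ from two elementary ingredients---a file-coverage requirement and a single-cut entropy inequality---and then to exhibit uncoded file assignments and shuffle transmissions that attain them. The $K=2$ case is especially transparent because there is no coding opportunity across destinations, so the shuffle reduces to Node~$j$ forwarding to Node~$k$ precisely the intermediate values it is missing.

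I would begin the converse by arguing that $\Mc_1\cup\Mc_2=[1{:}N]$: if some file $w_n$ were absent from both caches, then no node ever computes any $a_{q,n}$, and since the intermediate values are i.i.d.\ uniform on $\F_2^B$, no Reduce function $\varphi_q$ can recover $b_q$ with certainty. This immediately forces $M_1+M_2\ge N$ and, as the $M_k$ are integers, $\max\{M_1,M_2\}\ge\ceil*{N/2}$. Next, for $\{j,k\}=\{1,2\}$, Node~$k$ must reconstruct $\{a_{q,n}:q\in\mathcal{W}_k,\,n\in[1{:}N]\setminus\Mc_k\}$ from $x_j$ together with its local cache $\{a_{q',n'}:q'\in[1{:}Q],\,n'\in\Mc_k\}$. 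By mutual independence of the intermediate values, the missing set is independent of the local cache, so
\begin{equation*}
\Hen(x_j)\;\ge\;\Hen\!\left(a_{\mathcal{W}_k,[1{:}N]\setminus\Mc_k}\right)\;=\;(Q/2)\,(N-M_k)\,B,
\end{equation*}
which combined with \eqref{eq:comc} yields $M_k\ge N-L_j$. Taking sums and maxima gives
\begin{equation*}
\Mt\;\ge\;\max\{N,\,2N-L\},\qquad \Mwt\;\ge\;\max\{N-L_1,\,N-L_2,\,\ceil*{N/2}\}\;=\;N-\min\{L_1,L_2,\,N-\ceil*{N/2}\}.
\end{equation*}

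For achievability I would adopt the uncoded rule in which Node~$j$ simply transmits $\{a_{q,n}:q\in\mathcal{W}_k,\,n\in\Mc_j\setminus\Mc_k\}$, costing $|\Mc_j\setminus\Mc_k|\cdot(Q/2)\cdot B$ bits, which satisfies \eqref{eq:comc} iff $|\Mc_j\setminus\Mc_k|\le L_j$. To match the $\Mt$ bound, when $L\ge N$ I would pick any partition $\Mc_1\cap\Mc_2=\emptyset$, $\Mc_1\cup\Mc_2=[1{:}N]$ with each $M_k\ge N-L_j$ (feasible because $2N-L\le N$), giving total load $N$; when $L<N$ I would set $|\Mc_1|=N-L_2$, $|\Mc_2|=N-L_1$ with overlap $N-L\ge 0$, giving total load $2N-L$. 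To match the $\Mwt$ bound, letting $M^{\star}=N-\min\{L_1,L_2,N-\ceil*{N/2}\}$, I would take any assignment with $|\Mc_1|=|\Mc_2|=M^{\star}$ and $\Mc_1\cup\Mc_2=[1{:}N]$; then $|\Mc_j\setminus\Mc_k|=N-M^{\star}\le L_j$ holds by the very definition of $M^{\star}$.

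The only delicate step is the entropy bound $\Hen(x_j)\ge(Q/2)(N-M_k)B$; its validity relies crucially on the i.i.d.\ uniform assumption on the intermediate values, which guarantees that Node~$k$'s local cache carries no information about the values it still needs, so $x_j$ must carry their full joint entropy. Everything else is combinatorial bookkeeping and an explicit construction. Unlike the $K=3$ setting treated later in the paper, no coded shuffling and no proof-by-contradiction converse is required, and I expect the whole argument to fit in a short, self-contained proof.
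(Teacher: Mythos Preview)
Your proposal is correct and follows essentially the same approach as the paper: the converse is precisely the cut-set argument the paper uses (Lemma~\ref{lm:converseg} specialized to $\Sc=\{1\},\{2\},\{1,2\}$), and the achievability is the same uncoded forwarding idea. The only minor difference is that the paper presents a \emph{single} file assignment (Section~\ref{case:0}) that simultaneously attains both $\Mt$ and $\Mwt$, whereas you give separate constructions for the two quantities; this does not affect correctness of the theorem, only the side observation that for $K=2$ no tradeoff exists.
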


The results of Theorem~\ref{thm:K2}  are achieved by the same scheme described in Section~\ref{sec:achiKsame}, which, not only achieves the  minimum total computation load, but also achieves the minimum worst-case computation load for the setting with $K=2$. 
The converse proof of Theorem~\ref{thm:K2} is provided in Section~\ref{sec:ThmTK2}. 
Theorem~\ref{thm:K2} reveals that, in this setting with $K=2$, $\Mt$ depends on the total communication load constraint, while $\Mwt$ depends on the individual communication load constraints.

\subsection{The case with $K=3$}

We proceed to extend the above results to the setting with three nodes ($K=3$). 
Note that, when the number of nodes is increased, the problem in our setting becomes  more challenging. This is because the optimal solution  to our problem (e.g.,  the optimal achievability scheme) must satisfy  $K$ heterogeneous communication load constraints.
For the setting with three nodes, the result on the minimum total computation load is given in the following theorem.

\begin{theorem} [Total, $K=3$] \label{thm:K3} 
For a  three-node distributed computing system, and given  ${L}_k/2 \in \mathbb{N},  \forall k \in \{1,2,3\}$, the minimum total computation load is characterized by
\begin{align} \label{eq:3bound}
\Mt= \max \Big\{N, \ \ceil[\Big]{\frac{7N}{3} - \frac{2L}{3}},  \ 3N-2{L}\Big \}.
\end{align}
\end{theorem}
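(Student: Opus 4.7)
The plan is to prove Theorem~\ref{thm:K3} by establishing matching upper and lower bounds, with each of the three terms in the outer maximum arising from a different regime of $L$. I would parametrize every valid file placement by the cardinalities $S_\Ac = |\Sc_\Ac|$ from~(\ref{eq:Sdefine}), abbreviating $a \defeq S_{\{1\}} + S_{\{2\}} + S_{\{3\}}$, $b \defeq S_{\{1,2\}} + S_{\{1,3\}} + S_{\{2,3\}}$, and $c \defeq S_{\{1,2,3\}}$. Then $N = a + b + c$ and $\Mtn = a + 2b + 3c = 3N - (2a + b)$, so minimizing $\Mtn$ is equivalent to maximizing $2a + b$ subject to $a + b + c = N$ together with a Shuffle-phase inequality $L \ge L_{\min}(a, b, c)$ for some achievable minimum shuffling load.

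For achievability I would construct three schemes, one per regime. When $L \ge 2N$, take $a = N$ and $b = c = 0$: each file is placed at exactly one node, assigned in proportion to $L_1, L_2, L_3$, and the shuffle phase sends the $2Q/3$ non-local intermediate values for each file uncoded; the per-node constraints reduce to $2M_k \le L_k$, which is feasible thanks to the hypothesis $L_k/2 \in \mathbb{N}$. When $N/2 \le L \le 2N$, take $a = (2L - N)/3$, $b = (4N - 2L)/3$, and $c = 0$, and apply a CDC-style XOR shuffle following~\cite{LMYA:17}: for each pair $n_1 \in \Sc_{\{i,j\}}$, $n_2 \in \Sc_{\{i,k\}}$, node $i$ broadcasts $a_{\mathcal{W}_k, n_1} \oplus a_{\mathcal{W}_j, n_2}$, which both nodes $j$ and $k$ decode using a locally computable term; balancing the three pair-streams across the three nodes achieves total shuffling load $2a + b/2$. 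When $L < N/2$, take $a = 0$, $b = 2L$, and $c = N - 2L$, and use the same coded shuffle, yielding $\Mtn = 3N - 2L$.

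For the converse, the bound $\Mt \ge N$ is immediate from the requirement that every file be stored somewhere. The remaining two bounds follow by solving the linear program $\max\{2a + b : a + b \le N,\ 2a + b/2 \le L,\ a, b \ge 0\}$ once the inequality $L_{\min}(a,b,c) \ge 2a + b/2$ has been established. This inequality is the CDC converse of~\cite{LMYA:17} translated into our normalization: each file in $\Sc_{\{k\}}$ forces $2$ units of communication with no coding saving, each file in $\Sc_{\{i,j\}}$ costs at most $1/2$ unit after optimal pairwise XOR coding across $\Sc_{\{i,j\}}$ and $\Sc_{\{i,k\}}$, and files in $\Sc_{\{1,2,3\}}$ cost nothing. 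The two nontrivial vertices of the LP are $(a, b) = (0, 2L)$, active when $L < N/2$ and yielding $\Mtn \ge 3N - 2L$, and $(a, b) = ((2L - N)/3, (4N - 2L)/3)$, active when $N/2 \le L \le 2N$ and yielding $\Mtn \ge (7N - 2L)/3$. Rounding up to respect integrality of the $S_\Ac$ gives the $\lceil \cdot \rceil$ in~(\ref{eq:3bound}).

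The main obstacle I expect is the rigorous justification of $L_{\min}(a,b,c) \ge 2a + b/2$ in the heterogeneous setting, since the original CDC converse is stated under a symmetric budget assumption. The plan is to express the total budget as $L \ge \sum_k K\,\Hen(x_k)/(QB)$ via~(\ref{eq:comc2}) and then derive per-node entropy lower bounds by iteratively conditioning on the intermediate values already decodable from a subset of the transmissions; summing these bounds recovers $2a + b/2$ independently of how the three directed shuffling streams are allocated to the three nodes. A secondary obstacle is the integrality step under the hypothesis $L_k/2 \in \mathbb{N}$: the placement $(S_\Ac)$ must be chosen so that each XOR pair and each uncoded transmission fits cleanly inside the per-node budget, likely requiring a short case analysis on which pair of $L_k$ is most restrictive.
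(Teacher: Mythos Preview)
Your approach is essentially the same as the paper's, with two points worth noting. On the converse side, the paper does not re-derive the inequality $L\ge 2a+b/2$; it simply cites \cite[Lemma~1]{LMYA:17} (equivalently \cite[Theorem~1]{KWA:17}) for the bound $\Mtn\ge\lceil(7N-2L)/3\rceil$, and obtains the other two terms from the cut-set Lemma~\ref{lm:converseg} with $|\Sc|=1$ and $|\Sc|=3$. Your plan to prove the CDC lower bound directly is more self-contained but not needed here.

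On the achievability side, your three-regime construction coincides with the paper's parameters $(\gamma_{c,1},\gamma_{c,2},\gamma_{c,3})$ exactly when $(7N-2L)/3$ is an integer, but the paper handles the general case differently from what you sketch. Rather than a case analysis on which $L_k$ is most restrictive, the paper introduces an additional \emph{redundancy strategy} (parameter $\gamma_{r,2}$): a few files are placed at two nodes and their single missing intermediate value is sent \emph{uncoded}, costing $1$ unit per file instead of $1/2$. This absorbs the slack between $(7N-2L)/3$ and its ceiling while keeping all $\gamma$'s integral, and a greedy file-assignment procedure (Algorithm~\ref{generalAlg}) then packs the resulting per-node loads into the individual budgets $L_k$ under the hypothesis $L_k/2\in\mathbb{N}$. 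So what you flag as a secondary obstacle is in fact where most of the achievability work lies.
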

\vspace{3pt}

The achievability and the converse proofs of this theorem are presented in Section~\ref{sec:achiK} and Section~\ref{sec:ThmTK3}, respectively.
Next, we will focus on the minimum worst-case computation load of a three-node distributed computing system. We will provide some converse bounds at first and then discuss some optimal cases.

\begin{lemma} [Worst, $K=3$]  \label{lm:converseMwt}
For a  three-node distributed computing system defined in Section~\ref{sec:system},  the minimum worst-case computation load  is lower bounded by
\begin{align*}
\Mwt \geq  \max \Bigg\{& \ceil*{\frac{N}{3}},   \   N - \min_{i\neq j}\{{L}_i + {L}_j\},     \ceil*{\frac{N}{2}-\frac{\min_k\{L_k\}}{4}}, \ceil*{\frac{\ceil*{\frac{7N-2L}{3}}}{3}}  \Bigg\}.        
\end{align*}
\end{lemma}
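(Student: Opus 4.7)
The plan is to prove each of the four lower bounds on $\Mwt$ separately, so that their maximum yields the claim. Bounds~1, 2 and~4 follow from essentially standard arguments, while Bound~3 requires a refined entropy estimate on the files cached at only a single node.

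For Bound~1, every file must be mapped at some node (otherwise its intermediate values are never generated), so $\bigcup_k \Mc_k = [1{:}N]$ gives $\sum_{k=1}^{3} M_k \geq N$ and hence $\Mwt \geq \lceil N/3\rceil$ by pigeonhole. For Bound~2, I would apply a standard cut-set argument: for each $k$ with $\{i,j\}=\{1,2,3\}\setminus\{k\}$, the block $a_{\Wc_k,[1{:}N]\setminus\Mc_k}$ of $(Q/3)(N-M_k)$ i.i.d.\ $B$-bit symbols must be recoverable at node~$k$ from $(x_i,x_j)$ together with the local map outputs (which are independent of this block), so
\[
\tfrac{Q}{3}(N-M_k)B\ \leq\ H(x_i,x_j)\ \leq\ H(x_i)+H(x_j)\ \leq\ (L_i+L_j)\tfrac{QB}{3},
\]
yielding $M_k\geq N-L_i-L_j$; maximizing over $k$ gives the bound. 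For Bound~4, I would invoke the converse part of Theorem~\ref{thm:K3}, which gives $\Mtn\geq \lceil (7N-2L)/3\rceil$, and combine it with the elementary $\Mwt \geq \lceil \Mtn/3\rceil$ coming from $\sum_k M_k \leq 3\Mwt$.

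The heart of the proof is Bound~3. Set $\Sc_{\{k\}} := \Mc_k\setminus(\Mc_i\cup\Mc_j)$, the files cached only at node~$k$, with cardinality $S_{\{k\}}$. The key sub-inequality is
\[
S_{\{k\}}\ \leq\ L_k/2\qquad \text{for every } k\in\{1,2,3\}.
\]
To prove it, consider the block $T_k:=\{a_{q,n}:q\in\Wc_k^c,\ n\in\Sc_{\{k\}}\}$ of $(2Q/3)S_{\{k\}}$ independent $B$-bit symbols; these are required at both nodes $i$ and $j$ but are stored nowhere outside node~$k$, so they must be decodable from the broadcast $x_k$ together with the side information $Z=(x_i,x_j,a_{[1{:}Q],\Mc_i},a_{[1{:}Q],\Mc_j})$. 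Because $T_k$ is built from intermediate values whose file indices lie in $\Sc_{\{k\}}\subset \Mc_k$ and are disjoint from the file supports of $Z$, the i.i.d.\ assumption gives $Z\perp T_k$; combining with $H(T_k\,|\,x_k,Z)=0$ we obtain
\[
L_k\tfrac{QB}{3}\ \geq\ H(x_k)\ \geq\ H(x_k\,|\,Z)\ \geq\ I(x_k;T_k\,|\,Z)\ =\ H(T_k\,|\,Z)\ =\ H(T_k)\ =\ \tfrac{2Q}{3}S_{\{k\}}B,
\]
which gives $S_{\{k\}}\leq L_k/2$. The concluding step uses the inclusion-exclusion identity
\[
M_i+M_j\ =\ N-S_{\{k\}}+S_{\{i,j\}}+S_{\{1,2,3\}}\ \geq\ N-S_{\{k\}}\ \geq\ N-L_k/2,
\]
so $\Mwt\geq(M_i+M_j)/2\geq N/2-L_k/4$; choosing $k=\arg\min_\ell L_\ell$ and rounding up to the nearest integer yields Bound~3.

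The main obstacle will be establishing the sub-inequality $S_{\{k\}}\leq L_k/2$ cleanly. The delicate point is that $T_k$ can in principle be recovered at nodes~$i$ and~$j$ using \emph{all} available signals rather than $x_k$ alone, yet because the file supports of $Z$ are disjoint from $\Sc_{\{k\}}$, every other signal is independent of $T_k$, so all of its $H(T_k)$ bits are forced to live inside $H(x_k)$. Making this precise through the chain $H(x_k)\geq H(x_k|Z)\geq I(x_k;T_k|Z)$ with $T_k\perp Z$ is the crux; once it is in hand, Bound~3 drops out from inclusion-exclusion, and the remaining three bounds reduce to elementary pigeonhole/cut-set counting or to the converse of Theorem~\ref{thm:K3}.
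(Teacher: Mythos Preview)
Your proposal is correct. Bounds~1, 2 and~4 coincide with the paper's arguments: the paper packages them into its general cut-set Lemma~\ref{lm:converseg} (applied with $|\Sc|=3$ and $|\Sc|=1$ respectively) together with the external bound $\Mtn\geq\lceil(7N-2L)/3\rceil$ from~\cite{LMYA:17,KWA:17}.

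For Bound~3 you take a somewhat different route. The paper simply applies Lemma~\ref{lm:converseg} with $\Sc=\{i,j\}$ to obtain $2(M_i+M_j)\geq 2N-L_k$ in one stroke, whereas you first isolate the sharper sub-inequality $S_{\{k\}}\leq L_k/2$ via a dedicated entropy argument on the block $T_k$, and then combine it with inclusion--exclusion to get $M_i+M_j\geq N-S_{\{k\}}\geq N-L_k/2$. Both are cut-set-type entropy bounds arriving at the same inequality; your version makes explicit the role of the exclusively-stored files $\Sc_{\{k\}}$ (and indeed $S_{\{k\}}\leq L_k/2$ is precisely what the paper's cut-set proof yields \emph{before} it weakens $|\Mc_i\cup\Mc_j|$ to $M_i+M_j$ in~\eqref{eq:ML}), while the paper's unified lemma is more economical and reusable across all four bounds.
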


The proof is provided in Section~\ref{sec:lemWK3}. 
In this proof we use the ``cut-set'' technique and the other existing technique.
The following lemma provides a novel bound that is derived from the proof by contradiction. 

\begin{lemma} [Proof by Contradiction] \label{prop:convbound5}  
For a  three-node  distributed computing system, 
the minimum worst-case computation load  is lower bounded by 
\[\Mwt \geq \beta^*,\] 
where $\beta^* $ is defined by the following optimization problem
\begin{align} 
\beta^* \!=\! \max  \  & \beta, \non \\
  \text{s.t.}   \   & \beta \leq \ceil[\bigg]{\ceil[\Big]{ \bigl(7N-2 \cdot \sum_{k=1}^3   \min\{L_k, \  2(\beta - 1)\} \bigr)\Big/  3}\Big/3},  \non \\
   & \beta \in \mathbb{N}.  \non
\end{align}
\end{lemma}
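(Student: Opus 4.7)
The plan is to prove $\Mwt \geq \beta^{\ast}$ by a \emph{proof by contradiction}: fix any $\beta \in \mathbb{N}$ that satisfies the feasibility inequality of the optimization problem, and assume toward a contradiction that $\Mwt \leq \beta-1$. Under this hypothesis, every node satisfies $|\mathcal{M}_k| \leq \beta - 1$. The goal is then to derive a contradiction by combining a sharpened form of the communication bound with the total-load converse already proved in Theorem~\ref{thm:K3}, applied to a \emph{reduced} system whose per-node communication parameters shrink automatically under the hypothesis.

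The first step sharpens the per-node communication constraint. By \eqref{eq:xkdef}, $x_k$ is a deterministic function of $a_{\mathcal{W}_k^c,\mathcal{M}_k}$, which is a collection of $(Q - Q/K)|\mathcal{M}_k| = 2Q|\mathcal{M}_k|/3$ i.i.d.\ intermediate values, each of entropy $B$. Hence $\mathrm{H}(x_k) \leq 2Q|\mathcal{M}_k|B/3 \leq 2Q(\beta-1)B/3$. Combining with the original communication constraint \eqref{eq:comc} yields the effective constraint
\begin{equation*}
\mathrm{H}(x_k) \leq L_k' \cdot QB/3, \qquad L_k' \triangleq \min\{L_k,\, 2(\beta-1)\}, \quad k \in \{1,2,3\}.
\end{equation*}
In other words, once $\Mwt \leq \beta - 1$ is assumed, the shuffling is provably no richer than that of a system with the (possibly smaller) parameters $L_k'$, whose sum is $L'(\beta) \triangleq \sum_{k=1}^{3} L_k'$.

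The second step invokes Theorem~\ref{thm:K3} on the induced system with parameters $L_k'$ (and the same $N$). Using the middle term of \eqref{eq:3bound}, one obtains
\begin{equation*}
\Mt \geq \ceil[\big]{(7N - 2L'(\beta))/3}.
\end{equation*}
On the other hand, the contradiction hypothesis immediately gives $\Mt = \sum_{k=1}^{3}|\mathcal{M}_k| \leq 3(\beta-1) < 3\beta$. Hence $\ceil{(7N - 2L'(\beta))/3} < 3\beta$, and dividing by $3$ and taking ceilings,
\begin{equation*}
\ceil[\big]{\ceil{(7N - 2L'(\beta))/3}/3} \leq \beta - 1,
\end{equation*}
which contradicts the feasibility premise $\beta \leq \ceil{\ceil{(7N-2L'(\beta))/3}/3}$. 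Since this holds for every feasible $\beta$, in particular for $\beta = \beta^{\ast}$, we conclude $\Mwt \geq \beta^{\ast}$.

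The key conceptual obstacle is the circular dependence of the bound on $\beta$: the tightened parameters $L_k'$ themselves involve $\beta$ through the hypothesized per-node load ceiling. This circularity is exactly what prevents a direct converse argument and forces the contradiction framing, and it explains why the resulting bound is strictly stronger than the cut-set based bounds of Lemma~\ref{lm:converseMwt} (which correspond to the trivial choice $L_k' = L_k$). A secondary technical point is ensuring that the hypotheses used to invoke Theorem~\ref{thm:K3} remain valid for the reduced system: since $2(\beta-1)$ is automatically even and $L_k$ is even under the regime where Theorem~\ref{thm:K3} applies, the reduced values $L_k'$ retain the required $L_k'/2 \in \mathbb{N}$ property.
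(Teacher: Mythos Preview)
Your proof is correct and follows essentially the same route as the paper's: assume $\Mwt \le \beta - 1$, tighten each $L_k$ to $L_k' = \min\{L_k, 2(\beta-1)\}$ via the trivial bound $\Hen(x_k) \le |\mathcal{W}_k^c|\,|\mathcal{M}_k|\,B$, and then apply the total-load converse $\Mtn \ge \lceil (7N-2L')/3 \rceil$ (equivalently its worst-case corollary~\eqref{eq:worst3c343}) to force the contradiction. One small remark: your closing concern about preserving $L_k'/2 \in \mathbb{N}$ is unnecessary, because only the \emph{converse} ingredient of Theorem~\ref{thm:K3} (bound~\eqref{eq:otherboundproof}, from \cite{LMYA:17}) is invoked here, and that bound carries no parity hypothesis on the $L_k$.
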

\vspace{3pt}

The bound in Lemma~\ref{prop:convbound5} is proved in Section~\ref{sec:lemWK3Con}.
In some cases, this bound is strictly better than all the bounds in Lemma~\ref{lm:converseMwt}. 
Let us focus on one example defined by the parameters $(K=Q=3,  N=7, L_1=L_2=2, L_3 =14)$. For this example,  proof by contradiction produces a novel bound $\Mwt \geq 4$ (Lemma~\ref{prop:convbound5}), which is strictly tighter than  all the bounds  in Lemma~\ref{lm:converseMwt} ($\Mwt \geq 3$).

Let us now  provide some  cases in which we have the optimal characterization of the worst-case computation load. 
To discuss the optimality of the converse bounds, we define three conditions as follows:
\begin{align} 
\text{Condition~1:} &\quad  \min_{k} \{{L}_k\} \geq 2\ceil*{\frac{N}{3}},   \label{eq:cond1} \\
\text{Condition~2:} & \quad   {L} \leq \frac{N}{2},    \label{eq:cond2} \\
\text{Condition~3:} & \quad   2 \leq \min_k\{{L}_k\} \leq \frac{2N}{3}  \  \text{and}  \ \frac{3N -  \min_{i\neq j}\{{L}_i + {L}_j\} }{5}  \leq \!  \ceil*{\frac{N}{2} - \frac{\min_k\{L_k\}}{4}} \! \leq \frac{\max_k\{{L}_k\}}{2}.     \label{eq:cond3}
\end{align}

The results are shown in the following Proposition~\ref{prop:K3optimal}. 

\begin{proposition} [Worst, $K=3$] \label{prop:K3optimal}
For a  three-node distributed computing system, under each condition in \eqref{eq:cond1}, \eqref{eq:cond2} and \eqref{eq:cond3},  the minimum worst-case computation load is respectively characterized by
\begin{align} 
\text{for Condition~1:} \quad \quad \Mwt &= \ceil*{\frac{N}{3}},    \label{eq:K3worst1} \\ 
\text{for Condition~2:} \quad \quad   \Mwt &= N - \min\{{L}_1 + {L}_2, {L}_2+{L}_3, {L}_1 + {L}_3\}, \label{eq:K3worst2}\\
\text{for Condition~3:} \quad \quad   \Mwt &= \ceil*{\frac{N}{2} - \frac{\min_k\{L_k\}}{4}}.   \label{eq:K3worst3} 
\end{align}

\end{proposition}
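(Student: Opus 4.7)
The strategy is to match, for each of the three cases, the corresponding lower bound in Lemma~\ref{lm:converseMwt} with an explicit achievability scheme; the converse is then immediate. For Condition~$1$, where $\min_k L_k \ge 2\lceil N/3 \rceil$, I have in mind a simple partition: split $[1:N]$ into three disjoint groups of sizes at most $\lceil N/3 \rceil$ and assign group $k$ exclusively to Node~$k$. In the Shuffle phase, every intermediate value $a_{q,n}$ with $n \in \Mc_k$ and $q \notin \Wc_k$ is sent uncoded to the node that needs it, giving each node an outgoing load of $2\lceil N/3 \rceil$ in units of $QB/K$---exactly what Condition~$1$ permits. Matching $\Mwt \ge \lceil N/3 \rceil$ yields \eqref{eq:K3worst1}.

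For Condition~$2$ ($L \le N/2$), I would target $|\Mc_k| = N - \sum_{j\neq k} L_j$, WLOG with $L_1 \le L_2 \le L_3$. When the triangle-like inequality $L_i + L_j \ge L_k$ holds for every permutation, the placement $S_{\{1,2,3\}} = N - 2L$ and, for each pair $\{i,j\}$ with complement $\{k\}$, $S_{\{i,j\}} = L_i + L_j - L_k$, produces the desired marginals; CDC-style XOR codes across the three pair-blocks then let each transmission from a node serve both other nodes simultaneously, exactly exhausting each $L_k$. In the degenerate regime $L_3 > L_1 + L_2$ (where some $S_{\{i,j\}}$ would be negative), I would absorb the excess by over-replicating at Node~$3$ and using some uncoded sends, preserving $\Mwt = N - L_1 - L_2$. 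Matching the bound $N - \min_{i\neq j}(L_i + L_j)$ gives \eqref{eq:K3worst2}.

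Condition~$3$ is the most delicate and presents the main obstacle. With WLOG $L_1 = \min_k L_k$, the target $\lceil N/2 - L_1/4 \rceil$ suggests a placement in which roughly half the files sit in each of Nodes~$2$ and $3$, with carefully chosen overlaps at Node~$1$ yielding $L_1/4$ units of coding gain on each of Nodes~$2$ and $3$. I would choose the $S_\Ac$ values so that $|\Mc_2| = |\Mc_3| = \lceil N/2 - L_1/4 \rceil$ and $|\Mc_1| \le \lceil N/2 - L_1/4 \rceil$, then use CDC-style XORs over paired blocks to shrink the outgoing loads of Nodes~$2$ and $3$ to within $L_2, L_3$. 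The two inequalities comprising Condition~$3$ are precisely what is needed for feasibility: $\min_k L_k \le 2N/3$ keeps $|\Mc_1|$ from exceeding $\Mwt$, the upper inequality $\lceil N/2 - L_1/4 \rceil \le \max_k L_k / 2$ certifies that the largest-budget node can carry its heaviest outgoing traffic, and the lower inequality $(3N - \min_{i\neq j}(L_i+L_j))/5 \le \lceil N/2 - L_1/4 \rceil$ rules out a strictly smaller target and ensures consistency with Lemma~\ref{lm:converseMwt}. The hard part, in both Conditions~$2$ and $3$, is routing the XOR-coded transmissions so that every broadcast is simultaneously useful to both intended recipients, and handling integer rounding together with the corner regimes where the per-node budgets are highly imbalanced; the converse directions, by contrast, require nothing beyond Lemma~\ref{lm:converseMwt}.
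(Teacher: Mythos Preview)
Your plan for Condition~1 is exactly the paper's: a disjoint three-way split with uncoded sends, matched against the bound $\Mwt\ge\lceil N/3\rceil$ from Lemma~\ref{lm:converseMwt}.

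For Condition~2 your general shape is right but the arithmetic is off. With $S_{\{i,j\}}=L_i+L_j-L_k$ and $S_{\{1,2,3\}}=N-2L$ the total number of files placed is
\[
(N-2L)+\sum_{\{i,j\}}(L_i+L_j-L_k)=N-2L+L=N-L,
\]
so $L$ files go unassigned. The correct choice (and the one the paper makes) is $S_{\{i,j\}}=L_i+L_j$, which both sums to $N$ and gives $|\Mc_k|=N-\sum_{j\neq k}L_j$ directly; since $L_i+L_j\ge 0$ automatically, no triangle-inequality case split or ``degenerate regime'' is needed. Operationally, Node~$k$ takes its $L_k$ files in $\Sc_{\{k,i\}}$ and its $L_k$ files in $\Sc_{\{k,j\}}$, pairs them up, and sends $L_k$ XORs---exactly exhausting $L_k$.

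For Condition~3 your sketch diverges from the paper in two substantive ways. First, the paper's achievability uses \emph{no coding at all}: with $L_1\le L_2\le L_3$ it sets
\[
S_1=S_2=N-2M,\quad S_3=M,\quad S_{12}^{r_1}=L_1-2N+4M,\quad S_{12}^{r_2}=N-L_1-M,
\]
where $M=\lceil N/2-L_1/4\rceil$, and all other parameters (in particular every $S_\Ac^{c_j}$) vanish. So the overlap is between Nodes~1 and 2, not at Node~1 across Nodes~2 and 3 as you imagine, and Node~3 holds $M$ exclusive files with no redundancy. Second, your reading of the inequalities in \eqref{eq:cond3} is partly wrong: the upper inequality $M\le L_3/2$ is indeed Node~3's budget ($2S_3\le L_3$), but the lower inequality $(3N-L_1-L_2)/5\le M$ is \emph{not} about converse consistency---it is precisely Node~2's communication feasibility, since $2S_2+S_{12}^{r_2}=3N-L_1-5M$ must be at most $L_2$. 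The constraint $L_1\le 2N/3$ ensures $S_1,S_2,S_{12}^{r_2}\ge 0$, and $L_1\ge 2$ together with the ceiling ensures $S_{12}^{r_1}\ge 0$. Your coding-based picture might be salvageable, but as written it does not give a concrete placement, and your account of what each condition is for would not guide you to one.
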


The achievability of Proposition~\ref{prop:K3optimal} is shown in Section~\ref{sec:achiMwt3} and the  converse is directly  from Lemma~\ref{lm:converseMwt}.
Fig.~\ref{fig:M1} depicts the performance of  $\Mwt$ vs. $L_1$ for the  case with $(K=Q=3, N=14, L_1/2=L_2/2=L_3/4 \in  \mathbb{N}^+ )$. 
In this case there is a tradeoff between the minimum worst-case computation load $\Mwt$  and the total communication load constraint $L$ ($L=4L_1$ in this case).
The characterization of $\Mwt$  in Fig.~\ref{fig:M1} stems from the above lemmas and proposition. Specifically, for the point of $( L_1=L_2=L_3/2 = 8; \Mwt = 6)$, the converse follows from  the result of Lemma~\ref{prop:convbound5} that is derived from the proof by contradiction.

\begin{figure}
\begin{center}
\begin{tikzpicture}[scale = 1]
    \begin{axis}[
        xlabel={$L_1$}, ylabel={$\Mwt$}, 
        xmin=0, xmax=22,
	    ymin=3, ymax=12,
    	xtick={0,2,4,6,8,10,12,14,16,18,20,22},
	    ytick={3,4,5,6,7,8,9,10,11,12},
    	legend pos=north east,
    ]
        \addplot table{\mytable};
        \legend{$N=14$ and $L_1=L_2=L_3/2$ }
   \end{axis}
\end{tikzpicture}
\vspace{-8pt}
  \caption{$\Mwt$ vs. $L_1$ for the case with $(K=Q=3, N=14, \frac{L_1}{2}=\frac{L_2}{2}=\frac{L_3}{4} \in  \mathbb{N}^+ )$. } \label{fig:M1}
\end{center}
\end{figure}
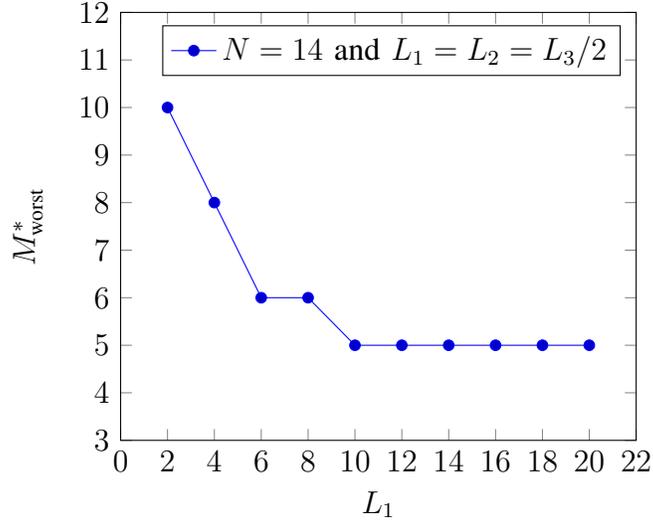

\subsection{The general case with $K \geq 3$}

Let us now provide the results of the distributed computing system,  for the case with $K\geq 3$. 
For this general case, we identify two extreme regimes in which both the scheme with coding  and the scheme without coding are optimal, respectively.

\begin{proposition}[coding, general $K$] \label{prop:KsmallL}
For a $K$-node distributed computing system, if the following condition is satisfied
 \begin{align} 
{L}\leq \frac{N}{K-1} ,    \label{eq:condG1}
\end{align}
 then the minimum total computation load and the minimum worst-case computation load are characterized, respectively, as
 \begin{align} 
\Mt &= KN - (K-1){L},  \\
\Mwt &= N - {L} + \max_{k} \{{L}_k\} .   
\end{align}
\end{proposition}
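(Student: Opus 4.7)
My plan is to prove the two equalities in parallel, by giving a single file placement plus coded shuffling scheme that attains both claimed values simultaneously, and by deriving both matching lower bounds from a per-node cut-set argument.

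\textbf{Converse via cut-set.} Fix any feasible scheme. For each $k\in[1:K]$, node~$k$ must recover the block $A_k \defeq \{a_{q,n}:q\in\mathcal{W}_k,\,n\in[1:N]\setminus\Mc_k\}$. Since all intermediate values are i.i.d.\ and independent of the files node~$k$ stores, the conditional entropy $\Hen(A_k\mid\{w_n\}_{n\in\Mc_k})$ equals $(N-M_k)\cdot(Q/K)\cdot B$. On the other hand, $A_k$ is a deterministic function of that local cache together with the received shuffle messages $\{x_j\}_{j\neq k}$, so this conditional entropy is at most $\sum_{j\neq k}\Hen(x_j)\le(L-L_k)\cdot QB/K$ by \eqref{eq:comc}. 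Comparing yields $M_k\ge N-L+L_k$ for every $k$; summing over $k$ gives $\Mt\ge KN-(K-1)L$ and taking the maximum gives $\Mwt\ge N-L+\max_k L_k$.

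\textbf{Achievability, file placement.} I would place the files in the pattern suggested by the cut-set bound: for each $k$, put $S_{-k}\defeq L-L_k$ files at every node except $k$, and place the remaining $S_0\defeq N-(K-1)L$ files at all $K$ nodes. The hypothesis $L\le N/(K-1)$ is exactly what makes $S_0\ge 0$. The resulting load is $M_k=N-S_{-k}=N-L+L_k$, which meets the per-node lower bound with equality at every~$k$, delivering $\Mt=KN-(K-1)L$ and $\Mwt=N-L+\max_k L_k$ simultaneously, provided a valid shuffle exists.

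\textbf{Achievability, coded shuffle.} For the shuffle I would make each node~$k$ the single coded multicaster for its complementary subset $\Bc_k\defeq[1:K]\setminus\{k\}$: node~$k$ broadcasts $T_k\defeq L_k Q/K$ symbols, each of the form $\bigoplus_{j\in\Bc_k} a_{q_j,n_j}$, where $q_j\in\mathcal{W}_j$ and $n_j$ ranges over the $S_{-j}$ files placed at every node except $j$. Such a file $n_j$ sits in the cache of every node in $\Bc_k$ other than $j$, so a receiver $i\in\Bc_k$ knows $K-2$ of the summands, cancels them, and recovers its intended value $a_{q_i,n_i}$. Node~$k$'s outgoing load is $T_k\cdot B=L_k\cdot QB/K$, saturating \eqref{eq:comc}.

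The only nontrivial step is checking that the heterogeneous budgets $\{L_k\}$ and the heterogeneous demand counts $\{(Q/K)S_{-j}\}$ are compatible, i.e., that the tuples $(q_j,n_j)$ can be scheduled across the $T_k$ XORs of the various senders $k\neq j$ so that every $(q,n)\in\mathcal{W}_j\times\Sc_{[1:K]\setminus\{j\}}$ appears in exactly one transmission. A counting check shows $\sum_{k\neq j}T_k=(L-L_j)Q/K=(Q/K)S_{-j}$, which is precisely the demand size at $j$, so any bijection between demanded pairs and available slots suffices; the definition $S_{-k}=L-L_k$ is engineered to make this balance exact. Once this bookkeeping is in place, both equalities follow from the equality of load achieved by the scheme and the cut-set converse.
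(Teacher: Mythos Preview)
Your proposal is correct and follows essentially the same approach as the paper: the converse is the single-node cut-set bound (the paper's Lemma~4 with $|\Sc|=1$), and your file placement and $(K-1)$-order XOR shuffle coincide with the paper's specialization of its general scheme to the parameters $S_{\Ac}^{c_j}=L_j$ for $|\Ac|=K-1$, $S_{1\cdots K}=N-(K-1)L$, and all other $S_{\Ac}^{c_j}=0$. The only cosmetic difference is that the paper pre-partitions the $S_{-j}=L-L_j$ files missing at node~$j$ into blocks of size $L_k$ (one per sender $k\neq j$), whereas you invoke a counting argument and an arbitrary bijection; both are valid and yield the same scheme.
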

The results of Proposition~\ref{prop:KsmallL}  are achieved by a scheme \emph{with coding} (bitwise XOR operation of intermediate values) described in Section~\ref{subsec:coding}.  
The converse is described in Section~\ref{sec:converse2}.

\begin{proposition}[no coding, general $K$] \label{prop:KlargeL}
For a distributed computing system consisting of $K$ nodes, if the following condition is satisfied
 \begin{align} 
\min_k \{{L}_k\} \geq (K-1) \cdot \ceil*{\frac{N}{K}}, \label{eq:condG2} 
\end{align}
 then the minimum total computation load and the minimum worst-case computation load are characterized, respectively, as
 \begin{align}
\Mt &= N,  \\ 
\Mwt &= \ceil*{\frac{N}{K}}. 
\end{align}
\end{proposition}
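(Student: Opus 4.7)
The plan is to prove Proposition~\ref{prop:KlargeL} by exhibiting a single uncoded scheme that simultaneously achieves $\Mtn=N$ and $\Mwtn=\lceil N/K \rceil$ under condition \eqref{eq:condG2}, and then to observe that both converse bounds are immediate from a covering/pigeonhole argument that holds in any regime.

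For the achievability, I will use a disjoint file assignment with no coding. First I would partition $[1:N]$ into $K$ disjoint sets $\Mc_1,\Mc_2,\ldots,\Mc_K$ with $\cup_{k}\Mc_k=[1:N]$ and $|\Mc_k|\leq \lceil N/K \rceil$ for every $k$; this is possible whenever $N\geq K$. This already gives $\sum_{k}|\Mc_k|=N$ and $\max_k |\Mc_k|=\lceil N/K\rceil$. In the Map phase, Node~$k$ computes the intermediate values $\{a_{q,n}:q\in[1:Q],\,n\in\Mc_k\}$. In the Shuffle phase, Node~$k$ transmits, uncoded, the set $\{a_{q,n}:q\in[1:Q]\setminus\Wc_k,\,n\in\Mc_k\}$, which contains $|\Mc_k|\cdot Q(K-1)/K$ symbols from $\F_2^B$. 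Hence
\begin{equation*}
\Hen(x_k)\leq |\Mc_k|\cdot \frac{Q(K-1)B}{K}.
\end{equation*}
Dividing by $QB/K$ gives $K\Hen(x_k)/(QB)\leq (K-1)|\Mc_k|\leq (K-1)\lceil N/K\rceil \leq L_k$, where the last inequality is precisely hypothesis \eqref{eq:condG2}. Thus constraint \eqref{eq:comc2} is met at every node. After Shuffle, for each $q\in\Wc_k$ and each $n\in[1:N]$, either $n\in\Mc_k$ (so Node~$k$ already has $a_{q,n}$) or $n\in\Mc_j$ for some $j\neq k$ (so Node~$k$ receives $a_{q,n}$ from Node~$j$'s transmission). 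Therefore Node~$k$ can compute every $b_q$ for $q\in\Wc_k$ in the Reduce phase, completing the achievability.

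For the converse, both inequalities hold without invoking condition \eqref{eq:condG2}. Each output $b_q=\varphi_q(a_{q,1},\ldots,a_{q,N})$ depends on every $a_{q,n}$, and $a_{q,n}$ can only be generated by a node whose assigned set contains file $n$; hence $\cup_{k=1}^K \Mc_k=[1:N]$, which gives $\Mtn=\sum_k|\Mc_k|\geq N$ and, by pigeonhole, $\Mwtn=\max_k|\Mc_k|\geq \lceil N/K\rceil$. Combining with the achievable values above yields $\Mt=N$ and $\Mwt=\lceil N/K\rceil$ under \eqref{eq:condG2}. The only non-trivial step is the communication-constraint check in Shuffle, and that reduces cleanly to the stated hypothesis, so I do not expect a real obstacle beyond bookkeeping of symbol counts.
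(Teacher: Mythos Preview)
Your proposal is correct and follows essentially the same approach as the paper: a balanced disjoint partition of the $N$ files among the $K$ nodes with uncoded shuffling, together with the covering bound $\cup_k \Mc_k=[1:N]$ (which the paper obtains as the $\Sc=[1:K]$ case of its cut-set Lemma~\ref{lm:converseg}). The only cosmetic difference is that the paper embeds this scheme in its general parametrized framework (Sections~\ref{subsec:nocoding} and~\ref{case:2}), whereas you describe the construction directly.
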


The results of Proposition~\ref{prop:KlargeL}  are achieved by a scheme \emph{without coding} described in Section~\ref{subsec:nocoding}. 
The converse is described in Section~\ref{sec:converse2}.

\subsection{Tradeoff between $\Mt$ and $\Mwt$}

In a distributed computing system, due to the relation with total resource consumption and overall latency, one might want to minimize  both the total  computation load  and the worst-case computation load  as much as possible.  At this point, one interesting question is raised: can we achieve  $\Mt$ and $\Mwt$  at the same time? 

For  a $K$-node distributed computing system,  we propose a scheme that indeed can  achieve   $\Mt$ and $\Mwt$  at the same time (see Section~\ref{sec:achiKsame}).  Given the condition in \eqref{eq:condG1} or the condition in \eqref{eq:condG2} for some $K$,   $\Mt$ and $\Mwt$  can be achieved at the same time (see Propositions~\ref{prop:KsmallL} and  \ref{prop:KlargeL}).  
However, for some cases it is not always true --- the following theorem reveals an instance in which  $\Mt$ and $\Mwt$  cannot be achieved at the same time.

\begin{theorem}[$\Mt$ vs. $\Mwt$]\label{thm:tradeoff}
For the case with a three-node distributed computing system, if the following condition is satisfied
\begin{align} \label{eq:tradeoffeq}
\Mt & <  3N- 2\Mwt - \min_{i\neq j}\{L_i + L_j\}, 
\end{align}
then  $\Mt$ and $\Mwt$  cannot be achieved at the same time.
\end{theorem}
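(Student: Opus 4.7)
The plan is to proceed by contradiction. Assume there is a MapReduce scheme $\mathcal{S}$ whose file assignment $(\mathcal{M}_1,\mathcal{M}_2,\mathcal{M}_3)$ simultaneously satisfies $\sum_k |\mathcal{M}_k| = \Mt$ and $\max_k |\mathcal{M}_k| = \Mwt$. Writing $m_k := |\mathcal{M}_k|$, we have $m_k \le \Mwt$ for every $k$ and $\sum_k m_k = \Mt$. The driving idea is that the bound $\max_k m_k \le \Mwt$ effectively restricts the communication load each node can use, which lets us apply the converse of Theorem~\ref{thm:K3} in a sharpened regime to lower bound $\Mt$ beyond what the hypothesis allows.

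The first step is to sharpen each node's communication constraint. Since the shuffle message $x_k$ is a deterministic function of the intermediate values $a_{\mathcal{W}_k^c,\mathcal{M}_k}$, which carry at most $|\mathcal{W}_k^c|\cdot m_k\cdot B = (2Q/3)\, m_k B$ bits, combining with the constraint $H(x_k) \le L_k QB/3$ yields
\[
H(x_k) \le \tilde{L}_k \cdot QB/3,\qquad \tilde{L}_k := \min\{L_k,\, 2 m_k\} \le \min\{L_k,\, 2\Mwt\}.
\]
Hence $\mathcal{S}$ remains valid under the effective communication-load parameters $\{\tilde{L}_k\}_{k=1}^3$. Fix $k^\star \in \arg\max_k L_k$, so that $L - L_{k^\star} = \min_{i\ne j}\{L_i+L_j\}$. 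Bounding $\tilde{L}_{k^\star} \le 2\Mwt$ and $\tilde{L}_i \le L_i$ for $i\ne k^\star$ then gives
\[
\tilde{L} \;:=\; \sum_k \tilde{L}_k \;\le\; \min_{i\ne j}\{L_i+L_j\} + 2\Mwt.
\]

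The second step is to invoke the converse bound of Theorem~\ref{thm:K3}, applied to $\mathcal{S}$ under the effective parameters $\{\tilde{L}_k\}$. This produces
\[
\Mt \;=\; \sum_k m_k \;\ge\; \Bigl\lceil\frac{7N - 2\tilde{L}}{3}\Bigr\rceil \;\ge\; \Bigl\lceil\frac{7N - 4\Mwt - 2\min_{i\ne j}\{L_i+L_j\}}{3}\Bigr\rceil.
\]
Combined with $\Mt \ge N$ from Theorem~\ref{thm:K3}, the hypothesis $\Mt < 3N - 2\Mwt - \min_{i\ne j}\{L_i+L_j\}$ forces $\min_{i\ne j}\{L_i+L_j\} + 2\Mwt < 2N$, so the derived lower bound is strictly greater than $N$ and, after the ceiling is taken, can be shown to strictly exceed the value of $\Mt$ allowed by the hypothesis, producing the sought contradiction.

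The hard part will be the final ceiling manipulation. Without the ceiling, the derived lower bound on $\Mt$ falls short of $3N - 2\Mwt - \min_{i\ne j}\{L_i+L_j\}$ by $(2N - 2\Mwt - \min_{i\ne j}\{L_i+L_j\})/3$, and the hypothesis guarantees this gap is strictly positive, so the ceiling alone only closes residual gaps of at most one. One will therefore need to exploit the integrality constraints $L_k/2\in\mathbb{N}$ and $m_k\in\mathbb{N}$ underlying Theorem~\ref{thm:K3}, together with a short case analysis on whether $L_{k^\star}\ge 2\Mwt$ (with the opposite regime already rendering the hypothesis vacuous); in borderline situations a recursive proof-by-contradiction technique analogous to Lemma~\ref{prop:convbound5} may be invoked to tighten the bound and complete the contradiction.
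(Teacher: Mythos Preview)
Your approach has a genuine gap that the final paragraph does not actually close. The bound you derive from the effective-load argument is
\[
\Mt \;\ge\; \Bigl\lceil \tfrac{7N - 4\Mwt - 2\min_{i\ne j}\{L_i+L_j\}}{3}\Bigr\rceil,
\]
but the hypothesis only gives $\Mt < 3N - 2\Mwt - \min_{i\ne j}\{L_i+L_j\}$, and the difference between these two thresholds is $(2N - 2\Mwt - \min_{i\ne j}\{L_i+L_j\})/3$, which is strictly positive whenever the hypothesis is nonvacuous and can be arbitrarily large (of order $N$). The ceiling recovers at most one unit, and the integrality conditions $L_k/2\in\mathbb{N}$, $m_k\in\mathbb{N}$ do not bridge a gap of this order. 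Invoking a recursive contradiction in the style of Lemma~\ref{prop:convbound5} also does not help: that technique iterates the \emph{same} $\lceil(7N-2L)/3\rceil$ bound with progressively tightened $L$, and here you have already tightened $L$ as far as the assumption $\max_k m_k\le\Mwt$ allows. In short, the Theorem~\ref{thm:K3} converse applied with effective loads is simply too coarse to yield the required inequality $\Mtn\ge 3N-2\Mwt-\min_{i\ne j}\{L_i+L_j\}$.

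The paper proceeds by a different and sharper entropy argument. Rather than treating the scheme as a black box satisfying a reduced total communication budget, it exploits the \emph{structure} of the file assignment through the quantities $S_j=|\Mc_j\setminus\bigcup_{i\ne j}\Mc_i|$. A key lemma (proved via a direct entropy chain) shows $L_i+L_j\ge N-M_k+S_i+S_j$ for each $k$ and $\{i,j\}=[1{:}3]\setminus\{k\}$; combined with the partition identity $S_i+S_j+S_{ij}+M_k=N$ this gives $S_{ij}\ge 2N-2M_k-(L_i+L_j)\ge 2N-2\Mwt-(L_i+L_j)$. Substituting into $\Mtn = N + S_{ij} + (S_{ik}+S_{jk}+2S_{123})\ge N+S_{ij}$ yields exactly $\Mtn\ge 3N-2\Mwt-(L_i+L_j)$ for every pair, hence $\Mtn\ge 3N-2\Mwt-\min_{i\ne j}\{L_i+L_j\}$, which is precisely the threshold in the hypothesis and gives an immediate contradiction with no residual gap. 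The missing ingredient in your plan is this structural lower bound on $\Hen(x_i,x_j\mid a_{:,\Mc_k},a_{\Wc_k,:})$ in terms of $S_i+S_j$; the black-box $\lceil(7N-2\tilde L)/3\rceil$ bound cannot see this information.
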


We prove in Section~\ref{sec:contradp2} that, when the condition in \eqref{eq:tradeoffeq} is satisfied,  $\Mt$ and $\Mwt$  cannot be achieved at the same time. 
For one example with  $(K=Q=3,  N=7, L_1=L_2=2, L_3 =14)$,  we have $\Mt = 7$ and $\Mwt = 4$, but we prove that these two cannot be achieved at the same time.

\section{Examples \label{sec:example}}

This section provides examples on the proposed achievability schemes for the minimum total computation load and the minimum worst-case computation load. The first example consists of two schemes where the first scheme is designed for the minimum total computation load and the second scheme is designed for the minimum worst-case computation load, based on same parameters ($L_k, N, Q, K$). For this example, we conclude later that both the minimum total computation load and minimum worst-case computation load cannot be achieved using the same scheme. In the other two examples, we look into the achievable schemes for the minimum total computation load under symmetric communication load constraints and under asymmetric communication load constraints, respectively, and we conclude an interesting insight later. In all the examples (with $Q=K$) described in this section, we consider the output function~$k$ to be computed at Node~$k$, for $k \in [1:K]$ without loss of generality (WLOG). 

\subsection{Example with ($L_1=L_2=2$, $L_3=14$, $N=7$, $Q=K=3$)} \label{sec:exmp31}
Let us consider the example with ($L_1=L_2=2$, $L_3=14$, $N=7$, $Q=K=3$) to explain the two different schemes to achieve the minimum total computation load and the minimum worst-case computation load, respectively. First, we look into the scheme for the minimum total computation load and then, investigate the scheme for the minimum worst-case computation load.

\vspace{5pt}
\subsubsection{Scheme for the minimum total computation load} \label{subsec:totalscheme1}

The scheme design can be explained using different phases of MapReduce framework. The details of each phase are described below.
\begin{figure}
\centering
\includegraphics[scale=1.25, width=15cm]{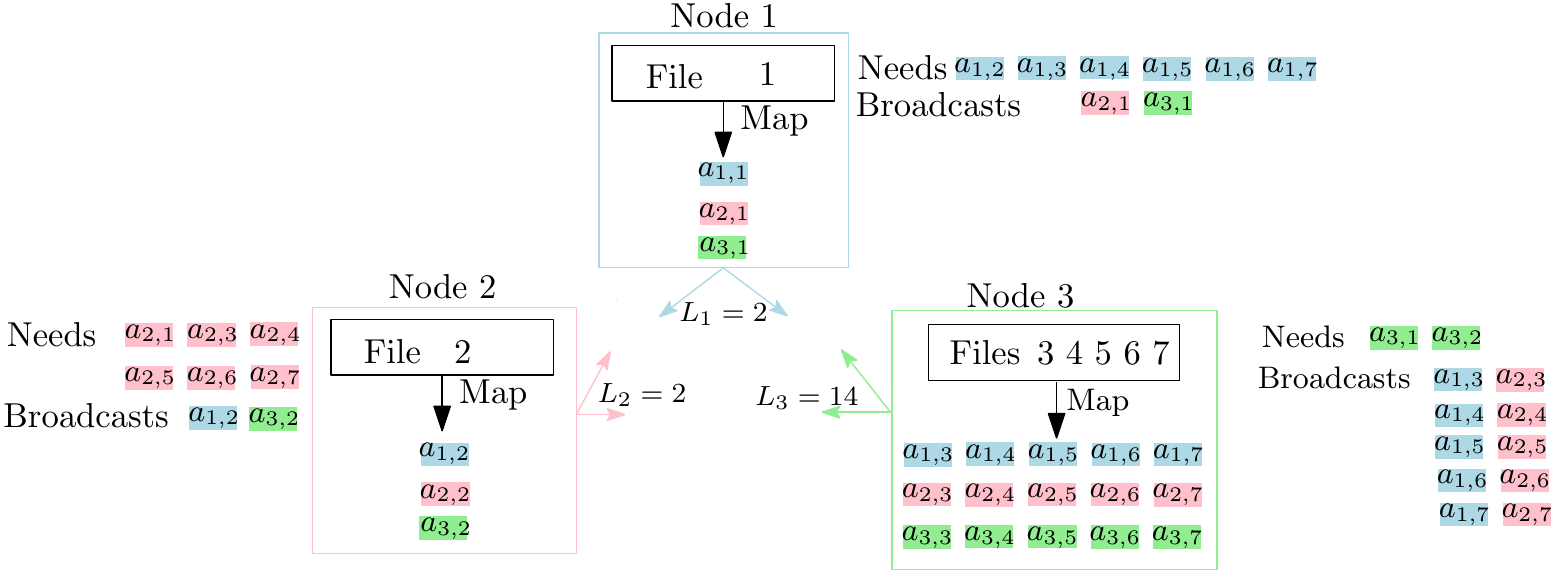}
\caption{A scheme for a distributed computing system with  $( K=Q=3,  N=7, L_1=L_2=2, L_3 =14)$. 
The scheme is optimal in terms  of the minimum \emph{total computation load}. 
 } 
\label{fig:example_total}
\end{figure}

{\emph{File placement}}: In our design, file placement is started from the node with the lowest communication load (Node~$1$), continued to Node~$2$ and then to Node~$3$. Note that our focus here is to minimize the computation load that is possible by compensating the communication load which, however is limited for each node. Files are placed exclusively at Node~$k$ if it has enough communication load ($L_k$) to send the intermediate values of the associated files required by the other nodes. Note that only one file is placed at one time and for the next file, next node with enough communication load is chosen. Here, file $w_1$ is placed at Node~$1$, then file  $w_2$ is placed at Node~$2$ and then file $w_3$ is placed at Node~$3$. As Nodes~$1$ and $2$ already have their communication load occupied, the remaining files, $\{w_n\}_{n=4}^7$, are placed at Node~$3$. We design the file placement in this way to avoid placing all the files in one node, i.e., approaching to a centralized system. With this approach, the file placement can be represented as:  $\Mc_{1}=\{1\}$,  $\Mc_{2}=\{2\}$ and $\Mc_{3}=\{ 3, 4, 5,6,7\}$,  with $M_1=1$, $M_2=1$ and $M_3=5$. 

{\emph{Map phase}}: Based on the above file placement, Node~$k$ computes the intermediate values  
$\{a_{q,n}: q\in [1: Q], n \in  \mathcal{M}_k \}$, for  $k=1,2, 3$. Node~$k$ still needs intermediate values $\{a_{q,n}: q\in \Wc_k, n \in  [1:N] \setminus \mathcal{M}_k \}$ for $k = 1,2,3$ to compute the output function $\Wc_k$.

{\emph{Shuffle phase}}: To fulfill the required intermediate values for the output function, $\Wc_k$, each node $k$ sends out intermediate values to the other nodes utilizing the limited communication load constraints. Here, in this example, Node~$1$ sends intermediate values $\{a_{2,1}, a_{3,1}\}$ to Nodes~$2$ and $3$. Since,  $\{a_{2,1}, a_{3,1}\}$ carries at most $2B$ bits of information, the communication load of Node~$1$ satisfies the constraint in \eqref{eq:comc2} for this case with $L_{1} = 2$, that is,  \[ \frac{K \cdot \Hen(x_{1}(a_{2,1},a_{3,1})) }{QB} \leq \frac{K \cdot \Hen(a_{2,1}, a_{3,1})}{QB} \leq  \frac{2K  B }{QB} = L_1=  2. \]   The first inequality follows from identity of $\Hen(f(e)) \le \Hen(e)$ for a deterministic function $f(e)$. Similarly, Node~$2$ sends out the message of $\{a_{1,2}, a_{3,2}\}$ of at most $2B$ bits to the other nodes, which satisfies the communication load constraint of Node~$2$ with $L_2=2$ while Node~$3$ sends out the message of $\{a_{1, n},  a_{2, n }\}_{n=3}^7$ of at most $10B$ bits to the other nodes which is less than the communication load constraint of Node~$3$, $L_3=14$.

{\emph{Reduce phase}}: 
Finally, in this phase, Node~$k$, $k\in [1:K]$ collects all the intermediate values $ \{a_{q,n}: q \in \mathcal{W}_k, n \in[1: N]\}$, from the Map phase and the Shuffle phase, as the inputs to compute the Reduce function.
For this proposed scheme, the total computation load is
\begin{align*}
\Mtn= M_1 + M_2 + M_3 = 7, 
\end{align*}
which is optimal. This scheme is an example of the general scheme which achieves the minimum total computation load described in Section~\ref{sec:achiK}.

\vspace{5pt}
\subsubsection{Scheme for the minimum worst-case computation load} \label{eg:worstcaseexample}
It turns out that the above achievable scheme optimal for the total computation load is, however, not optimal for the worst-case computation load. For the above scheme, the worst-case computation load is given by \[\Mwtn = \max\{M_1, M_2, M_3\} = 5.\]
With the given communication load constraints ($L_1 = L_2 = 2, L_3 = 14$), Node~$1$ and Node~$2$ can each send out at most $2$ intermediate values while Node~$3$ is capable of sending out $14$ intermediate values. Intuitively, as the communication capability of Node~$3$ is quite higher than that of other nodes, placing most of the files at Node~$3$ seems a straightforward option. However, when we focus on the worst-case computation load, and in the presence of limited communication load constraints, the scheme must be designed carefully to minimize the worst-case computation load. The new scheme designed for the minimum worst-case computation load for the given parameters is illustrated in Fig.~\ref{egcontradict0} and the details of the scheme are described as follows.

{\emph{File placement}}:  
Since Node~$3$ is  allowed to have higher communication load ($L_3 = 14$) compared to the other nodes $(L_1=L_2=2)$ and as we focus on the worst-case computation load, the scheme is designed to allow more files to be placed at the node that incur  the highest communication load, i.e., Node~$3$ in this case. 
 In this scheme, we design the file placement such that  $\Mc_{1}=\{1,6,7\}$,  $\Mc_{2}=\{6,7\}$ and $\Mc_{3}=\{2, 3, 4, 5\}$,  with $M_1=3$, $M_2=2$ and $M_3=4$. Instead of utilizing all the communication load in Node~$3$, redundant files have been placed at Node~$1$ and Node~$2$ to reduce the worst-case computation load. Comparing this scheme to the one for the total computation load (see Section \eqref{subsec:totalscheme1}), one extra computation load has been removed from Node~$3$. However, in this case, the total computation load ($\Mtn=9$) is increased as we focus on minimizing the worst-case computation load.

{\emph{Map phase}}: In this phase, Node~$k$ generates the following  intermediate values  
$\{a_{q,n}: q\in [1: Q], n \in  \mathcal{M}_k \}$, for  $k=1,2, 3$.

{\emph{Shuffle phase}}: 
In the Shuffle phase, Node~$1$ broadcasts the message of $\{a_{2,1}, a_{3,1}\}$ to the other nodes. Since  $\{a_{2,1}, a_{3,1}\}$ carries at most $2B$ bits of information, the communication load of Node~$1$ satisfies the constraint in \eqref{eq:comc2} for this case with $L_{1} = 2$, that is,  \[ \frac{K \cdot \Hen(x_{1}(a_{2,1},a_{3,1})) }{QB} \leq  \frac{2K  B }{QB} = L_1=  2. \]  
Similarly, Node~$2$ broadcasts the message of $\{a_{3,6}, a_{3,7}\}$.  Since it carries at most $2B$ bits of information, the communication load of Node~$2$ satisfies the constraint in \eqref{eq:comc2} for this case with $L_2 = 2$. 
Node~$3$ broadcasts the message of $\{a_{1, n},  a_{2, n }\}_{n=2}^5$.  Since this message carries at most $8B$ bits of information, the communication load of Node~$3$ satisfies the constraint in \eqref{eq:comc2} for this case with $L_3 = 14$. 

{\emph{Reduce phase}}: 
In this phase, Node~$k$, $k\in [1:K]$, has all the intermediate values $ \{a_{q,n}: q \in \mathcal{W}_k, n \in[1: N]\}$ as inputs to compute its Reduce function. 
For this proposed scheme, the worst-case computation load is
\begin{align*}
\Mwtn=  \max \{M_1, M_2, M_3\} = 4, 
\end{align*}
which turns out to be optimal.  Note that for this setting, the converse proof is based on proof by contradiction (see Lemma~\ref{prop:convbound5}).

Compared to the scheme in \ref{subsec:totalscheme1}, for the scheme in \ref{eg:worstcaseexample}, the computation load of Node~$3$, the node with the heaviest job,  is lower which reduces the worst-case computation load.
From the above examples, we can see that either scheme cannot achieve both the minimum total computation load and the minimum worst-case computation load at the same time. Also note that the total computation load depends upon the total communication load constraint parameter, $L$ and the worst-case computation load depends upon individual communication load constraint parameters, $L_k$,  $k \in \mathbb{N}^+$, which is also illustrated in Table~\ref{tb:introtable}.

One intriguing aspect of the above examples is that the scheme designs don't include coded intermediate values. This is because the total communication load, $L$, is sufficient to employ the uncoded intermediate values for the given number of files, $N$. The above scheme designs are the examples of the achievability schemes which are explained later in Sections~\ref{sec:achiK} and \ref{sec:achiMwt3}. 

\begin{figure}
\centering
\includegraphics[scale=1.25, width=15cm]{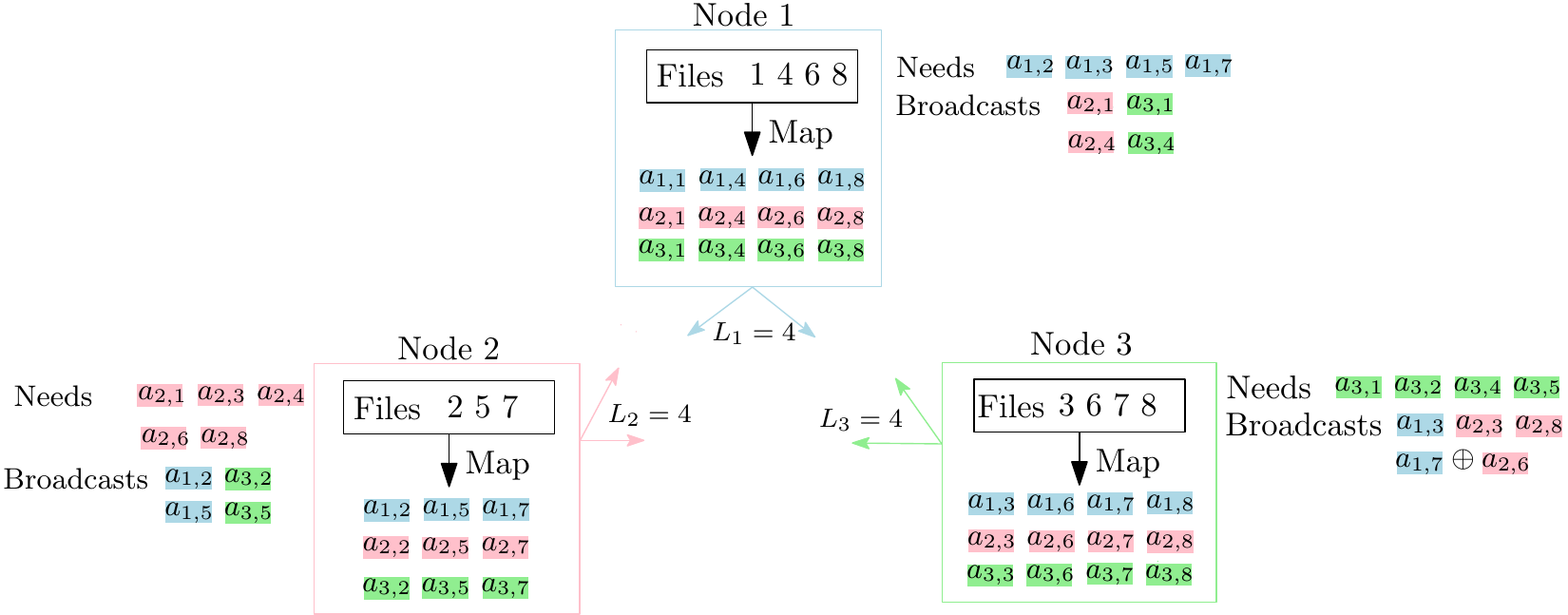}
\caption{A scheme for a distributed computing system with  symmetric communication load constraints $( K=Q=3,  N=8, L_1=L_2=L_3 =4)$. 
The scheme is optimal in terms  of the minimum \emph{total computation load}.
 } 
\label{fig:example_symmetric}
\end{figure}

\subsection{Example with ($L_1=L_2=L_3=4$, $N=8$, $Q=K=3$)} \label{sec:symmetric}
Here, in this example, we look at the minimum total computation load for the homogeneous communication load constraints ($L_1 = L_2 = L_3$). The scheme design for the example with ($L_1=L_2=L_3=4$, $N=8$, $Q=K=3$) is shown in Fig.~\ref{fig:example_symmetric}. Similar to the previous example, the scheme details are described for each phase.

{\emph{File placement}}:  Since all the nodes have the same communication load constraint ($L_1=L_2=L_3=4$), the file placement can start from any node. Here, we choose Node~$1$ at first in our design. We begin the file placement starting from $w_1$ and place it at Node~$1$. Then, we move onto the next nodes, Node~$2$ and then Node~$3$. The placement is done in a circular manner such that after the file placement at Node~$3$, next node chosen is Node~$1$. Files $\{w_n\}_{n=1}^5$ are placed in a circular manner in all the nodes such that Node~$1$ gets files $w_1$ and $w_4$, Node~$2$ gets files $w_2$ and $w_5$ and Node~$3$ gets file $w_5$. To satisfy the communication constraints at all the nodes, files $\{w_n\}_{n=6}^8$ are redundantly placed at two of the three nodes. For this example, the file placement is designed as follows:
\begin{align*}
&\Mc_1 = \{1,4,6,8\} \text{ with }  M_1 =4,\\
&\Mc_2 = \{2,5,7\} \text{ with }   M_2= 3,\\
&\Mc_3 = \{3,6,7,8\}  \text{ with } M_3 = 4.
\end{align*}
{\emph{Map phase}}: Node~$k$ generates the following  intermediate values  
$\{a_{q,n}: q\in [1: Q], n \in  \mathcal{M}_k \}$, for  $k=1,2, 3$. Node~$k$ still requires intermediate values $\{a_{q,n}: q\in \Wc_k, n \in  [1:N] \setminus \mathcal{M}_k \}$ for $k = 1,2,3$ to compute the output function $\Wc_k$.

{\emph{Shuffle phase}}: With the communication load constraints, $L_1=L_2=L_3=4$, each node can deliver at most $4B$ bits of information satisfying the communication constraint in \eqref{eq:comc2}. Specifically, Node~$1$ sends  $\{a_{2,1}, a_{3,1}, a_{2,4}, a_{3,4}\}$, Node~$2$ sends $\{a_{1,2}, a_{3,2}, a_{1,5}, a_{3,5}\}$ and Node~$3$ sends $\{a_{1,3}, a_{2,3}, a_{2,8}, a_{1,7} \oplus a_{2,6}\}$. The symbols sent by each node carry at most $4B$ bits of information. With the coded intermediate value $a_{1,7} \oplus a_{2,6}$, Nodes~$1$ and $2$ can decode the required information symbols, $a_{1,7}$ for Node~$1$ and $a_{2,6}$ for Node~$2$ by using the side information, $a_{2,6}$ from Node~$1$ and $a_{1,7}$ from Node~$2$, generated in the Map phase, respectively. The careful design of the file placement, Map and Shuffle phases ensure that each node will have all the intermediate values required for the Reduce phase. Note that the communication constraint in \eqref{eq:comc2} holds true for all the nodes.

{\emph{Reduce phase}}: 
Finally, in this phase, Node~$k$, $k\in [1:K]$ collects all the intermediate values $ \{a_{q,n}: q \in \mathcal{W}_k, n \in[1: N]\}$, from the Map phase and the Shuffle phase, as the inputs to compute the Reduce function.
For this proposed scheme, the total computation load is
\begin{align*}
\Mtn= 11, 
\end{align*}
which is optimal. This scheme is an example of the general scheme in Section~\ref{sec:achiK}. 

\subsection{Example with ($L_1=2, L_2=4, L_3=6$, $N=8$, $Q=K=3$)} \label{sec:asymmetric}
Now, let us consider a distributed system with heterogeneous communication load constraints and observe the minimum total computation load. The scheme design for this example with asymmetric communication load constraints ($L_1 \neq L_2 \neq L_3$) is shown in Fig.~\ref{fig:example_asymmetric}.

\begin{figure}
\centering
\includegraphics[scale=1.25, width=15cm]{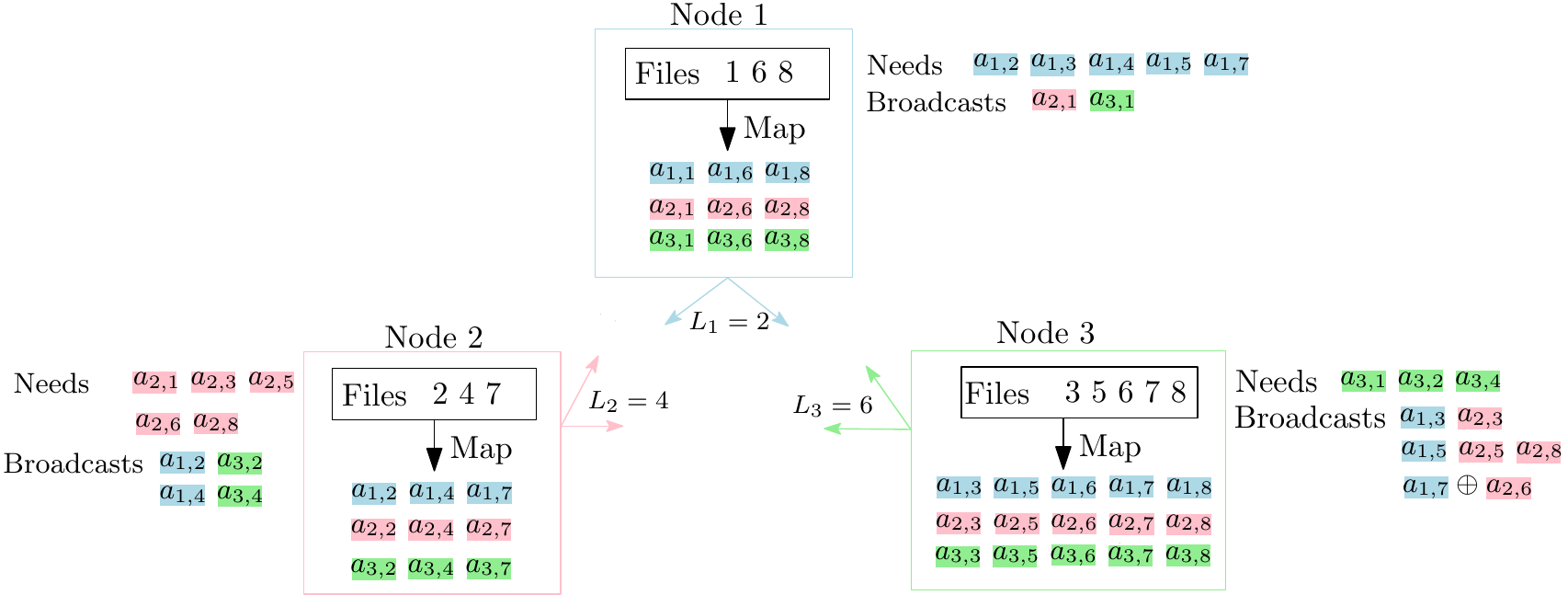}
\caption{A scheme for a distributed computing system with  asymmetric communication load constraints $( K=Q=3,  N=8, L_1=2, L_2=4, L_3 =6)$. 
The scheme is optimal in terms  of the minimum \emph{total computation load}.
 } 
\label{fig:example_asymmetric}
\end{figure}
{\emph{File placement}}:  Similar to the circular file placement procedure in Sections \ref{subsec:totalscheme1} and \ref{sec:symmetric}, file placement is started from the node with the lowest communication load which is Node~$1$ in this case. However, after placing the files $w_1, w_2$ and $w_3$ in the Nodes~$1$, $2$ and $3$, respectively, as we can see later that the communication load of Node~$1$ will reach its communication limit ($L_1=2 $). So, by bypassing the Node~$1$, file $w_4$ is placed at Node~$2$ and file $w_5$ is placed at Node~$3$. Now, due to the communication limit, the system will not place any additional exclusive files at Node~$2$. Since Node~$3$ still has enough communication capability, files $w_6, w_7$ and $w_8$ are placed at Node~$3$ and each file is also placed at one of the other nodes due to coding pattern that will be shown later. Note that in the Shuffle phase this trick of using redundant files makes coding of intermediate values possible which in turn uses less communication load. The final file placement can be designed as:
\begin{align*}
&\Mc_1 = \{1,6,8\} \text{ with }  M_1 = 3,\\
&\Mc_2 = \{2,4,7\} \text{ with }  M_2 = 3,\\
&\Mc_3 = \{3,5,6,7,8\} \text{ with }  M_3 = 5.
\end{align*}
{\emph{Map phase}}: Node~$k$ generates the following  intermediate values  
$\{a_{q,n}: q\in [1: Q], n \in  \mathcal{M}_k \}$, for  $k=1,2, 3$. 

{\emph{Shuffle phase}}: Node~$1$ sends intermediate values $\{a_{2,1}, a_{3,1}\}$, of at most $2B$ bits of information to the other nodes. Similarly, Node~$2$ sends out the message of $\{a_{1,2}, a_{3,2}, a_{1,4}, a_{3,4}\}$ carrying at most $4B$ bits to the other nodes while Node~$3$ broadcasts the message of $\{a_{1,3}, a_{2,3}, a_{1,5}, a_{2,5}, a_{2,8}, a_{1,7} \oplus a_{2,6}\}$ which carries at most $6B$ bits of information. All of the nodes satisfy the communication constraint in \eqref{eq:comc2} for this case with $L_1 = 2, L_2= 4, L_3 = 6$, that is,
\begin{align*}
\frac{K \cdot \Hen(x_{1}(a_{2,1},a_{3,1})) }{QB} \leq  \frac{2K  B }{QB} = L_1=  2,\\
\frac{K \cdot \Hen(x_{2}(a_{1,2}, a_{3,2}, a_{1,4}, a_{3,4})) }{QB} \leq  \frac{4K  B }{QB} = L_2=  4,\\
\frac{K \cdot \Hen(x_{3}(a_{1,3}, a_{2,3}, a_{1,5}, a_{2,5}, a_{2,8}, a_{1,7} \oplus a_{2,6})) }{QB} \leq  \frac{6K  B }{QB} = L_3=  6.
\end{align*}

{\emph{Reduce phase}}: 
Node~$k$, $k\in [1:K]$ has all the intermediate values $ \{a_{q,n}: q \in \mathcal{W}_k, n \in[1: N]\}$ as the inputs to compute the Reduce function.
For this proposed scheme, the total computation load is
\begin{align*}
\Mtn= 11, 
\end{align*}
which is optimal. This scheme is also an example of the general scheme in Section~\ref{sec:achiK}. 

From the examples \ref{sec:symmetric} and \ref{sec:asymmetric}, it can be observed that both the schemes have the same minimum total computation load even though they have different individual communication load constraints. We later see in the general scheme that the minimum total computation load is not affected by individual communication load constraint parameters,  $L_k \in \mathbb{N},  k \in \{1,2,\cdots,K \}$, as long as the total communication load, $L$, does not change.

\section{Achievability for the $K$-node system: The total computation load} \label{sec:achiK}
In this section, we will provide a general  scheme to achieve the minimum total computation load $\Mt$  for a  $K$-distributed computing system with heterogeneous communication load constraint parameters $L_k \in \mathbb{N},  k \in \{1,2,\cdots,K \}
$. This section provides the achievability proof for Theorems~\ref{thm:K2} and \ref{thm:K3} and Propositions~\ref{prop:KsmallL} and \ref{prop:KlargeL}. Recall that we consider the symmetric job assignment such that each node  computes $Q/K$ output functions, with $Q/K \in \mathbb{N}$. WLOG, we consider the following job assignment,
\begin{align*}
\mathcal{W}_k =[(k-1) Q/K + 1: k Q/K], \quad \quad k \in [1:K].
\end{align*}

Intuitively, one might want to reduce the computation load of the distributed system by increasing the communication load. However, simply maximizing the communication load to minimize the computation load is not an option in this setting as the  system has constraints of limited communication.
For minimizing the total computation load, we explore various strategies of delivering intermediate values among the $K$ nodes. 
The complexity of designing the optimal scheme for achieving the minimum total computation load increases as the total number of nodes, $K$, increases. 
For $K$ number of nodes, each file can be placed exclusively at $p$ nodes where $p \in [1:K]$ and we explore two strategies for sending out the intermediate values of these files during Shuffle phase: the \textit{coding strategy} and the \textit{redundancy strategy}. In \textit{coding  strategy}, intermediate values are sent out to the other nodes with $p$-order coding. The $p$-order coding refers to the bitwise XOR operation of $p$ intermediate values. In \textit{redundancy  strategy}, the intermediate values are sent out to the other nodes without coding in the Shuffle phase. We can see that with increasing $K$, there are more choices of placing the files and more strategies available in sending out the intermediate values associated with those files. It's very challenging to find the optimal scheme to achieve the minimum total computation load for a $K$-node distributed computing system when $K$ increases. With the careful choice of the possible strategies of \textit{coding} and \textit{redundancy}, we have designed our achievable scheme for a general $K$-node distributed computing system to achieve the minimum total computation load.

We introduce parameters $\Rp$ and $\Cp$ to design the scheme for the total computation load for $K$-node distributed computing system. Parameter $\Cp$ denotes the number of \textit{batches} of files placed exclusively at the nodes to ensure \textit{coding strategy} can be utilized in the Shuffle phase. The intermediate values associated with these files are broadcast with $p$-order coding in the Shuffle phase. Note that each batch of files consists of $K-1 \choose p -1$ files, $\forall p \in [1:K]$. This is because, each file from a batch is placed at one node, e.g., Node~$k, k \in [1:K]$ and at $p-1$ nodes from remaining $K-1$ nodes. In this coding strategy, for each batch of files one node is responsible for sending out ${K-1 \choose p}Q/K$ coded intermediate values to the other nodes. This is because, intermediate values of $p$-order should be sent from one node, e.g., Node~$k, k \in [1:K]$ to $K-1$ nodes. Note that for $p=K$, the files are placed at all the $K$ nodes, so, communication is not needed in the Shuffle phase. This scheme design will be explained later with exact parameter values via an example.  For coding strategy, the total computation load, denoted by $M_{\text{total}}^c$, is given as, 
\begin{align}\label{eq:totalcod}
M_{\text{total}}^c =  \sum_{p=1}^{K} p \cdot \Cp {K-1 \choose p-1},
\end{align}
and a total of
\begin{align}\label{eq:commcod}
\sum_{p=1}^{K-1} {K-1 \choose p} \Cp \cdot Q/K
\end{align}
intermediate values need to be delivered among the nodes for coding strategy, with each coded intermediate value carrying at most $B$ bits of information. 

Note that we have two extreme cases for $p=1$ and $p=K$. Parameter $p=1$ means that files are only placed at one node and the intermediate values associated with these files are sent out with $1$-order coding which implies that the broadcasted intermediate values are actually uncoded. Parameter $p=K$ suggests that the files are placed at all the $K$ nodes and hence, communication is not needed in the Shuffle phase.

The other parameter, $\Rp$, represents the number of files placed exclusively at $p$ nodes and the intermediate values associated with these files are sent out without coding in Shuffle phase, where $p \in [2:K-1]$. In this redundancy strategy, for $\Rp$ number of files, $(K-p) \Rp Q/K$ intermediate values need to be delivered to the other nodes. This is because for each file exclusively placed  at $p$ nodes, intermediate values associated with these files need to be sent out to the other $K-p$ nodes. 
This strategy will be clarified later with specific parameter design for an example.
The total computation load occupied by these $\Rp$ files with redundancy strategy, denoted by $M_{\text{total}}^r$,  is given as,
\begin{align} \label{eq:totalredun}
M_{\text{total}}^r =  \sum_{p=2}^{K-1} p \cdot \Rp,
\end{align}
and a total of
\begin{align}\label{eq:commredun}
\sum_{p=2}^{K-1} (K-p) \Rp \cdot Q/K
\end{align}
intermediate values need to be delivered among the nodes for redundancy strategy, with each intermediate value carrying at most $B$ bits of information. 

Note that for the example with ($L_1=2, L_2=4, L_3=6$, $N=8$, $Q=K=3$) in Section~\ref{sec:asymmetric} (also Fig.~\ref{fig:example_asymmetric}), the values of the above parameters are designed as
\begin{align*}
&\gamma_{c,1} = 5, \quad  \gamma_{c,2} = 1, \quad  \gamma_{c,3} = 0 \quad \text{and} \quad \gamma_{r,2} = 1.
\end{align*}
For the coding strategy, for $p=1$, $\gamma_{c,1}$ \textit{batches} of files are placed at the nodes and each \textit{batch} consists of ${K-1 \choose p-1} = 1$ file. The $\gamma_{c,1}$ \textit{batches} of files are represented by $\{w_1\},\{w_2\},\{w_3\},\{w_4\},\{w_5\}$. File $w_1$ is placed at Node~$1$, files $w_2$ and $w_4$ are placed at Node~$2$ and files $w_3$ and $w_5$ are placed at Node~$3$.  In the Shuffle phase, a total of ${K-1 \choose p} \gamma_{c,1} = 10$ intermediate values, $\{a_{2,1}, a_{3,1}, a_{1,2}, a_{3,2}, a_{1,4}, a_{3,4}, a_{1,3}, a_{2,3}, a_{1,5}, a_{2,5}\}$, are sent out by the nodes for these files. For $p=2$, $\gamma_{c,2}$ \textit{batch} of files with each \textit{batch} consisting of ${K-1 \choose p-1} = 2$ files are placed at the nodes. The files for $\gamma_{c,2}$ \textit{batch} of files are represented by $\{w_6,w_7\}$. File $w_6$ is placed at Nodes~$3$ and $1$ and file $w_7$ is placed at Nodes~$2$ and $3$. In the Shuffle phase, ${K-1 \choose p } \Cp = 1$ coded intermediate value $\{a_{1,7} \oplus a_{2,6}\}$ is broadcast by Node~$3$ to the other nodes.
For the redundancy strategy, $\gamma_{r,2}$ number of file represented by $w_8$ is placed at Nodes~$1$ and $3$. In the Shuffle phase, Node~$3$ delivers $(K - p) \gamma_{r,2} = 1$ intermediate value $\{a_{2,8}\}$ to Node~$2$. 

We design redundancy and coding strategies with the parameters $\Rp$ and $\Cp$ such that
\begin{align} 
&\sum_{p=2}^{K-1} (K-p) \Rp + \sum_{p=1}^{K-1}{K-1 \choose p}\Cp \leq L,  \label{eq:gencomm}\\
&\sum_{p=2}^{K-1} \Rp + \sum_{p=1}^{K} \Cp {K-1 \choose p-1} = N. \label{eq:genfile}
\end{align}
The condition in \eqref{eq:genfile} guarantees that, with our designed parameters $\{\Rp,\Cp\}_p$, all the $N$ input files are placed at the $K$ distributed nodes. The condition in \eqref{eq:gencomm} guarantees that our scheme design satisfies the following communication load constraint (see \eqref{eq:comc2} and \eqref{eq:sumL})
\[\frac{K\cdot \sum_{k=1}^K \Hen(x_k)}{QB} \leq \sum_{p=2}^{K-1} (K-p) \Rp + \sum_{p=1}^{K-1}{K-1 \choose p}\Cp \leq L.\]
For the proposed scheme, the file placement, Map, Shuffle, and Reduce phases are explained in Algorithm~\ref{generalAlg}, which takes the designed parameters satisfying conditions~\eqref{eq:gencomm} and \eqref{eq:genfile} as inputs. Note that Algorithm~\ref{generalAlgFunc} provides some functions needed for Algorithm~\ref{generalAlg}.
In the algorithms that follow, we use the following notations
\begin{align}
x^\dagger &\defeq [(x-1) \text{ mod } K ]+1, \quad \text{for} \quad x \in \mathbb{N}^+.  \non
\end{align}

\begin{algorithm}
\caption{Function definitions for Algorithm \ref{generalAlg}} \label{generalAlgFunc}
\begin{algorithmic}[1]
\Function{nextnodecoding}{$k, load,p,i$} 
	\While $(L_k < load)$
		\State $k \gets (k + 1)^\dagger$

	\EndWhile
	\State\Return $k$
\EndFunction

\Function{nextnoderedundancy}{$k,p,i$}
	\While $\exists \Ac:\Ac \subseteq [1:K], |\Ac| = p, k \in \Ac, \sum\limits_{j:j \in \Ac} L_j < K-p$
		\State $k \gets \ksum$ 
	\EndWhile	
	\State $\bar{\Ac}\gets $ Select one $\Ac$
	\State\Return $(\bar{\Ac},  k)$
\EndFunction

\Function{removenode}{$\hat{\Ac},k$}
	\While $L_k =0$
	\State $\hat{\Ac} \gets \hat{\Ac} \setminus \{k\}$
    \State $\text{Choose largest } \hat{k} \in \hat{\Ac}$
	\EndWhile
	\State \Return $(\hat{\Ac},\hat{k})$
\EndFunction
\algstore{bkbreak}
\end{algorithmic}
\end{algorithm}

The scheme design for the total computation load for a system with $K$ distributed nodes consists of redundancy and coding strategies, and the following total computation load for the proposed scheme is achievable:
\begin{align}
\Mtn &= \min \{M_{\text{total}}^r + M_{\text{total}}^c\},  \label{eq:gentotal}\\
&\text{s.t.} \ \ \text{conditions \eqref{eq:gencomm} and \eqref{eq:genfile} are satisfied}. \non
\end{align}
From \eqref{eq:totalcod} and \eqref{eq:totalredun}, the total computation load in \eqref{eq:gentotal} can be rewritten as,
\begin{align}
\Mtn &=  \min \Big\{ {\sum_{p=2}^{K-1} p \cdot \Rp +  \sum_{p=1}^{K} p \cdot \Cp {K-1 \choose p-1}} \Big\},\\
&\text{s.t.} \ \ \text{conditions \eqref{eq:gencomm} and \eqref{eq:genfile} are satisfied} \non.
\end{align}

\begin{algorithm}[h!]
\caption{Achievable Scheme for $\Mt$ with ($Q/K \in \mathbb{N}^+$, $L_k/(K-1) \in \mathbb{N}$, $k=1,2, \cdots, K $)}\label{generalAlg}
\begin{algorithmic}[1]
\algrestore{bkbreak}
\Procedure{File Assignment}{}
\State $k\gets1$; \ $n\gets 1;$ \ $\hat{L_{\ell}} \gets L_{\ell}, \ell =1,2, \cdots, K$

\For{$p =1:K$}
	\For{$i = 1: \Cp$}
		\State $k \gets$ \Call{nextnodecoding}{$k,{K-1 \choose p}, p, i$}; \quad 	 $k_{i,p}^c \gets k$ 
		\For{\textbf{each} $\Ac:\Ac \subseteq [1:K], |\Ac| = p, k \in \Ac $}
			\State  Place file $w_n$ at the Nodes indexed by $\Ac$; \ \ $n \gets n +1$
		\EndFor
		\State $L_k \gets L_k - {K-1 \choose p}$; \quad $k \gets \ksum$
	\EndFor
\EndFor
\For{$p=2:K-1$} 
\For{$i = 1: \Rp$}
		\State $(\Ac,k) \gets$  \Call{nextnoderedundancy}{$k,p,i$};  \quad  $k_{i,p}^r \gets k$; \quad $\bar{\Ac}_{i,p} \gets \Ac$
		\State Place file $w_n$ at the Nodes indexed by $\Ac; \quad \hat{\Ac} \gets \Ac$
\For{$j \in [1:K] \setminus \Ac$}
	\State $(\hat{\Ac},k) \gets $ \Call{removenode}{$\hat{\Ac},k$}
	\State  $L_k \gets L_k - 1$
\EndFor
\State $n \gets n + 1; \quad k \gets \ksum$

\EndFor
\EndFor
\EndProcedure

\Procedure{Map phase}{}
	\State  Node $k$ computes \emph{Map} functions and outputs $\big\{a_{q,n}$: $q\in [1:Q]$, $n \in$ $\mathcal{M}_k$; \ $k = 1,2, \cdots,K \big\}$

\EndProcedure

\Procedure{Shuffle phase}{}
\State  $n\gets 1;$ \ $ L_{\ell} \gets \hat{L_{\ell}} , \ell = 1,2, \cdots, K $

\For{$p =1:K$}
	\For{$i = 1: \Cp$}
		\State $k \gets k_{i,p}^c$; $\quad \Bc \gets \Big[n: n + {K-1 \choose p-1}-1\Big]$ 
		\For{\textbf{each} $\Ac:\Ac \subseteq [1:K], |\Ac| = p, k \notin \Ac $}
			\State  Node $k$ broadcasts intermediate value $\Big\{\bigoplus\limits_{\forall j \in \Ac} a_{\Wc_{j},     \{\cap_{ \ell \in   \{k\} \cup \Ac \setminus \{j\}}   \Mc_{ \ell} \}   \cap \Bc  }\Big\}$
		\EndFor
		\State $n \gets n + {K-1 \choose p-1};$ \quad $L_k \gets L_k - {K-1 \choose p}$
	\EndFor
\EndFor
\For{$p=2:K-1$}
\For{$i = 1: \Rp$}
	\State $\hat{\Ac} \gets \bar{\Ac}_{i,p};$ \quad  $k \gets k_{i,p}^r$
	    \For{$ j \in [1:K] \setminus \bar{\Ac}_{i,p}$} 
		\State $(\hat{\Ac},k) \gets $ \Call{removenode}{$\hat{\Ac},k$}
			\State Node~$k$ sends $a_{\Wc_{j},n}$ to Node $j$; \ $L_k \gets L_k - 1$; 
		\EndFor
	\State $n \gets n + 1$

\EndFor
\EndFor
\EndProcedure

\Procedure{Reduce phase}{}
	\State Node $k$ computes \emph{Reduce} functions indexed by $\Wc_k$; \ \ $k =  1,2, \cdots,K $

\EndProcedure

\end{algorithmic}
\end{algorithm}

In the following, we have computed the parameters $\Rp$ and $\Cp$ specifically to achieve the optimal (minimum) total computation load, given in Theorems~\ref{thm:K2} and \ref{thm:K3} and Propositions~\ref{prop:KsmallL} and \ref{prop:KlargeL}, for a distributed computing system.

\subsection{Parameter design for Theorem \ref{thm:K2}} \label{subsec:thm2}
For this setting with $K=2$, we design the parameters as  
\begin{align*} 
\gamma_{c,1} &= \min\{L,N\}, \\
\gamma_{c,2} &= (N - L)^+.
\end{align*}
One can check that the above design of the parameters satisfies the conditions~\eqref{eq:gencomm} and \eqref{eq:genfile}. Then, the total computation load for a distributed computing system with $K=2$ is given by:
\begin{align} 
\Mtn &=  \gamma_{c,1}+ 2\gamma_{c,2} \\
&=  \max \{N, 2N-L\},
\end{align}
which turns out to be optimal (see Theorem \ref{thm:K2}).

\subsection{Parameter design for Theorem \ref{thm:K3}} \label{subsec:thm3}
For a general setting of a three-node system, $K=3$, we design the parameters of the proposed scheme as
\begin{align*}
\gamma_{c,2}&= 3N - \min\{L,2N\} - \max\Big\{N, \ceil*{\frac{7N-2L}{3}}, 3N-2L\Big\},\\
\gamma_{c,1} &= \min\{L,2N\} + \gamma_{c,2} - \min\{2L, N\},\\
\gamma_{r,2} &= 2\min\{2L,N\} - 3\gamma_{c,2} - \min\{L,2N\}, \\
\gamma_{c,3} &= N  - \gamma_{r,2} - \gamma_{c,1} - 2\gamma_{c,2}.
\end{align*}

One can check that the constraints in \eqref{eq:gencomm} and \eqref{eq:genfile} are satisfied with this design. 
With the above parameter design, from \eqref{eq:gentotal}, the total computation load of  the proposed scheme is given by 
\begin{align}  
\Mtn &= 2\gamma_{r,2} +   \gamma_{c,1}+ 4\gamma_{c,2} + 3\gamma_{c,3}   \label{eq:totallast11}  \\ 
&=  \max \Bigl\{N, \ceil*{\frac{7N}{3}-\frac{2L}{3}}, 3N-2L\Bigr\}. \label{eq:totallast22}
\end{align}

Note that, for $K=3$, the total computation load of the proposed scheme turns out to be optimal when $L_k/2 \in \mathbb{N}$.

\subsection{Parameter design for Proposition \ref{prop:KsmallL}} \label{subsec:coding}
Let us consider the case when $L$ is very small compared to given $N$. More specifically we look at the case of

\[L \leq \frac{N}{K-1}.\]
In this case, the parameters are set as
\begin{align*}
&\Cp =\begin{cases} 0, \quad & p = 1,2, \cdots,K-2,\\
 L, \quad  & p = K-1,\\
 N - (K-1)L, \quad & p = K, 
 \end{cases}\\
&\Rp = 0, \quad \quad \quad \quad \quad \quad \quad \quad \ p = 2,3, \cdots,K-1.
\end{align*}
One can easily verify that these parameters satisfy both the constraints \eqref{eq:gencomm} and \eqref{eq:genfile}. Then, the total computation load from \eqref{eq:gentotal} is given by:
\begin{align}
\Mtn &= (K-1)L {K-1 \choose K-2} + K[N-(K-1)L] {K-1 \choose K - 1}\\
&= KN - (K-1)L,
\end{align}
which turns out to be optimal (see Proposition \ref{prop:KsmallL}).

\subsection{Parameter design for Proposition \ref{prop:KlargeL}} \label{subsec:nocoding}
In this case, we consider the case of
\[\min_k \{L_k\} \geq (K-1) \cdot \ceil*{N/K},\]
and then, we set the parameters as
\begin{align*}
&\Cp =\begin{cases} N , \quad & p = 1,\\
 0, \quad  & p = 2,3, \cdots,K,\\
 \end{cases}\\
&\Rp = 0, \quad \quad \quad p = 2,3, \cdots,K-1.
\end{align*}
These parameters satisfy the constraints \eqref{eq:gencomm} and \eqref{eq:genfile} and the total computation load from \eqref{eq:gentotal} is given by
\begin{align}
\Mtn &= N,
\end{align}
which turns out to be optimal (see Proposition \ref{prop:KlargeL}).
\vspace{3pt}

\section{Achievability for the three-node system: The worst-case computation load}  \label{sec:achiMwt3}
In this section, we design a general  scheme by focusing on the worst-case computation load, for a three-node ($K=3$) distributed computing system.
For some cases, the proposed   scheme achieves the optimal (minimum) worst-case computation load  (as shown in Proposition~\ref{prop:K3optimal}), by setting the scheme parameters specifically. 
  In the proposed scheme,  at first $N$ input files are divided into 16 disjointed groups, whose indices are given by
\begin{align}  \label{eq:groups}
 \Sc_{\{1\}}, \Sc_{\{2\}},  \Sc_{\{3\}},  \Sc_{\{1, 2\}}^{r_1},  \Sc_{\{1, 2\}}^{r_2}, \Sc_{\{1, 2\}}^{c_1}, \Sc_{\{1, 2\}}^{c_2} , \Sc_{\{1, 3\}}^{r_1} ,  \Sc_{\{1, 3\}}^{r_3},  \Sc_{\{1, 3\}}^{c_1},  \Sc_{\{1, 3\}}^{c_3}   ,   \Sc_{\{2, 3\}}^{r_2} ,  \Sc_{\{2, 3\}}^{r_3},  \Sc_{\{2, 3\}}^{c_2} , \Sc_{\{2, 3\}}^{c_3}  ,  \Sc_{\{1, 2, 3\}},
\end{align}  
where $\Sc_{\Ac}$, $\Ac \subseteq [1:3]$,  has been defined in \eqref{eq:Sdefine}, and the other notations are defined as follows.  
For any $ \Ac  \subseteq [1:3]$ with $|\Ac|=2$, $\Sc_{\Ac}^{c_j}$ and  $\Sc_{\Ac}^{r_j}$ denote the sets of indices of files placed exclusively at the nodes indexed by $\Ac$,  and Node $j, j \in \Ac$, is responsible for sending out the required intermediate values associated with these files  in the Shuffle phase.  For the files $w_n, n\in \Sc_{\Ac}^{c_j}$,   we  employ the \emph{coding strategy}, which has been used in the scheme in Section~\ref{sec:achiK}. 
Similarly, for the files $w_n, n\in \Sc_{\Ac}^{r_j}$,   we  employ the \emph{redundancy strategy}.

In our scheme we have $16$ parameters 
\begin{align}  \label{eq:groupssizes}
S_1,\  S_2, \  S_3,  \  S_{\Ac}^{r_j},  \ S_{\Ac}^{c_j}, \ S_{123}, \quad     \ \Ac= \{1, 2\}, \{1, 3\}, \{2, 3\}, \  j \in \Ac,
\end{align} 
where $S_{{\Ac}}^{r_j} \defeq |\Sc_{{\Ac}}^{r_j}|$, $S_{{\Ac}}^{c_j} \defeq |\Sc_{{\Ac}}^{c_j}|$,  $S_{i} \defeq |\Sc_{\{i\}}|$, $ i=1,2,3,$ and $S_{123}\defeq |\Sc_{\{1,2,3\}}|$ (see \eqref{eq:Sdefine}).
For simplicity  we use  the notation $S_{12}^{r_1}$ to represent $S_{\{1,2\}}^{r_1} $ and similar notations are used later on.  
For the proposed scheme we design these $16$ parameters under the following conditions:
\begin{align}
&\sum_{i=1}^{3}S_i+ \sum_{\Ac : \Ac \subseteq [1: 3], |\Ac|=2} \  \sum_{ j \in \Ac} (S_{\Ac}^{r_j}+S_{\Ac}^{c_j}) + S_{123} = N,  \label{eq:fileprocess} \\ 
&S_{12}^{c_1}=S_{13}^{c_1},  \quad  S_{12}^{c_2}=S_{23}^{c_2}, \quad  S_{13}^{c_3}=S_{23}^{c_3},  \label{eq:coneq22} \\ 
&2S_1 + S_{12}^{r_1}+ S_{13}^{r_1} + \frac{1}{2}(S_{12}^{c_1}+S_{13}^{c_1}) \leq L_1,   \label{eq:comma} \\
&2S_2 + S_{12}^{r_2} + S_{23}^{r_2} + \frac{1}{2}(S_{12}^{c_2}+S_{23}^{c_2}) \leq L_2,   \label{eq:commb} \\
& 2S_3 + S_{13}^{r_3} + S_{23}^{r_3} + \frac{1}{2}(S_{13}^{c_3}+S_{23}^{c_3}) \leq L_3 .     \label{eq:commc}
\end{align}
The condition in \eqref{eq:fileprocess} guarantees that the 16 disjointed groups in \eqref{eq:groups} should include all the indices of $N$ input files. 
The condition in \eqref{eq:coneq22} guarantees that the scheme can utilize the \emph{coding strategy} used in  Section~\ref{sec:achiK}. 
The conditions in \eqref{eq:comma}-\eqref{eq:commc} guarantee that the communication load constraint in \eqref{eq:comc2} is satisfied.
Let us focus on Node~1 and consider the Shuffle phase.   For each file index in group $\Sc_{\{1\}}$, $2Q/K$ intermediate values need to be delivered; for every \emph{two} file indices in group $\Sc_{\{1, 2\}}^{c_1} \cup \Sc_{\{1, 3\}}^{c_1}$,  $Q/K$ \emph{coded} intermediate values need to be delivered; for each file index in $\Sc_{\{1, 2\}}^{r_1} \cup \Sc_{\{1, 3\}}^{r_1}$, $Q/K$ intermediate values need to be delivered;  and for all the file indices in group $\Sc_{\{1, 2, 3\}}$, no communication is required. 
Therefore, a total of  \[\frac{2S_1 + S_{12}^{r_1} + S_{13}^{r_1} + \frac{1}{2}(S_{12}^{c_1}+S_{13}^{c_1})}{K/Q}\]  (coded and uncoded) intermediate values need to be delivered from Node~1, each intermediate value carrying at most $B$ bits of information.  The condition in \eqref{eq:comma}  guarantees that the  scheme design satisfies the communication load constraint for Node~1, that is  \[   \frac{K \cdot \Hen(x_{1}) }{QB}  \leq   \frac{KB\cdot \bigl(2S_1 + S_{12}^{r_1}+ S_{13}^{r_1} + \frac{1}{2}(S_{12}^{c_1}+S_{13}^{c_1})\bigr)Q/K}{QB} =  2S_1 + S_{12}^{r_1}+ S_{13}^{r_1} + \frac{1}{2}(S_{12}^{c_1}+S_{13}^{c_1}) \leq L_1.\] 
Similarly, the conditions in \eqref{eq:commb} and  \eqref{eq:commc}   guarantee that the  scheme design satisfies the communication load constraints for Node~2 and Node~3, respectively.
The file placement, Map, Shuffle, and Reduce phases of the proposed scheme are explained in Algorithm~\ref{generalAlgworst}, which takes any 16 parameters (see \eqref{eq:groupssizes}) satisfying conditions~\eqref{eq:fileprocess}-\eqref{eq:commc} as inputs.

\begin{algorithm}
\caption{Achievable Scheme for $\Mwt$ with ($K=3$, $Q/K \in \mathbb{N}^+$)}\label{generalAlgworst}
\begin{algorithmic}[1]
\Procedure{Initialization}{}
\State $z \gets 0;$\quad  $\uv \gets [S_{1}, S_{2},  S_{3},  S_{12}^{r_1},  S_{12}^{r_2}, S_{12}^{c_1}, S_{12}^{c_2} , S_{13}^{r_1} ,  S_{13}^{r_3},  S_{13}^{c_1},  S_{13}^{c_3}   ,   S_{23}^{r_2} ,  S_{23}^{r_3},  S_{23}^{c_2} , S_{23}^{c_3}  ,  S_{123}]^\T$
\For{$i = 1 : 16$}
	\State $\Cc_i \gets [z+1:z +\uv[i]];$ \quad $z \gets z + \uv [i]$
\EndFor
\State $\Sc_1 \gets \Cc_1; \quad \Sc_2 \gets \Cc_2; \quad  \Sc_3 \gets \Cc_3; \quad \Sc_{\{1, 2\}}^{r_1} \gets \Cc_4; \quad  \Sc_{\{1, 2\}}^{r_2} \gets \Cc_5; \quad  \Sc_{\{1, 2\}}^{c_1}\gets \Cc_6;$
\State $\Sc_{\{1, 2\}}^{c_2} \gets \Cc_7; \quad \Sc_{\{1, 3\}}^{r_1} \gets \Cc_8;\quad \Sc_{\{1, 3\}}^{r_3}\gets \Cc_9; \quad \Sc_{\{1, 3\}}^{c_1} \gets \Cc_{10}; \quad \Sc_{\{1, 3\}}^{c_3}  \gets \Cc_{11};$
\State $\Sc_{\{2, 3\}}^{r_2}\gets \Cc_{12}; \quad \Sc_{\{2, 3\}}^{r_3} \gets \Cc_{13}; \quad \Sc_{\{2, 3\}}^{c_2} \gets \Cc_{14}; \quad  \Sc_{\{2, 3\}}^{c_3}\gets \Cc_{15}; \quad \Sc_{\{1, 2, 3\}} \gets \Cc_{16}$
\EndProcedure
\Procedure{File Assignment}{}
\State $\Mc_1 \gets \Sc_1\cup \Sc_{\{1,2\}}^{r_1}\cup \Sc_{\{1,2\}}^{r_2}\cup \Sc_{\{1,2\}}^{c_1}\cup \Sc_{\{1,2\}}^{c_2}\cup \Sc_{\{1,3\}}^{r_1}\cup \Sc_{\{1,3\}}^{r_3}\cup \Sc_{\{1,3\}}^{c_1}\cup \Sc_{\{1,3\}}^{c_3}\cup \Sc_{\{1,2,3\}}$
\State $\Mc_2 \gets \Sc_2\cup \Sc_{\{1,2\}}^{r_1}\cup \Sc_{\{1,2\}}^{r_2}\cup \Sc_{\{1,2\}}^{c_1}\cup \Sc_{\{1,2\}}^{c_2}\cup \Sc_{\{2,3\}}^{r_2}\cup \Sc_{\{2,3\}}^{r_3}\cup \Sc_{\{2,3\}}^{c_2}\cup \Sc_{\{2,3\}}^{c_3}\cup \Sc_{\{1,2,3\}}$
\State $\Mc_3 \gets \Sc_3\cup \Sc_{\{1,3\}}^{r_1}\cup \Sc_{\{1,3\}}^{r_3}\cup \Sc_{\{1,3\}}^{c_1}\cup \Sc_{\{1,3\}}^{c_3}\cup \Sc_{\{2,3\}}^{r_2}\cup \Sc_{\{2,3\}}^{r_3}\cup \Sc_{\{2,3\}}^{c_2}\cup \Sc_{\{2,3\}}^{c_3}\cup \Sc_{\{1,2,3\}}$
\For{ $k =  1: K $ }
	\State  Place all the files indexed by $\Mc_k$ at Node $k$
\EndFor
\EndProcedure

\Procedure{Map phase}{}
\For{ $k =  1: K $ }
	\State  Node $k$ computes \emph{Map} functions and outputs $a_{q,n}$, $q\in [1:Q]$ and $n \in \mathcal{M}_k$
\EndFor
\EndProcedure

\Procedure{Shuffle Phase}{}
\For{$k = 1:K$}
	\State Node~$k$ sends $a_{\Wc_{\ksum},\Sc_k}$ to Node~$\ksum$ and $a_{\Wc_{\ksumm},\Sc_k}$ to Node~$\ksumm$
	\State Node~$k$ sends $a_{\Wc_{\ksumm},\Sc_{\{k, \ksum\}}^{r_k}}$ to Node~$\ksumm$ and $a_{\Wc_{\ksum},\Sc_{\{k, \ksumm\}}^{r_k}}$ to Node~$\ksum$
	\State Node~$k$ broadcasts $a_{\Wc_{\ksumm},\Sc_{\{k, \ksum\}}^{c_k}} \oplus a_{\Wc_{\ksum},\Sc_{\{k, \ksumm\}}^{c_k}}$ to the other nodes
\EndFor

\EndProcedure
\Procedure{Reduce phase}{}
\For{ $k =  1: K $ }
	\State Node $k$ computes \emph{Reduce} functions indexed by $\Wc_k$
\EndFor
\EndProcedure
\end{algorithmic}
\end{algorithm}

For our proposed scheme, by optimizing over the 16 parameters satisfying conditions \eqref{eq:fileprocess}-\eqref{eq:commc}, then the following worst-case computation load is achievable
\begin{align} 
\Mwtn  =    \min \  
& \max \Bigg\{  \ S_1 +\  \sum_{\Ac = \{1, 2\}, \{1, 3\}} \ \sum_{ j \in \Ac} (S_{\Ac}^{r_j}+S_{\Ac}^{c_j})+S_{123}, \non \\ 
&\ \ \ \ \qquad S_2 + \  \sum_{\Ac = \{1, 2\}, \{2, 3\}} \  \sum_{ j \in \Ac} (S_{\Ac}^{r_j}+S_{\Ac}^{c_j})+ S_{123}, \non \\ 
&\ \ \ \ \qquad S_3 +\   \sum_{\Ac = \{1, 3\}, \{2, 3\}} \ \sum_{ j \in \Ac} (S_{\Ac}^{r_j}+S_{\Ac}^{c_j})+ S_{123}\Bigg\} \label{eq:worstcase}  \\
&   \non \\
	  \text{s.t.}  \ \ &  \text{conditions~\eqref{eq:fileprocess}-\eqref{eq:commc}},  \  S_1, S_2, S_3,  S_{\Ac}^{r_j}, S_{\Ac}^{c_j}, S_{123} \in \mathbb{N}, \  \forall \Ac  \subseteq [1:3], |\Ac|=2, \forall j \in \Ac.  \non  
\end{align}

One can follow Algorithm~\ref{generalAlgworst}  to get the scheme in Fig.~\ref{egcontradict0} for given parameters $( K=Q=3,  N=7, L_1=L_2=2, L_3 =14)$ by setting  design parameters $S_1 = 1, S_3 = 4, S_{12}^{r_2} = 2$ and the rest of the design parameters as $0$. From the algorithm, we get $\Sc_1 = \{1\}, \Sc_3 = \{2,3,4,5\}, \Sc_{\{1,2\}}^{r_2} = \{6,7\}$. The design parameters satisfy the conditions~\eqref{eq:fileprocess}-\eqref{eq:commc} and the worst-case computation load from \eqref{eq:worstcase} is $4$, which turns out to be optimal.

In the following, we will set the 16 parameters specifically to achieve the optimal (minimum) worst-case computation load for some cases (as shown in Proposition~\ref{prop:K3optimal}). 

\subsection{Parameter design for Condition 1 in Proposition~\ref{prop:K3optimal}} \label{prop:convbound2proof} 

Let us consider the  case of \[\min_{k} \{L_k\} \geq 2\ceil*{\frac{N}{3}}. \]
In this case, we set the parameters as
\begin{align*}
S_1 = & \ceil*{N/3}, \quad    S_2 =   \floor*{N/3},    \quad  S_3 =  N-  \ceil*{N/3} -  \floor*{N/3}, \\
S_{\Ac}^{r_j}= &S_{\Ac}^{c_j}=S_{123}=0,  \ \  \Ac = \{1,2\}, \{1,3\}, \{2,3\},  \forall j \in \Ac.
\end{align*}
With the choices of these parameters, one can easily verify that the conditions~\eqref{eq:fileprocess}-\eqref{eq:commc} are satisfied. Then, the worst-case computation load is given by
\begin{align}
\Mwtn =&\max \Bigg\{ \ S_1 +\  \sum_{\Ac = \{1, 2\}, \{1, 3\}} \ \sum_{ j \in \Ac} (S_{\Ac}^{r_j}+S_{\Ac}^{c_j})+S_{123}, \non \\ 
&\ \  \ \ \quad \quad  S_2 + \ \sum_{\Ac = \{1, 2\}, \{2, 3\}} \  \sum_{ j \in \Ac} (S_{\Ac}^{r_j}+S_{\Ac}^{c_j})+ S_{123}, \non \\ 
&\ \ \ \ \quad \quad S_3 +\   \sum_{\Ac = \{1, 3\}, \{2, 3\}} \ \sum_{ j \in \Ac} (S_{\Ac}^{r_j}+S_{\Ac}^{c_j})+ S_{123}\Bigg\} \label{eq:worstref} \\
= & \ceil*{N/3},  \non
\end{align}
which turns out to be optimal (see \eqref{eq:K3worst1} in Proposition~\ref{prop:K3optimal}).

\subsection{Parameter design for Condition 2 in Proposition~\ref{prop:K3optimal}} \label{prop:convbound4proof} 

Now we consider the case of   \[L \leq N/2, \] 
and then we set the parameters as 
\begin{align*} 
&S_i =0, \ \  \forall i \in [1:3], \non \\
&S_{\Ac}^{r_j}=0,  \  \ \Ac = \{1,2\}, \{1,3\}, \{2,3\}, \forall j \in \Ac, \non \\
&S_{12}^{c_1}=S_{13}^{c_1}=L_1,  \non \\
&S_{12}^{c_2}=S_{23}^{c_2}=L_2, \non \\
&S_{13}^{c_3}=S_{23}^{c_3}=L_3, \non \\
&S_{123}=N-2L .
\end{align*}
One can  verify that the conditions~\eqref{eq:fileprocess}-\eqref{eq:commc} are satisfied with the choices of the parameters. Then, the worst-case computation load  (see \eqref{eq:worstref}) is given by
\begin{align}
\Mwtn =  N - \min\{L_2 + L_3, L_1+L_3, L_1 + L_2\}, \non
\end{align}
which turns out to be optimal (see \eqref{eq:K3worst2} in Proposition~\ref{prop:K3optimal}).

\subsection{Parameter design for Condition 3 in Proposition~\ref{prop:K3optimal}}\label{prop:convbound1proof}

We then consider the case with the following conditions: 
\begin{align}
2&\leq \min_k\{L_k\} \leq\frac{2N}{3},  \label{eq:cond1for3} \\
\frac{3N \!\!-\!  \min_{i\neq j}\{L_i + L_j\} }{5} & \leq \!   \ceil*{\frac{ N}{2} - \frac{\min_k\{L_k\}}{4}} \! \leq \frac{\max_k\{L_k\}}{2}.\label{eq:cond2for3}
\end{align}
WLOG, we  assume that $L_1 \leq L_2 \leq L_3$. 
Then, we set the parameters as
\begin{align*}
&S_1 = S_2 = N-2\ceil*{\frac{N}{2}-\frac{L_1}{4}},  \\
&S_3 = \ceil*{\frac{N}{2}-\frac{L_1}{4}}, \\ 
&S_{12}^{r_1}=  L_1 - 2N + 4   \ceil*{\frac{N}{2}-\frac{L_1}{4}}, \\
&S_{12}^{r_2} = N- L_1- \ceil*{\frac{N}{2}-\frac{L_1}{4}},  \\
&S_{\Ac}^{r_j} =0, \  \Ac = \{1, 3\}, \{2, 3\}, \ \forall j \in \Ac,\\ 
&S_{\Ac}^{c_j}=0, \   \Ac =  \{1, 2\}, \{1, 3\}, \{2, 3\}, \  \forall j \in \Ac, \\ 
&S_{123}= 0.
\end{align*}
One can check that conditions~\eqref{eq:fileprocess}-\eqref{eq:commc}  hold true for the choices of the parameters. Finally, the worst-case computation load is given by
\[\Mwtn = \ceil*{\frac{N}{2} - \frac{\min_k\{L_k\}}{4}},\]
which turns out to be optimal (see \eqref{eq:K3worst3} in Proposition~\ref{prop:K3optimal}). 
\vspace{3pt}

\section{Achievability for the $K$-node system: both the Total computation load and the Worst-case computation load} \label{sec:achiKsame}
Here, in this section, we focus on designing the general scheme to achieve both the worst-case computation load and the total computation load for $K$ distributed nodes, under the heterogeneous communication load constraint parameters, $L_k, k \in [1:K]$. This proposed scheme can achieve both the minimum total computation load and the minimum worst-case computation load for some cases by setting the parameters specifically which is given later on. First, we divide the $N$ files into $\sum_{p=1}^{K-1} p  \binom{K}{p} + 1$ disjointed groups whose indices are given by
\begin{align}
\Sc_{\{1 \cdots K\}}, \ \ \Sc_{\Ac}^{c_j} \ \ \forall \Ac \subseteq [1:K],  |\Ac| = p,  p=1,2,\cdots,K-1, j \in \Ac. \non	 
 \end{align}
In the above, $\Sc_{\{1 \cdots K\}}$ denotes the indices of files placed at all the $K$ nodes. $\Sc_{\Ac}^{c_j}$ denotes the indices of files exclusively placed at the nodes indexed by $\Ac$ and Node $j, j \in \Ac$, is responsible for sending out the required intermediate values associated with these files in the Shuffle phase. \textit{Coding strategy} (See Section \ref{sec:achiMwt3})  will be used to send out the intermediate values associated with files $w_n, n \in \Sc_{\Ac}^{c_j}$.  Note that for $|\Ac| = p=1$, the files are placed exclusively at one node only, so, the intermediate values associated with these files are sent out with $1$-order coding which implies that the broadcasted intermediate values are uncoded. In this scheme, the number of parameters depend upon the total number of nodes, $K$. Specifically, for given $K$, we will have $\sum_{p=1}^{K-1} p  \binom{K}{p} + 1$ parameters which are represented as
\begin{align}
S_{1 \cdots K}, \ \ S_{\Ac}^{c_j} \ \  \forall \Ac \subseteq [1:K],  |\Ac| = p,  \ p=1,2,\cdots, K-1, j \in \Ac, \label{eq:parameters}	
\end{align}
where, $S_{\Ac}^{c_j} \defeq |\Sc_{\Ac}^{c_j}|, S_{1\cdots K} \defeq | \Sc_{\{1\cdots K\}}|$  and, these parameters are designed under the following conditions:

\begin{align}
&\sum_{p=1}^{K-1} \sum_{j, \Ac:\Ac \subseteq [1:K], |\Ac|=p, j \in \Ac}  S_\Ac^{c_j} + S_{1\cdots K} = N,\label{eq:gencond1}\\
&S_{\Ac}^{c_j} = S_{\Ac^{\prime}}^{c_j}; \ \ \Ac, \Ac^{\prime} \subseteq [1:K], |\Ac| =|\Ac^{\prime}| = p, j \in \Ac \cap \Ac^\prime, \text{ for } j = 1,2, \cdots, K, \  p =2,3, \cdots, K-1, \label{eq:coding}\\
&\sum_{p=1}^{K-1}  \frac{K-p}{p} \sum_{\Ac:\Ac \subseteq [1:K], |\Ac| = p,  j \in \Ac }S_\Ac^{c_j} \leq L_j, \  \quad \quad \quad \text{ for }  j = 1,2, \cdots, K.\label{eq:gencond2}
\end{align}
The condition in \eqref{eq:gencond1} guarantees that all the indices of the $N$ input files are included in the  $\sum_{p=1}^{K-1} p  \binom{K}{p} + 1$ disjointed groups. The condition in \eqref{eq:coding} guarantees that the scheme can utilize the \emph{coding strategy} mentioned in Section~\ref{sec:achiMwt3} (also see Fig.~\ref{fig:general_example}) and the condition in \eqref{eq:gencond2} guarantees that the communication load constraint in \eqref{eq:comc2} is satisfied. 

Let us focus on Node~$1$ and consider the Shuffle phase. For \textit{coding strategy}, we consider the Shuffle phase for a group of ${K-1 \choose p-1}$ files as specified in constraint \eqref{eq:coding}. For every ${K-1 \choose p-1}$ file indices in group $\cup_{1 \in \Ac, |\Ac| = p, \Ac \subseteq [1:K]} \Sc_{\Ac}^{c_1}$,  ${K-1 \choose p}Q/K$ \emph{coded} intermediate values need to be delivered for $p \in [1:K-1]$. This is because the intermediate values, associated with files $w_n, n \in \Sc_{\Ac}^{c_1}$, are sent using $p$-order coding from node $1$ to the other $K-1$ nodes. Note that $p$-order coding refers to XOR bitwise operation of $p$-intermediate values. Also note that for $p=1$, ${K-1 \choose 1} = K-1$ uncoded intermediate values are sent from node $1$ to the other $K-1$ nodes. However, for the file indices in group $\Sc_{\{1\cdots K\}}$, no communication is required. 
Therefore, a total of  \[\sum_{p=1}^{K-1} \frac{{K-1 \choose p}}{{K-1 \choose p-1}} \sum_{\Ac:\Ac\subseteq [1:K], |\Ac|=p, 1 \in \Ac}S_{\Ac}^{c_1}Q/K = \sum_{p=1}^{K-1} \frac{K-p}{p} \sum_{\Ac:\Ac\subseteq [1:K], |\Ac|=p, 1 \in \Ac}S_{\Ac}^{c_1}Q/K \]  (coded and uncoded) intermediate values need to be delivered from Node~$1$ with each intermediate value carrying at most $B$ bits of information.  The condition in \eqref{eq:gencond2}  guarantees that the  scheme design satisfies the communication load constraint for Node~$1$, that is  
\begin{align}
\frac{K \cdot \Hen(x_{1}) }{QB}  \leq  \sum\limits_{p=1}^{K-1}  \frac{K-p}{p}  \sum\limits_{\Ac:\Ac\subseteq [1:K], |\Ac|=p, 1 \in \Ac}S_{\Ac}^{c_1} \leq L_1. \label{eq:finalgen}
\end{align} 
Similarly, the condition in \eqref{eq:gencond2} guarantees that the scheme design satisfies the communication load constraints for other nodes as well.

\begin{figure}[h!]
\centering
\includegraphics[width=13cm]{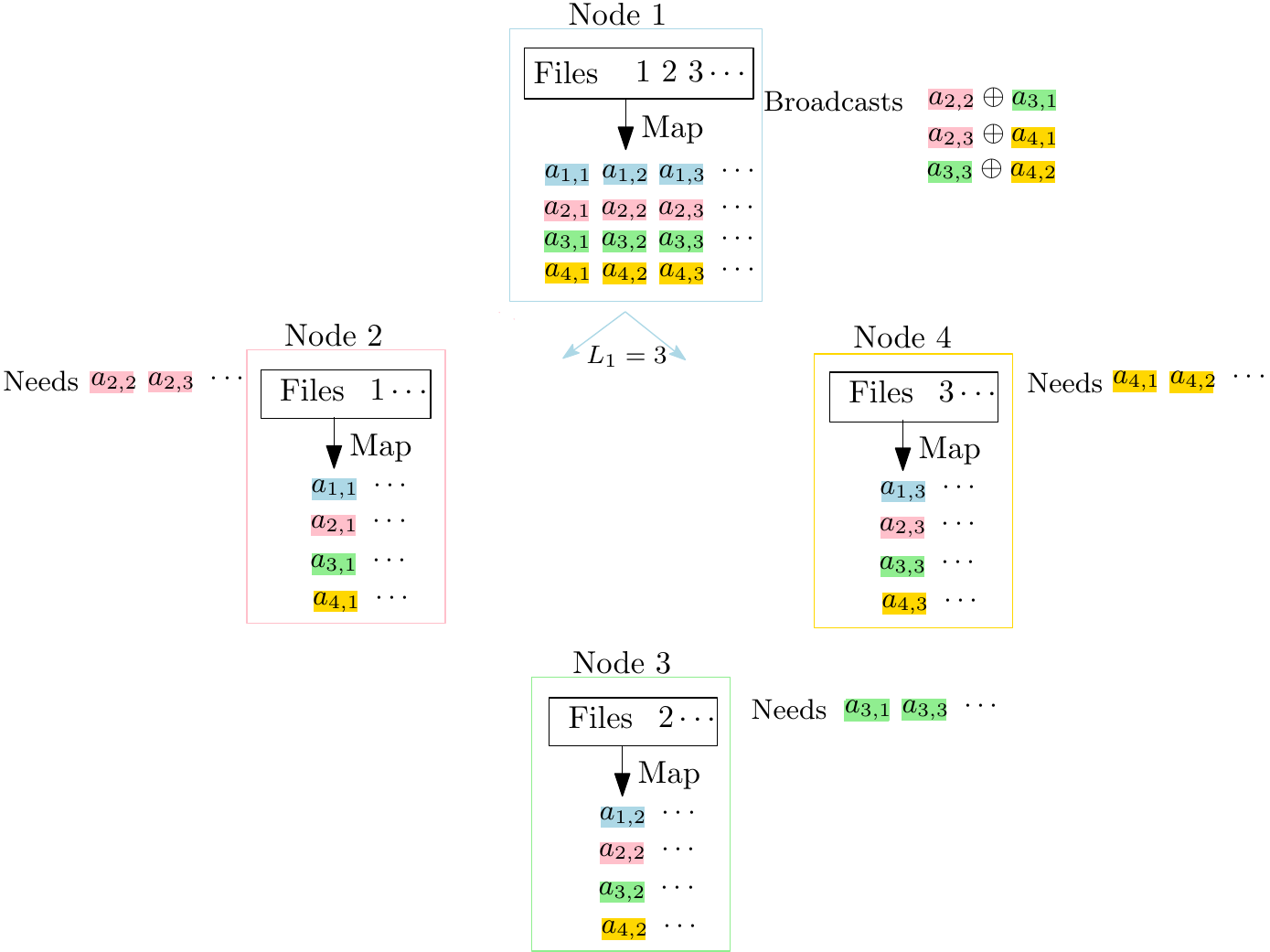} 
\caption{A snippet of an example for \textit{coding strategy} for a $4$-node distributed system to achieve both the  \textit{minimum total computation load} and the  \textit{ minimum worst-case computation load}. For  a group of $ 3$ files, \textit{coding strategy} is utilized during the Shuffle phase. The constraints in \eqref{eq:gencond1} and \eqref{eq:coding} are satisfied with design parameters, $S_{12}^{c_1} = S_{13}^{c_1} = S_{14}^{c_1}=1$.  For file indices $\{1,2,3\}$,  $3$ \emph{coded} intermediate values, i.e., $\{a_{2,2} \oplus a_{3,1}, a_{2,3} \oplus a_{4,1}, a_{3,3} \oplus a_{4,2}\}$, are broadcast from Node~$1$ to the other $3$ nodes. With this design, the communication load constraint for Node~$1$ is also satisfied which can be seen from \eqref{eq:finalgen} with $L_1=3$.}  \label{fig:general_example}
 \end{figure}

The file placement, Map, Shuffle, and Reduce phases of the proposed scheme are explained in Algorithm~\ref{generalKAlgworst}, which takes all the $\sum_{p=1}^{K-1} p  \binom{K}{p} + 1$ parameters (see \eqref{eq:parameters}) satisfying conditions~\eqref{eq:gencond1}-\eqref{eq:gencond2} as inputs. Then, the following worst-case computation load is achievable
{\color{blue}{
\begin{algorithm}
\caption{General Achievable Scheme for Both $\Mt$ and $\Mwt$ with ($Q/K \in \mathbb{N}^+$)}\label{generalKAlgworst}
\begin{algorithmic}[1]
\Procedure{Initialization}{}
\State $z \gets 0$
\For{$p = 1:K-1$}
	\For {\textbf{each} $\Ac: \Ac \subseteq [1:K], |\Ac| = p$}
		\For{$j \in \Ac$}
			\State $\Sc_{\Ac}^{c_j} \gets [z+1:z+S_\Ac^{c_j}]$;  \quad $z \gets z+S_\Ac^{c_j}$
		\EndFor
	\EndFor
\EndFor
\State $\Sc_{\{1\cdots K\}} \gets [N-S_{1\cdots K} + 1:N]$
\EndProcedure
\Procedure{File Assignment}{}
\For{ $k =  1: K $ }
	\State $\Mc_k \gets  (\cup_{p=1}^{K-1} \cup_{\Ac: \Ac \subseteq [1:K], |\Ac| = p, k \in \Ac, } \cup_{j: j \in \Ac} \Sc_\Ac^{c_j}) \cup \Sc_{\{1\cdots K\}}$
	\State  Place all the files indexed by $\Mc_k$ at Node $k$
\EndFor
\EndProcedure
\Procedure{Map phase}{}
\For{ $k =  1: K $ }
	\State  Node $k$ computes \emph{Map} functions and outputs $a_{q,n}$, $q\in [1:Q]$ and $n \in \mathcal{M}_k$
\EndFor
\EndProcedure

\Procedure{Shuffle Phase}{}
\For{$k = 1:K$}
	\State Node~$k$ sends $a_{\Wc_{(k+i)^\dagger},\Sc_{\{k\}}^{c_k}}$ to Node~$(k+i)^\dagger, \  i =1,2,\cdots, K-1$
	\For{$p = 2: K-1$}

		\For{\textbf{each} $\Ac^\prime \subseteq [1:K], |\Ac^\prime| = p, k\notin \Ac^\prime$}

					\State Node~$k$ broadcasts $\bigoplus\limits_{\forall \Ac: \Ac \subseteq [1:K], k \in \Ac, |\Ac| = p, |\Ac \cap \Ac^\prime| = p-1} a_{\Wc_{\Ac^\prime \setminus \Ac}, \Sc_{\Ac}^{c_k}}$ to the other nodes
		\EndFor

	\EndFor
\EndFor

\EndProcedure
\Procedure{Reduce phase}{}
\For{ $k =  1: K $ }
	\State Node $k$ computes \emph{Reduce} functions indexed by $\Wc_k$
\EndFor
\EndProcedure
\end{algorithmic}
\end{algorithm}
}}

\begin{align} \label{eq:generalworsteq}
\Mwtn  &=    \min \Big\{ \max_{i \in \{1 \cdots K\}} \big\{ S_{1\cdots K} + \sum_{p =1}^{K-1} \sum_{\Ac: \Ac \subseteq [1:K], |\Ac| = p, i \in \Ac}    \sum_{j: j \in \Ac}S_{\Ac}^{c_j} \big\} \Big\},\\
\text{s.t.}  \ \ &  \text{conditions~\eqref{eq:gencond1}-\eqref{eq:gencond2}}, \non  
\end{align}
and the following total computation load is achievable
\begin{align} \label{eq:generaltotaleq}
\Mtn  &=    \min  \Big\{\sum_{i=1}^K \ \big(S_{1\cdots K} + \sum_{p =1}^{K-1} \sum_{\Ac: \Ac \subseteq [1:K], |\Ac| = p, i \in \Ac}    \sum_{j: j \in \Ac}S_{\Ac}^{c_j}\big)\Big\}, \\
\text{s.t.}  \ \ &  \text{conditions~\eqref{eq:gencond1}-\eqref{eq:gencond2}}. \non  
\end{align}

We analyze the scheme for the following three cases. At first we look at the system with $K=2$ nodes in Theorem~\ref{thm:K2} and then we look into distributed computing system with $K$ nodes in Propositions~\ref{prop:KsmallL} and \ref{prop:KlargeL}. 
\subsection{Parameter design for Theorem~\ref{thm:K2}} \label{case:0}
In this system with $K=2$, the parameters are set as below. Note that for the following parameter design, WLOG, we assume $L_1 \leq L_2$.
\begin{align*} 
&S_{1}^{c_1} = \min\Big\{L_1, \ceil*{\frac{N}{2}}\Big\}, \\
&S_{2}^{c_2} = \min \Bigg\{L_2, N - \min\bigg\{L_1, \ceil*{\frac{N}{2}}\bigg\}\Bigg\}, \\
&S_{12} = \big(N- L\big)^+ . 
\end{align*}
One can easily verify that \eqref{eq:gencond1} and \eqref{eq:gencond2} hold true for these parameters. Then, the worst-case computation load from \eqref{eq:generalworsteq} is given by
\begin{align} \label{eq:thm1}
\Mwtn = N- \min \{L_1, L_2, N - \ceil*{N/2}\},
\end{align}
and the total computation load from \eqref{eq:generaltotaleq} is given by
\begin{align}\label{eq:thm2}
\Mtn = \max \{N, 2N-L\}.
\end{align}
Equations in \eqref{eq:thm1} and \eqref{eq:thm2} reveal that both the worst-case computation load and the total computation load can be achieved via the same scheme for a distributed computing system with $K=2$ nodes. Both of them turn out to be optimal (see Theorem~\ref{thm:K2}).

\subsection{Parameter design for Proposition~\ref{prop:KsmallL}} \label{case:1}
For  the  case of \[{L}\leq \frac{N}{K-1}, \]
we set the parameters as
\begin{align*} 
&S_{\Ac}^{c_j}=0,  \  \ \Ac \subseteq [1:K], |\Ac| = p, p = 1,2,\cdots,K-2, \forall j \in \Ac,\\
&S_{\Ac}^{c_j} = L_j,  \  \ \Ac \subseteq [1:K], |\Ac| =K-1, \forall j \in \Ac, \\
&S_{1\cdots K} = N - (K - 1)L.
\end{align*}
Putting these values in \eqref{eq:gencond1}, we have
\begin{align}
\binom{K-1}{K-2} \sum_{j = 1}^{K} L_j + N - (K-1)L &= N. \label{eq:generalcondproof1}
\end{align}
Again, setting the parameters in \eqref{eq:gencond2}, we get,
\begin{align}
\frac{1}{K-1} \binom{K-1}{K-2}L_i &\leq L_i,  \quad \quad \quad \forall i \in [1:K]. \label{eq:generalcondproof2}
\end{align}
From \eqref{eq:generalcondproof1} and \eqref{eq:generalcondproof2}, we prove that the choices of the above parameters satisfy the constraints in \eqref{eq:gencond1} and \eqref{eq:gencond2}. Then, the worst-case computation load from \eqref{eq:generalworsteq} is given by 
\begin{align}
\Mwtn = N - L + \max_k \{L_k\},
 \end{align}
 and the total computation load from \eqref{eq:generaltotaleq} is given by 
 \begin{align}
\Mtn = KN - (K-1)L,
 \end{align}
both of which turn out to be optimal (see Proposition~\ref{prop:KsmallL}).

\subsection{Parameter design for Proposition~\ref{prop:KlargeL}} \label{case:2}
Let us consider the case of \[\min_k \{{L}_k\} \geq (K-1) \cdot \ceil*{\frac{N}{K}}.\]
Here, the parameters are set as follows
\begin{align*} 
&S_i^{c_i} =\ceil*{\frac{N}{K}}, \ \   i = 1,2, \cdots,  [N  \text{ mod }  K], \non \\
&S_i^{c_i} =\floor*{\frac{N}{K}}, \ \  i = [N\text{ mod }K]+ 1, \cdots, K-1, K, \\
&S_{\Ac}^{c_j}=0,  \  \ \Ac \subseteq [1:K], |\Ac| = p, \ \ p = 2,3, \cdots , K-1, \forall j \in \Ac, \non \\
&S_{1\cdots K}=0.
\end{align*}
With the above choices of parameters, one can easily verify that the conditions \eqref{eq:gencond1} and \eqref{eq:gencond2} are satisfied. From \eqref{eq:generalworsteq}, the worst-case computation load is given by
\[\Mwtn = \ceil*{N/K}, \]
and from \eqref{eq:generaltotaleq}, the total computation load is given by 
\[\Mtn = N,\]
which turn out to be optimal (see Proposition~\ref{prop:KlargeL}).

\section{Converse \label{sec:converse2}}
 
This section provides the converse proofs of Theorems~\ref{thm:K2} and \ref{thm:K3} and also provides the proof of Lemmas~\ref{lm:converseMwt} and \ref{prop:convbound5}. Note that Lemma~\ref{prop:convbound5} is derived from the proof by contradiction. 
We first provide a lemma that will be used in our proofs.

\begin{lemma} [Cut-Set Bound] \label{lm:converseg}
For a distributed computing system defined in Section~\ref{sec:system}, and for any set $\Sc \subseteq [1:K]$ and $\Sc^c  \defeq [1:K]\setminus \Sc $, we have
\begin{align}
 |\Sc| \cdot \sum_{i\in \Sc} M_i  &\geq  N\cdot |\Sc| -   \sum_{k\in \Sc^{c}} L_k.           \label{eq:conv22}  
\end{align} 
\end{lemma}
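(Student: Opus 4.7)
The plan is to apply a cut-set argument that treats the nodes indexed by $\Sc$ as an aggregate ``super-decoder'' and compares what this super-decoder must produce with what information is available to it. First I would list the target: to complete its Reduce phase, every Node $k \in \Sc$ must know all intermediate values indexed by $\{(q,n): q\in\Wc_k,\ n\in[1:N]\}$, so collectively the nodes in $\Sc$ must recover $\{a_{q,n}:q\in\cup_{k\in\Sc}\Wc_k,\ n\in[1:N]\}$. Since intermediate values are i.i.d.\ uniform on $\F_2^B$ and $|\cup_{k\in\Sc}\Wc_k|=|\Sc|\cdot Q/K$ by the symmetric job assignment \eqref{eq:funcsym}, the joint entropy of this target set is exactly $|\Sc|\cdot (Q/K)\cdot N\cdot B$.

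Next I would account for what the super-decoder has. It consists of (i) the locally computed Map outputs for the relevant output functions, namely $\{a_{q,n}: q\in\cup_{k\in\Sc}\Wc_k,\ n\in\cup_{k\in\Sc}\Mc_k\}$, whose entropy is at most $|\cup_{k\in\Sc}\Mc_k|\cdot |\Sc|\cdot (Q/K)\cdot B$; and (ii) the Shuffle messages received from the outside, $\{x_j: j\in\Sc^c\}$, whose total entropy by \eqref{eq:comc} is at most $\sum_{j\in\Sc^c}L_j\cdot QB/K$. Crucially, messages exchanged among the nodes in $\Sc$ need not be charged separately, because by \eqref{eq:xkdef} they are deterministic functions of the Map outputs already held within $\Sc$, hence add no entropy. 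Combining the two contributions as an upper bound on the joint entropy of the target yields
\begin{equation*}
|\Sc|\cdot (Q/K)\cdot N\cdot B \;\le\; \bigl|\cup_{k\in\Sc}\Mc_k\bigr|\cdot |\Sc|\cdot (Q/K)\cdot B \;+\; \sum_{j\in\Sc^c} L_j\cdot QB/K.
\end{equation*}

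Dividing through by $QB/K$ and finally applying the union bound $|\cup_{k\in\Sc}\Mc_k|\le\sum_{k\in\Sc}M_k$ gives exactly \eqref{eq:conv22}. The main technical obstacle, and the step I would write out with the most care, is the entropy accounting in the second paragraph: I need to justify that the entropy of the target is at most the sum of the entropy of the locally stored intermediate values (restricted to the relevant output indices) and the entropy of the outside Shuffle messages, with no double counting. This is standard given i.i.d.\ uniformity of the $a_{q,n}$ combined with the fact that the super-decoder's reconstruction is a deterministic function of these two information sources (so $H(\text{target})\le H(\text{local})+H(\text{outside messages})$), but spelling out the chain of conditional entropy inequalities is where the proof must be rigorous rather than heuristic.
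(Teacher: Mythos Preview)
Your overall cut-set strategy matches the paper's, and your final displayed inequality is the right one. However, there is a genuine gap in the entropy accounting that you yourself flag as ``the main technical obstacle.'' You define the super-decoder's local information as the \emph{restricted} set $\{a_{q,n}: q\in\cup_{k\in\Sc}\Wc_k,\ n\in\cup_{k\in\Sc}\Mc_k\}$, and then assert that the target is a deterministic function of this restricted set together with $\{x_j:j\in\Sc^c\}$. But that assertion is not justified: for a node $k\in\Sc$, decoding $a_{\Wc_k,:}$ from the received messages may use side information $a_{q,n}$ with $q\in\Wc_{\Sc^c}$ and $n\in\Mc_k$ (e.g., to cancel an XOR component intended for a node in $\Sc^c$). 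Concretely, with $K=3$, $\Sc=\{1\}$, a message $x_2=a_{1,n}\oplus a_{3,m}$ with $m\in\Mc_1$ requires Node~1 to know $a_{3,m}\notin a_{\Wc_1,\Mc_1}$ to extract $a_{1,n}$. So $\Hen(a_{\Wc_\Sc,:}\mid a_{\Wc_\Sc,\Mc_\Sc},x_{\Sc^c})$ need not vanish, and your inequality $H(\text{target})\le H(\text{(i)})+H(\text{(ii)})$ does not follow from the ``deterministic function'' claim. Your own remark that internal messages are ``functions of the Map outputs already held within $\Sc$'' is correct, but those Map outputs are $a_{:,\Mc_\Sc}$, strictly larger than your item~(i).

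The paper repairs exactly this point by additionally throwing $a_{\Wc_{\Sc^c},:}$ into the conditioning: since $a_{\Wc_\Sc,\Mc_\Sc}\cup a_{\Wc_{\Sc^c},:}\supseteq a_{:,\Mc_\Sc}$, the internal messages $\{x_k:k\in\Sc\}$ become computable and the conditional entropy of the target is zero. One then writes $\Hen(a_{\Wc_\Sc,\Mc_\Sc},x_{\Sc^c},a_{\Wc_{\Sc^c},:},a_{\Wc_\Sc,:})=QNB$ and bounds $\Hen(a_{\Wc_\Sc,\Mc_\Sc},x_{\Sc^c},a_{\Wc_{\Sc^c},:})$ by the sum of three entropies; the extra term $\Hen(a_{\Wc_{\Sc^c},:})=(K-|\Sc|)(Q/K)NB$ cancels against the corresponding part of $QNB$, leaving precisely your displayed inequality. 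An equivalent fix in your framework is to run a \emph{conditional} argument: condition on the full $a_{:,\Mc_\Sc}$, use $\Hen(a_{\Wc_\Sc,\Mc_\Sc^c}\mid a_{:,\Mc_\Sc})=|\Sc|(Q/K)(N-|\Mc_\Sc|)B$ by independence, and bound this by $\Hen(x_{\Sc^c})$. Either way, simply replacing (i) by the full $a_{:,\Mc_\Sc}$ in your unconditional sum would give a factor $K$ instead of $|\Sc|$ and hence a strictly weaker bound, so one of these two refinements is necessary.
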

\begin{proof}
Lemma~\ref{lm:converseg} holds for a general setting.  We essentially use a ``cut-set'' technique  in the proof of this lemma. 
For notational convenience,  let \[a_{q,:} \defeq \{ a_{q,n} \colon n\in [1:N] \},\] which denotes all the intermediate values required for computing  the Reduce function $q$; let \[a_{\Wc_k,:} \defeq  \{ a_{q,n} \colon  q\in  \Wc_k,  n\in [1:N] \}, \] which represents  all the intermediate values that are required as inputs for computing  all the Reduce functions at Node~$k$; and let \[a_{:,\Mc_k}\defeq  \{ a_{q,n} \colon  q\in  [1:Q],   n\in  \Mc_k\}, \]  which denotes all the intermediate values cached at Node~$k$ after the Map phase, for any $k\in [1:K]$.
For any set $\Sc \subseteq [1:K]$,  we let 
\begin{align}
\Wc_{\Sc}           & \defeq  \cup_{i\in \Sc} \Wc_{i},     \label{eq:def101} \\
\Mc_{\Sc}            & \defeq  \cup_{i\in \Sc} \Mc_{i},     \label{eq:def102} \\
a_{\Wc_{\Sc},:}   & \defeq  \{ a_{q,n} \colon  q\in  \Wc_{\Sc},  n\in [1:N] \},   \\
a_{:, \Mc_{\Sc}}   & \defeq  \{ a_{q,n} \colon  q\in  [1:Q],   n\in  \Mc_{\Sc}\},  \\
x_{\Sc}                &\defeq    \{x_i \colon  i \in \Sc  \}.      \label{eq:def105}
\end{align}
Recall that $x_{k}$ is the message sent from Node~$k$ and is a deterministic function of the intermediate values $a_{\mathcal{W}_k^c,\Mc_k}$, i.e., $x_{k} = f_{k}(a_{\mathcal{W}_k^c,\Mc_k})$. 
Also recall that $x_{k}$ needs to be communicated under the following communication load constraint:
\begin{align}
 \Hen(x_{k}) \leq \frac{QBL_{k}}{K},  \quad  k=1,2,\cdots, K      \label{eq:chain11}
\end{align}
(see \eqref{eq:comc}).
In this proof we use a ``cut-set''  technique. Let us consider an arbitrary  ``cut'' to divide $K$ nodes into two groups. Let $\Sc$ and  $\Sc^c$ denote the sets of node indices of  the first and the second groups, respectively, for $\Sc \subseteq [1:K]$ and $\Sc^c  = [1:K] \setminus \Sc$.

We first argue that the following equality should hold:  
\begin{align}\label{eq:conden}
\Hen(a_{\Wc_{\Sc},:} | a_{\Wc_{\Sc},\Mc_{\Sc}}, x_{\Sc^c}, a_{\Wc_{\Sc^c},:}) = 0,
\end{align}
because  $a_{\Wc_{\Sc},:}$ can be reconstructed by the knowledge of $a_{\Wc_{\Sc},\Mc_{\Sc}}$ and $x_{\Sc^c}$. Note that  the intermediate values $a_{\Wc_{\Sc},:}$ are used as inputs for computing  all  the Reduce functions at a group of nodes indexed by set $\Sc$. 
Also note that $x_{\Sc^c}$ represents a set of messages (see \eqref{eq:def105}), which are sent from a group of nodes indexed by set $\Sc^c$. 
Then, with  \eqref{eq:conden} and the chain rule, we directly get 
\begin{align} \label{eq:chain}
\Hen( a_{\Wc_{\Sc},\Mc_{\Sc}}, x_{\Sc^c}, a_{\Wc_{\Sc^c},:}, a_{\Wc_{\Sc},:}) =  \Hen(a_{\Wc_{\Sc},\Mc_{\Sc}}, x_{\Sc^c}, a_{\Wc_{\Sc^c},:}) .
\end{align}
For the left-hand expression of (\ref{eq:chain}), we have 
\begin{align}
\Hen( a_{\Wc_{\Sc},\Mc_{\Sc}}, x_{\Sc^c}, a_{\Wc_{\Sc^c},:}, a_{\Wc_{\Sc},:})  &= \Hen( a_{\Wc_{\Sc^c},:}, a_{\Wc_{\Sc},:} ) \label{eq:rtst1}\\
&=QNB  \label{eq:2nb}
\end{align}
where  \eqref{eq:rtst1} follows from the fact that  $(a_{\Wc_{\Sc},\Mc_{\Sc}}, x_{\Sc^c})$ can be reconstructed from  $(a_{\Wc_{\Sc^c},:}, a_{\Wc_{\Sc},:} )$;
\eqref{eq:2nb} stems from our definition on the size of each intermediate value. 
Focusing on the right-hand side of  \eqref{eq:chain}, we have
\begin{align}
\Hen(a_{\Wc_{\Sc},\Mc_{\Sc}}, x_{\Sc^c}, a_{\Wc_{\Sc^c},:}) &\leq \underbrace{\Hen(a_{\Wc_{\Sc},\Mc_{\Sc}})}_{ = B \cdot |\Wc_{\Sc} | \cdot |\Mc_{\Sc}| }   + \underbrace{\Hen(x_{\Sc^c})}_{\leq \frac{Q B}{K} \cdot \sum_{k \in\Sc^c } L_k} +  \underbrace{\Hen(a_{\Wc_{\Sc^c},:})}_{ = BN \cdot |\Wc_{\Sc^c}| }  \non \\
  &\leq  B \cdot |\Wc_{\Sc} | \cdot |\Mc_{\Sc}|  +  \frac{Q B}{K} \cdot \sum_{k \in\Sc^c } L_k  +     BN \cdot |\Wc_{\Sc^c}|   \label{eq:ML11}  \\ 
  &\leq \frac{QB}{K} \Big( | \Sc| \cdot  \sum_{i\in \Sc } M_i  + \sum_{k \in\Sc^c } L_{k}  +  N \cdot (K- |\Sc|) \Big)   \label{eq:ML}
\end{align}
where \eqref{eq:ML11} stems from \eqref{eq:chain11} and our definition on the size of each intermediate value;
\eqref{eq:ML}  uses the fact that $ |\Mc_{\Sc}|  \leq  \sum_{i\in \Sc } M_i $   (see  \eqref{eq:def102}),  that $|\Wc_{\Sc} | = Q \cdot | \Sc| /K$ (see \eqref{eq:funcsym} and \eqref{eq:def101}), and that  $|\Wc_{\Sc^c} | = Q \cdot (K- | \Sc|) /K$.
Finally, combining \eqref{eq:chain}, \eqref{eq:2nb} and \eqref{eq:ML} gives the following bound
\begin{align*}
QNB \leq \frac{QB}{K} \Big( | \Sc| \cdot  \sum_{i\in \Sc } M_i  + \sum_{k \in\Sc^c } L_{k}  +  N \cdot (K- |\Sc|) \Big), 
\end{align*}
which can be easily simplified to bound \eqref{eq:conv22}, that is,  $N\cdot |\Sc|  \leq   |\Sc| \cdot \sum_{i\in \Sc} M_i  +  \sum_{k\in \Sc^{c}} L_k$.
\end{proof}

\subsection{Converse proof of  Theorem~\ref{thm:K2} \label{sec:ThmTK2}}

For a two-node ($K=2$) distributed computing system,  Lemma~\ref{lm:converseg} gives the bounds 
\begin{align}
 M_1  &\geq  N - L_2,      \label{eq:conv101}  \\
  M_2  &\geq  N - L_1,  \label{eq:conv102}  \\ 
   M_2 + M_1  &\geq  N, \label{eq:conv103}  
\end{align}
by setting $\Sc = \{1\} $,  $\Sc = \{2\} $, and $\Sc = \{1, 2\}$, respectively.  
Therefore, by combining   bounds~\eqref{eq:conv101}-\eqref{eq:conv103}, we have
\begin{align} \label{eq:theorem1a}
\Mtn &\ge \max\{N, 2N-L\}.
\end{align}
From \eqref{eq:conv101} and \eqref{eq:conv102} we have 
\begin{align}
  \Mwtn   &\geq  N - \min \{L_1, L_2\} .   \label{eq:convL1L2}  
\end{align}
Furthermore,  bound~\eqref{eq:conv103} also implies that 
\begin{align}
   \Mwtn   &\geq  \ceil*{N/2}.  \label{eq:conv104}  
\end{align}
Therefore, combining   bounds~\eqref{eq:convL1L2} and \eqref{eq:conv104}, we have
\begin{align}\label{eq:theorem1b}
\Mwtn  &\ge N - \min \{L_1, L_2, N - \ceil*{N/2}\}.
\end{align}
The bounds \eqref{eq:theorem1a} and \eqref{eq:theorem1b} give the converse proof for Theorem \ref{thm:K2}.

\subsection{Converse proof of  Theorem~\ref{thm:K3} \label{sec:ThmTK3}}

For a distributed computing system with $K=3$,  Lemma~\ref{lm:converseg} gives the following bounds 
\begin{align}
 M_1  &\geq  N - (L_2 + L_3),      \label{eq:corr31}  \\
  M_2  &\geq  N - (L_1 + L_3),  \label{eq:corr32}  \\ 
   M_3 &\geq  N - (L_1 + L_2),    \label{eq:corr33}  
\end{align}
by setting $\Sc = \{1\}$, $\{2\}$ and $\{3\}$, respectively. Then, by combining bounds \eqref{eq:corr31}-\eqref{eq:corr33} we have
\begin{align}
\Mtn &\geq 3N - 2L \label{eq:sum3a}.
\end{align}
Furthermore, by setting  $\Sc = \{1,2,3\} $, Lemma~\ref{lm:converseMwt} gives the following bound
\begin{align}
\Mtn &\geq  N.       \label{eq:corr311}  
\end{align}
Finally, for a distributed computing system with $K=3$, the total computation load is bounded by 
\begin{align}\label{eq:otherboundproof}
\Mtn \geq \ceil*{\frac{7N-2L}{3}}
\end{align}
by following the result in \cite[Lemma~1]{LMYA:17} (also see \cite[Theorem~1]{KWA:17}).
Bound~\eqref{eq:otherboundproof} also uses the integer property of $M_k$, $\forall k \in [1:K]$.
Therefore,  with bounds~\eqref{eq:sum3a}, \eqref{eq:corr311} and \eqref{eq:otherboundproof}, we complete the converse proof of Theorem~\ref{thm:K3}.

\subsection{Converse proof of  Proposition~\ref{prop:KsmallL} \label{sec:prop3small}}
For a distributed computing system with $K$ nodes, Lemma~\ref{lm:converseg} gives the following bounds when $|\Sc| = 1$
\begin{align} \label{eq:totalKbound1}
M_i \geq N -L + L_i, \quad   i \in [1:K].
\end{align}
Combining all the bounds for $\Sc \subseteq [1:K], |\Sc| = 1$, we have
\begin{align} \label{eq:worstKbound1}
\Mtn = \sum_{i=1}^{K} M_i \geq KN - (K-1)L.
\end{align}
Furthermore, bound \eqref{eq:totalKbound1} also implies that
\begin{align} \label{eq:worstKbound2}
\Mwtn \geq N-L+\max_{k} \{L_k\}.
\end{align}
Hence, the bounds \eqref{eq:worstKbound1} and \eqref{eq:worstKbound2} provide the converse for Proposition~\ref{prop:KsmallL}.

\subsection{Converse proof of  Proposition~\ref{prop:KlargeL} \label{sec:prop3large}} 
 For a distributed computing system with $K$ nodes, Lemma~\ref{lm:converseg} gives the following bounds
\begin{align}\label{eq:totalKbound}
\Mtn = \sum_{i = 1}^{K} M_i  &\geq  N,
\end{align}
by setting $\Sc = [1:K]$ with $|\Sc| = K$. Furthermore, bound \eqref{eq:totalKbound} also implies that
\begin{align} \label{eq:worstKbound}
\Mwtn \geq \ceil*{N/K}.
\end{align}
The bounds \eqref{eq:totalKbound} and \eqref{eq:worstKbound} give the converse proof for Proposition~\ref{prop:KlargeL}.

\subsection{Proof of  Lemma~\ref{lm:converseMwt} \label{sec:lemWK3} }
For a distributed computing system with $K=3$,  Lemma~\ref{lm:converseg} gives the bounds
\begin{align}
 M_1 + M_2  &\geq  N - \frac{L_3}{2},      \label{eq:corr301}  \\
  M_2 + M_3 &\geq  N - \frac{L_1}{2},  \label{eq:corr302}  \\ 
   M_3 + M_1 &\geq  N - \frac{L_2}{2}, \label{eq:corr303}  
\end{align}
by setting $\Sc =  \{1,2\}, \{2,3\}$ and $\{3,1\}$, respectively. Then, the following bound is directly from the combination of bounds~\eqref{eq:corr301}-\eqref{eq:corr303}:
\begin{align}
 \Mwtn \geq \ceil*{\frac{N}{2} - \frac{\min_k \{L_k\}}{4}}. \label{eq:worst3b}
\end{align}
The above bound also uses the integer property of $M_k$, $\forall k \in [1:K]$.
Furthermore,  by combining bounds~\eqref{eq:corr31}-\eqref{eq:corr33} it gives 
\begin{align}
 \Mwtn &\geq N - \min \{ L_2 +L_3, L_1+L_3,L_1+L_2\}. \label{eq:worst3a}
\end{align}
Finally, bounds~\eqref{eq:corr311} and  \eqref{eq:otherboundproof} imply  the following bounds
\begin{align}
 \Mwtn  \geq \ceil*{N/3}, \label{eq:sum3c}
\end{align}
and 
\begin{align}
 \Mwtn  \geq   \ceil*{\frac{\ceil*{\frac{7N-2L}{3}}}{3}}, \label{eq:worst3c343}
\end{align}
respectively. 
Therefore,  with bounds~\eqref{eq:worst3b}-\eqref{eq:worst3c343}, we complete the  proof of Lemma~\ref{lm:converseMwt}.

\section{Proof of  Lemma~\ref{prop:convbound5} (proof by contradiction) \label{sec:lemWK3Con}  }

In this section we will prove Lemma~\ref{prop:convbound5} using the proof by contradiction, for a  three-node $(K=3)$ distributed computing system. Note that for the proof by contradiction, we start by assuming opposite proposition is true and then show that this assumption leads to a contradiction. 
Specifically, for any nonnegative integer $\beta \in \mathbb{N}$ satisfying the  following condition,
\begin{align} 
\beta \leq \ceil[\Big]{\ceil[\Big]{ \Bigl(7N-2 \cdot \sum_{k=1}^3   \min\{L_k,  \ 2(\beta - 1)\} \Bigr)\Big/  3}\Big/3},    \label{eq:betacon} 
\end{align} 
we will prove that 
\begin{align}
 \Mwtn  \geq \beta. \label{eq:toprovebeta}
\end{align}

\subsection{Assume \eqref{eq:toprovebeta} is false} \label{lb:proofbycontra1}
At first, we assume that 
\begin{align}
 \Mwtn  \leq \beta - 1, \label{eq:letcontradict}
\end{align}
where $\beta$ is a nonnegative integer satisfying  the  condition in \eqref{eq:betacon}.

\subsection{Implication of Assumption \ref{lb:proofbycontra1}}
Based on the assumption in \eqref{eq:letcontradict}, it implies that 
\begin{align}
\Hen (x_k) &\leq  \Hen\bigl(  a_{\mathcal{W}_k^c,\Mc_k} \bigr) \label{eq:proofBineq1}  \\
&\leq  M_k  B \cdot \bigl(Q -Q/K\bigr)   \non \\
&\leq  (\beta - 1) \cdot B Q  (K-1)/K \label{eq:proofBineq}
\end{align}
for $k=1,2, 3$, where  \eqref{eq:proofBineq1} follows from the definition of $x_{k} = f_{k}(a_{\mathcal{W}_k^c,\Mc_k})$  (see \eqref{eq:xkdef}) and the identity of $\Hen (f(e)) \leq \Hen(e)$ for a deterministic function $f(e)$; \eqref{eq:proofBineq} uses the assumption in  \eqref{eq:letcontradict}.
Therefore, by combining \eqref{eq:proofBineq} and the communication load constraint in \eqref{eq:comc}, we have
\begin{align}
\Hen (x_k)  \leq \min \Big\{\frac{L_k Q B}{K}, \  \frac{(\beta - 1 )\cdot (K-1) \cdot QB }{K}\Big\}.     \label{eq:cond1123}
\end{align}
By defining a new parameter as 
\begin{align}
{L_k'} \defeq \min \{L_k, \  (\beta - 1)\cdot (K-1)\}, \label{eq:cond5a}
\end{align}
then \eqref{eq:cond1123} implies that we have the following communication load constraint:
\begin{align}
\Hen (x_k)  \leq \frac{{L_k'} QB}{K}     \label{eq:condnewlk}
\end{align}
 for $k=1,2,3$.
Based on \eqref{eq:condnewlk}, we have a new parameter on the total communication load constraint:
\begin{align*}
L'&  \defeq  \sum_{k=1}^3 {L_k'}\\
&=  \sum_{k=1}^3  \min \{L_k,  \  2(\beta - 1)\}
\end{align*}
for this setting with $K=3$. 
\subsection{Using the new parameter  $L'$ on the total communication load constraint}
At this point, by replacing $L$ with $L'$ and  by using the result of bound~\eqref{eq:worst3c343}, it gives 
\begin{align} 
 \Mwtn  \geq   \ceil*{ \ceil*{\frac{7N -2 \cdot \sum_{k=1}^3  \min \{L_k, \  2(\beta - 1)\}}{3}}\Big/3}.   \label{eq:worstbeta22} 
\end{align} 
\subsection{Contradiction}
From \eqref{eq:worstbeta22} and the condition in \eqref{eq:betacon}, it then implies that  
\begin{align} 
 \Mwtn  \geq   \ceil*{ \ceil*{\frac{7N -2 \cdot \sum_{k=1}^3  \min \{L_k, \  2(\beta - 1)\}}{3}}\Big/3} \geq \beta,  \label{eq:worstbeta11} 
\end{align} 
which contradicts with the assumption in \eqref{eq:letcontradict}. 
\subsection{Conclusion}
Therefore, the bound of $ \Mwtn  \ge \beta$ holds true for  any nonnegative integer $\beta \in \mathbb{N}$ satisfying the  condition  in \eqref{eq:betacon}. 
Finally, for  a nonnegative integer $\beta^*$ defined by the following optimization problem
\begin{align} 
\beta^* \!=\! \max  \  & \beta, \non \\
  \text{s.t.}   \   & \beta \leq \ceil[\bigg]{\ceil[\Big]{ \bigl(7N-2 \cdot \sum_{k=1}^3   \min\{L_k, \  2(\beta - 1)\} \bigr)\Big/  3}\Big/3},  \non \\
   & \beta \in \mathbb{N}  \non
\end{align}
($\beta^*$ is also a nonnegative integer satisfying the  condition  in \eqref{eq:betacon}), we conclude from the above argument  that 
\[ \Mwtn  \geq \beta^*,\] 
which completes the proof of  Lemma~\ref{prop:convbound5}.

\section{Proof of the tradeoff between $\Mt$ and $\Mwt$} \label{sec:contradp2}

In this section, we provide the proof of Theorem~\ref{thm:tradeoff}.
For a distributed computing system with $K$ nodes, we have proposed a general scheme in Section~\ref{sec:achiKsame} that indeed can always achieve $\Mt$ and $\Mwt$  at the same time. 
For the setting with $K=3$ we will prove that, if the following condition 
\begin{align} 
\Mt & <  3N- 2\Mwt - \min_{i\neq j}(L_i + L_j) \non
\end{align}
 (see also \eqref{eq:tradeoffeq}) is satisfied, then $\Mt$ and $\Mwt$  cannot be achieved at the same time. 
In this section we will focus on the setting with $K=3$. First, we provide a lemma that will be used in our proof. The proof of this lemma will be provided in Section~\ref{sec:lemMk} later on. Recall that  $S_j \defeq |\Sc_{\{j\}} | \defeq | \Mc_j \setminus \cup_{i \in [1:K]\setminus j  }\Mc_{i}| $ for $j\in [1:K]$ and similar notations are defined in \eqref{eq:Sdefine}.

\begin{lemma}  \label{lm:contrad}
For a  three-node ($K=3$) distributed computing system defined in Section~\ref{sec:system}, we have
\[M_k \ge N- \sum_{i=1, i \neq k}^{3} L_i+   \sum_{j=1, j \neq k}^3 S_j,\] for $k=1,2,3$. 
\end{lemma}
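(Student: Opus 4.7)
By symmetry it suffices to treat $k=1$; the cases $k=2,3$ follow by relabeling. Partitioning $\Mc_1^c = \Sc_{\{2\}} \cup \Sc_{\{3\}} \cup \Sc_{\{2,3\}}$ (the three kinds of files absent from node~1) gives the identity
\[ M_1 = N - S_2 - S_3 - S_{23}, \]
so the target bound $M_1 \ge N - L_2 - L_3 + S_2 + S_3$ is equivalent to the communication inequality
\[ L_2 + L_3 \;\ge\; 2 S_2 + 2 S_3 + S_{23}. \]
The plan is to obtain this by lower bounding $\Hen(x_2)+\Hen(x_3)$ and then invoking $\Hen(x_k)\le L_kQB/K$ from \eqref{eq:comc}.

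The lower bound is assembled from five blocks of intermediate values that must cross the cut separating node~1 from nodes~2,3:
\[ A = a_{\Wc_1,\Sc_{\{2\}}},\ B = a_{\Wc_3,\Sc_{\{2\}}},\ C = a_{\Wc_1,\Sc_{\{3\}}},\ D = a_{\Wc_2,\Sc_{\{3\}}},\ E = a_{\Wc_1,\Sc_{\{2,3\}}}. \]
These are mutually independent (disjoint collections of i.i.d.\ intermediate values) with joint entropy $\tfrac{Q}{K}(2S_2+2S_3+S_{23})B$. Let $\Xi$ denote every other intermediate value. Then $\Xi$ is independent of $(A,B,C,D,E)$, the cache $Y_1=a_{:,\Mc_1}$ lies entirely inside $\Xi$ (so $x_1$ is a deterministic function of $\Xi$), and conditioned on $\Xi$ the random parts of $Y_2$ and $Y_3$ are exactly $(A,B,E)$ and $(C,D,E)$.

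The crux is to show that $(A,B,C,D,E)$ is a deterministic function of $(\Xi,x_2,x_3)$. For $(A,C,E)$ this is immediate from node~1's decoder $g_1(Y_1,x_2,x_3)$. For $B$ the naive argument appears to stall because $g_3$ nominally uses $D\in Y_3$; I plan to break the apparent circularity using the observation that node~3 does not receive its own message, so the inputs to $g_3$ are only $(Y_3,x_1,x_2)$. Since the intermediate values are i.i.d., perturbing the $\Sc_{\{3\}}$-block of $Y_3$ (which contains $D$) leaves $B$ unchanged and also leaves $x_1,x_2$ unchanged (these are functions of $Y_1,Y_2$, disjoint from $\Sc_{\{3\}}$); decoder correctness on every realization therefore forces the $B$-component of $g_3$ to be invariant in that block. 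Hence the $B$-component is a function of $(Y_3\setminus a_{:,\Sc_{\{3\}}},x_1,x_2)$, every entry of which is fixed by $\Xi$ except $E$, and $E$ has already been recovered via $g_1$. A symmetric argument (node~2 does not receive $x_2$) extracts $D$ from node~2's decoder.

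From this determinism, $\Hen(A,B,C,D,E\mid\Xi,x_2,x_3)=0$, so
\[ \Hen(x_2,x_3)\;\ge\;\Hen(x_2,x_3\mid\Xi)\;\ge\;\Hen(A,B,C,D,E\mid\Xi)\;=\;\tfrac{Q}{K}(2S_2+2S_3+S_{23})B, \]
and combining with $\Hen(x_2)+\Hen(x_3)\ge\Hen(x_2,x_3)$ and $\Hen(x_k)\le L_kQB/K$ yields $L_2+L_3\ge 2S_2+2S_3+S_{23}$; rearranging proves the lemma for $k=1$, and $k=2,3$ follow by symmetry. The main obstacle is exactly this ``own-message'' decodability step: without it one only extracts $(A,C,E)$ from $(\Xi,x_2,x_3)$ and recovers merely the cut-set bound $L_2+L_3\ge S_2+S_3+S_{23}$ of Lemma~\ref{lm:converseg}; the factor-of-two improvement on the $S_2$ and $S_3$ terms rides entirely on the fact that neither node~2 nor node~3 receives its own outgoing message.
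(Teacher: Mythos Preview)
Your proof is correct and lands on exactly the same inequality the paper proves, $L_2+L_3\ge 2S_2+2S_3+S_{23}$ (which is the paper's $L_1+L_2\ge (N-M_3)+(S_1+S_2)$ after relabeling), and it rests on the same two ingredients: node~$k$'s decoder recovers $a_{\Wc_k,\Mc_k^c}$, and the \emph{other} two nodes' decoders---which do not take their own outgoing messages as input---supply the extra $S_i+S_j$ term.

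The presentation, however, is organized differently. The paper works entirely with entropy inequalities: it conditions $\Hen(x_2,x_3)$ on $a_{:,\Mc_1}$, splits off $\Hen(a_{\Wc_1,:}\mid a_{:,\Mc_1})=(N-M_1)\tfrac{QB}{K}$ via node~1's decodability, and then invokes a separate Lemma~\ref{lm:x1x2} which further conditions on $a_{:,\Sc_j^c}$ to feed the other nodes' decoders and extract the $(S_2+S_3)\tfrac{QB}{K}$ residual. Your route is more combinatorial: you name the five blocks $A,\ldots,E$ explicitly, show each is a deterministic function of $(\Xi,x_2,x_3)$, and conclude in one shot. Your perturbation step (varying $a_{:,\Sc_{\{3\}}}$ while holding $B,x_1,x_2$ fixed to force the $B$-output of $g_3$ to ignore that block) is the deterministic analogue of the paper's move of enlarging the conditioning set to $a_{:,\Sc_2^c}\supseteq \Mc_3$; both exploit that every realization has positive probability so the decoder must be pointwise correct. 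What your approach buys is transparency about \emph{which} intermediate values force the bound and why the own-message exclusion matters; what the paper's approach buys is a cleaner modular split (the cut-set part versus the refined Lemma~\ref{lm:x1x2} part) that may generalize more mechanically to larger~$K$.
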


At first, assume that there is a MapReduce scheme that achieves  the minimum worst-case computation load, i.e., 
\begin{align}   \label{eq:MwAssume}
 \Mwtn  = \Mwt.  
\end{align}
From  Lemma~\ref{lm:contrad}, we have one inequality given as 
\begin{align}
S_1 +S_2 \leq M_3+ L_1 + L_2 - N.  \label{eq:cond500a}
\end{align}
Furthermore,  we have
\begin{align}
S_1+ S_2+ S_{12} + M_3 = N,  \label{eq:cond5525}
\end{align}
based on the system requirement that each of the $N$ files should be assigned to at least one node, where $M_3 = S_3 + S_{13}+S_{23}+ S_{123}$ (see the corresponding definition in \eqref{eq:Sdefine}).
Then, combining \eqref{eq:cond500a} and \eqref{eq:cond5525} gives the following bound on $S_{12}$:
\begin{align}
S_{12}&=N-M_3-(S_1+S_2) \non\\
&\ge 2N-2M_3-(L_1 +L_2)  \label{eq:cond88231} \\
&\ge 2N- 2 \Mwt - (L_1 +L_2),  \label{eq:cond8823}
\end{align}
where \eqref{eq:cond88231} follows from \eqref{eq:cond500a};  \eqref{eq:cond8823} is based on the assumption in \eqref{eq:MwAssume}. 
At this point,  we have the following bounds on the total computation load:
\begin{align}
\Mtn  &= \underbrace{S_{1}+S_{12}+S_{13}+ S_{123}}_{M_1} +   \underbrace{S_{2}+ S_{12} + S_{23}+S_{123}}_{M_2} + M_3   \label{eq:Msum332} \\
&=S_{1}+S_{2}+2S_{12}+S_{13}+S_{23}+2S_{123}\!+\!M_3\non\\
&\ge S_{1}+S_{2}+2S_{12}+M_3 \label{eq:cond55a}\\
&= N  + S_{12} \label{eq:cond57a11} \\
&\ge 3N-2 \Mwt - (L_1 + L_2), \label{eq:cond57a}
\end{align}
where \eqref{eq:Msum332} uses the identity of $M_1 = S_{1}+S_{12}+S_{13}+ S_{123}$ and $M_2= S_{2}+ S_{12} + S_{23}+S_{123}$;
\eqref{eq:cond55a} follows from the property of $S_{13}, S_{23}, S_{123} \ge 0$; \eqref{eq:cond57a11} is from  \eqref{eq:cond5525}; and \eqref{eq:cond57a} is from  \eqref{eq:cond8823}. 
Similarly, by following the above steps we also have
\begin{align}
 \Mtn \geq 3N-2 \Mwt - (L_2 + L_3), \label{eq:cond57b}
\end{align}
and
\begin{align}
\Mtn \geq 3N-2 \Mwt - (L_1 + L_3), \label{eq:cond57c}
\end{align}
which, together with \eqref{eq:cond57a}, give the following conclusion
\begin{align}
\Mtn  \ge  3N- 2\Mwt - \min_{i\neq j}(L_i + L_j), 
\end{align}
if the condition in \eqref{eq:MwAssume} is satisfied.  
Based on this conclusion, it implies that when $\Mt  <  3N- 2\Mwt - \min_{i\neq j}(L_i + L_j) $ is satisfied, $\Mt$ and $\Mwt$ cannot be achieved at the same time. It then proves Theorem~\ref{thm:tradeoff}.

\subsection{Proof of Lemma~\ref{lm:contrad}}  \label{sec:lemMk}

Let us now prove Lemma~\ref{lm:contrad} that has been used in our proof. 
Beginning with the communication load constraint in \eqref{eq:comc}, we have  
\begin{align}
\frac{ (L_1 + L_2)\cdot QB}{K} &\ge \Hen(x_1)  + \Hen(x_2) \label{eq:comcx1x2}\\
 &\ge \Hen(x_1, x_2) \label{eq:lema}\\
&\ge \Hen(x_1, x_2 | a_{:,\Mc_3}) \label{eq:lemb}\\
&= \Hen(x_1, x_2 | a_{:,\Mc_3}) + \underbrace{\Hen(a_{\Wc_3,:}|x_1, x_2, a_{:, \Mc_3}}_{=0}) \label{eq:lemc}\\
&= \Hen(x_1, x_2,a_{\Wc_3,:}|a_{:,\Mc_3}) \label{eq:lemd}\\
&= \underbrace{\Hen(a_{\Wc_3,:}|a_{:,\Mc_3})}_{= |\Wc_3|\cdot N\cdot B - |\Wc_3|\cdot|\Mc_3|\cdot B} +\underbrace{\Hen(x_1, x_2| a_{:,\Mc_3}, a_{\Wc_3,:})}_{\ge (S_1 + S_2)\cdot B\cdot Q/K} \label{eq:leme}\\
&\ge \frac{(N-M_3)\cdot Q B}{K} + \frac{(S_1 + S_2)\cdot Q B}{K},  \label{eq:lemg}
\end{align}
where  \eqref{eq:comcx1x2}  stems from \eqref{eq:comc}; 
\eqref{eq:lema} uses the identity that $\Hen(x_1)+ \Hen(x_2) \geq \Hen(x_1, x_2)$; 
\eqref{eq:lemb} is from the fact that conditioning reduces entropy; \eqref{eq:lemc} holds true because $a_{\Wc_3,:}$ can be reconstructed from $(x_1, x_2, a_{:, \Mc_3})$;  \eqref{eq:lemd} and \eqref{eq:leme} result  from the chain rule; 
\eqref{eq:lemg} follows from the fact that $\Hen(a_{\Wc_3,:} | a_{:,\Mc_3})= |\Wc_3|\cdot N B - |\Wc_3|\cdot M_3  B = (N-M_3)\cdot Q B/ K$ and Lemma~\ref{lm:x1x2} (see below).
At this point, from \eqref{eq:lemg} we have 
\begin{align}
L_1 + L_2 &\geq N - M_3 + (S_1 + S_2) \label{eq:lemg0}.
\end{align}
Similarly, by following the above steps we also have two bounds given as:
\begin{align}
 L_2 + L_3 &\geq N - M_1 + (S_2 + S_3),   \label{eq:lemg1} \\
 L_1 + L_3 &\geq N - M_2 + (S_1 + S_3),   \label{eq:lemg2}
\end{align}
which, together with \eqref{eq:lemg0}, complete the proof of Lemma~\eqref{lm:contrad}, that is,
\begin{align}
M_k \ge N- \sum_{i=1, i \neq k}^{3} L_i+   \sum_{j=1, j \neq k}^3 S_j, \quad  k=1,2,3. 
\end{align}

\vspace{5pt}

\begin{lemma}  \label{lm:x1x2}
For a  three-node ($K=3$) distributed computing system defined in Section~\ref{sec:system},  the following inequalities hold true:
\begin{align} 
\Hen(x_1, x_2| a_{:,\Mc_3}, a_{\Wc_3,:}) &\geq \frac{(S_1 + S_2)\cdot QB}{K},  \label{eq:proofLMb}\\
\Hen(x_1, x_3| a_{:,\Mc_2}, a_{\Wc_2,:}) &\geq \frac{(S_1 + S_3)\cdot QB}{K}, \label{eq:proofLMb13}   \\
\Hen(x_2, x_3| a_{:,\Mc_1}, a_{\Wc_1,:}) &\geq \frac{(S_2 + S_3)\cdot QB}{K}. \label{eq:proofLMb23}  
\end{align} 
\end{lemma}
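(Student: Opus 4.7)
The three inequalities are symmetric under permutation of the node labels, so I will focus on \eqref{eq:proofLMb} and indicate at the end that the other two follow by relabeling. Write $Y \defeq (a_{:,\Mc_3}, a_{\Wc_3,:})$ for the conditioning. The plan is to split the entropy with the chain rule
\[
\Hen(x_1,x_2\mid Y) \;=\; \Hen(x_1\mid Y) + \Hen(x_2\mid x_1, Y),
\]
and then to lower bound each of the two terms by $S_1\cdot QB/K$ and $S_2\cdot QB/K$, respectively, by extracting one ``exclusive'' packet of intermediate values per term.

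For the first term, I would add $a_{:,\Mc_2}$ to the conditioning (this only decreases entropy), and then use the correctness of the scheme at Node~2: since Node~2 must reconstruct $a_{\Wc_2,n}$ for every $n\in[1:N]$ from $(x_1,x_3,a_{:,\Mc_2})$, and since $x_3=f_3(a_{\Wc_3^c,\Mc_3})$ is already a deterministic function of $a_{:,\Mc_3}\subseteq Y$, the tuple $a_{\Wc_2,\Sc_1}$ is a deterministic function of $(x_1,Y,a_{:,\Mc_2})$. Hence
\[
\Hen(x_1\mid Y) \;\geq\; \Hen(x_1\mid Y,a_{:,\Mc_2}) \;\geq\; \Hen(a_{\Wc_2,\Sc_1}\mid Y, a_{:,\Mc_2}).
\]
The crucial independence check is then that $a_{\Wc_2,\Sc_1}$ is disjoint from (and hence independent of, by i.i.d.~uniformity of the intermediate values) the conditioning: the index pairs $(q,n)$ in $a_{\Wc_2,\Sc_1}$ satisfy $q\in\Wc_2$ and $n\in\Sc_1$, while $Y$ contains only pairs with $q\in\Wc_3$ or $n\in\Mc_3$, and $a_{:,\Mc_2}$ contains only pairs with $n\in\Mc_2$. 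Using $\Wc_2\cap\Wc_3=\emptyset$ and $\Sc_1\cap(\Mc_2\cup\Mc_3)=\emptyset$ (by the defining property of $\Sc_1$), all three disjointness conditions hold, giving $\Hen(a_{\Wc_2,\Sc_1}\mid Y,a_{:,\Mc_2})=\Hen(a_{\Wc_2,\Sc_1})=S_1\cdot QB/K$.

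For the second term $\Hen(x_2\mid x_1,Y)$, I would run the mirror-image argument: add $a_{:,\Mc_1}$ to the conditioning and extract $a_{\Wc_1,\Sc_2}$ using correctness at Node~1 (which recovers $a_{\Wc_1,\Sc_2}$ from $(x_2,x_3,a_{:,\Mc_1})$, hence from $(x_2,Y,a_{:,\Mc_1})$). Independence now needs one extra check beyond the previous case, namely that $a_{\Wc_1,\Sc_2}$ is independent of $x_1$ as well; this is immediate since $x_1=f_1(a_{\Wc_1^c,\Mc_1})$ is a function only of intermediate values with first index in $\Wc_1^c$, and $\Wc_1\cap\Wc_1^c=\emptyset$. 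Combining the two bounds gives \eqref{eq:proofLMb}. The statements \eqref{eq:proofLMb13} and \eqref{eq:proofLMb23} then follow by the same argument with the triple $(1,2,3)$ permuted to $(1,3,2)$ and $(2,3,1)$, respectively.

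The main obstacle is the independence bookkeeping: a naive attempt to add both $a_{:,\Mc_1}$ and $a_{:,\Mc_2}$ to the conditioning in a single step and then extract $a_{\Wc_1,\Sc_2}\cup a_{\Wc_2,\Sc_1}$ fails, because $\Sc_2\subseteq\Mc_2$ forces $a_{\Wc_1,\Sc_2}\subseteq a_{:,\Mc_2}$ and destroys independence. Splitting the entropy by the chain rule first and adding only $a_{:,\Mc_2}$ (respectively $a_{:,\Mc_1}$) when extracting $a_{\Wc_2,\Sc_1}$ (respectively $a_{\Wc_1,\Sc_2}$) avoids this trap, and the exclusivity properties of $\Sc_1,\Sc_2$ guarantee the remaining disjointness conditions automatically.
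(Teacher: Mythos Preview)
Your proposal is correct and follows essentially the same approach as the paper's proof: both split $\Hen(x_1,x_2\mid a_{:,\Mc_3},a_{\Wc_3,:})$ via the chain rule, enlarge the conditioning so that correctness at the complementary node forces each $x_i$ to determine the ``exclusive'' intermediate values $a_{\Wc_j,\Sc_i}$, and then use disjointness of index pairs (together with the i.i.d.\ assumption) to evaluate the remaining conditional entropy as $S_i\cdot QB/K$. The only cosmetic differences are that the paper conditions on $a_{:,\Sc_1^c}$ (equivalently $a_{:,\Mc_2\cup\Mc_3}$) rather than $a_{:,\Mc_2}$ alone, and it extracts all of $a_{\Wc_2,:}$ before reducing to the $\Sc_1$ part, whereas you go straight to $a_{\Wc_2,\Sc_1}$; your bookkeeping with the independence checks is in fact a bit cleaner.
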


\begin{proof}
We will focus on the proof of bound~\eqref{eq:proofLMb}, as the proofs of bounds~\eqref{eq:proofLMb}, \eqref{eq:proofLMb13} and \eqref{eq:proofLMb23} are similar.  
We first argue that the following equations  are true given the system constraints:
\begin{align}
\Hen(a_{\Wc_2, :} | x_1, x_2, a_{:,\Mc_3}, a_{\Wc_3,:}, a_{:, \Sc_1^c}) &= 0, \label{eq:Hconstraint1}\\
\Hen(a_{\Wc_1, :} | x_2, x_1, a_{:,\Mc_3}, a_{\Wc_3,:}, a_{:, \Sc_2^c}) &= 0.  \label{eq:Hconstraint2}
\end{align}
The two equations result from the fact that  $a_{\Wc_2,:}$ and $a_{\Wc_1,:}$ can be recovered from the information of  $(x_1, x_2, a_{:,\Mc_3}, a_{\Wc_3,:},  a_{:, \Sc_1^c})$ and $(x_1, x_2, a_{:,\Mc_3}, a_{\Wc_3,:}, a_{:, \Sc_2^c})$, respectively.

Let us first focus  on the equation in \eqref{eq:Hconstraint1}. With the use of chain rule,  \eqref{eq:Hconstraint1} implies that 
\begin{align}
\Hen(x_1, x_2, a_{:,\Mc_3}, a_{\Wc_3,:}, a_{:, \Sc_1^c}) = \Hen(x_1, x_2, a_{:, \Mc_3}, a_{\Wc_3, :}, a_{:, \Sc_1^c}, a_{\Wc_2, :}).  \label{eq:lmproof0}
\end{align}
The right-hand side of \eqref{eq:lmproof0} can be rewritten by using the chain rule: 
\begin{align}
\Hen(x_1, x_2, a_{:, \Mc_3}, a_{\Wc_3, :}, a_{:, \Sc_1^c}, a_{\Wc_2, :}) =  \Hen(x_2, a_{:,\Mc_3}, a_{\Wc_3,:}, a_{:,\Sc_1^c}) +  \Hen(x_1, a_{\Wc_2, :} |x_2, a_{:, \Mc_3}, a_{\Wc_3,:}, a_{:,\Sc_1^c}) .   \label{eq:lmproof2a1}
\end{align}
Similarly, the left-hand side of \eqref{eq:lmproof0} can be rewritten by using the chain rule: 
\begin{align}
 \Hen(x_1, x_2, a_{:,\Mc_3}, a_{\Wc_3,:}, a_{:, \Sc_1^c})  =  \Hen(x_2, a_{:,\Mc_3}, a_{\Wc_3,:}, a_{:,\Sc_1^c}) +
\Hen(x_1|x_2, a_{:, \Mc_3}, a_{\Wc_3,:}, a_{:,\Sc_1^c}) \label{eq:lmproof1}.
\end{align}
Then, by plugging  \eqref{eq:lmproof2a1} and \eqref{eq:lmproof1} into \eqref{eq:lmproof0}, we have
\begin{align}
 \Hen(x_1|x_2, a_{:, \Mc_3}, a_{\Wc_3,:}, a_{:,\Sc_1^c}) = \Hen(x_1, a_{\Wc_2, :} |x_2, a_{:, \Mc_3}, a_{\Wc_3,:}, a_{:,\Sc_1^c})   \label{eq:lmproof111}.
\end{align}
From \eqref{eq:lmproof111} we further have:  
\begin{align}
 \Hen(x_1|x_2, a_{:, \Mc_3}, a_{\Wc_3,:}, a_{:,\Sc_1^c}) &= \Hen(x_1, a_{\Wc_2, :} |x_2, a_{:, \Mc_3}, a_{\Wc_3,:}, a_{:,\Sc_1^c})     \non\\
 &= \Hen(a_{\Wc_2,:}| x_2, a_{:,\Mc_3}, a_{\Wc_3,:}, a_{:,\Sc_1^c}) + \underbrace{\Hen(x_1 | x_2, a_{:,\Mc_3}, a_{\Wc_3,:}, a_{:, \Sc_1^c}, a_{\Wc_2, :})}_{\ge 0}  \label{eq:lmproof2}\\
&\geq \Hen(a_{\Wc_2,:}| x_2, a_{:,\Mc_3}, a_{\Wc_3,:}, a_{:,\Sc_1^c}) \label{eq:lmproof3}\\
&= \Hen(a_{\Wc_2,:}|a_{:,\Sc_1^c}, a_{\Wc_3,:}) \label{eq:lmproof4}\\
&= \frac{QB\cdot (N-|\Sc_1^c|)}{K} \label{eq:lmproof5}\\
&= \frac{QB S_1}{K}, \label{eq:lmproof6}
\end{align}
where \eqref{eq:lmproof2} is from chain rule; 
\eqref{eq:lmproof3} stems from the property that entropy is always nonnegative; 
\eqref{eq:lmproof4} is due to the fact that both $x_2$ and $a_{:,\Mc_3}$ can be recovered by the information of $a_{:,\Sc_1^c}$; \eqref{eq:lmproof5} is from the fact that $\Hen(a_{\Wc_2,:}|a_{:, \Sc_1^c}, a_{\Wc_3,:}) = |\Wc_2| \cdot NB  -   |\Wc_2| \cdot |\Sc_1^c|  \cdot B = QB\cdot (N-|\Sc_1^c|)/K$; \eqref{eq:lmproof6} uses the fact that $S_1= N-|\Sc_1^c|$.
Similarly, focusing on  the equation in \eqref{eq:Hconstraint2} and using the above steps, we also have 
\begin{align}
\Hen(x_2|x_1, a_{:, \Mc_3}, a_{\Wc_3,:}, a_{:,\Sc_2^c}) &\geq  \frac{QB S_2}{K} \label{eq:lmproof6a}.
\end{align}

Finally, by using the results in \eqref{eq:lmproof6} and  \eqref{eq:lmproof6a}, we prove bound~\eqref{eq:proofLMb} as 
\begin{align}
\Hen(x_1, x_2| a_{:,\Mc_3}, a_{\Wc_3,:})  &= \Hen(x_1|x_2,a_{:,\Mc_3}, a_{\Wc_3,:}) + \Hen(x_2|a_{:,\Mc_3}, a_{\Wc_3,:})\label{eq:prooflm101}\\
&\ge \Hen(x_1|x_2, a_{:, \Mc_3}, a_{\Wc_3, :}, a_{:, \Sc_1^c}) + \Hen(x_2|x_1, a_{:, \Mc_3}, a_{\Wc_3, :}, a_{:, \Sc_1^c}) \label{eq:prooflm102}\\
&\geq  \frac{QB S_1}{K} + \frac{QB S_2}{K} \label{eq:prooflm103}\\
&= \frac{(S_1 +S_2) \cdot Q B}{K},
\end{align}
where \eqref{eq:prooflm101} follows from chain rule; \eqref{eq:prooflm102} is due to the fact that conditioning reduces entropy; \eqref{eq:prooflm103} is from \eqref{eq:lmproof6} and \eqref{eq:lmproof6a}. 
At this point, we complete the proof of bound~\eqref{eq:proofLMb}. By using the similar steps and interchanging the roles of nodes, one can also prove bounds~\eqref{eq:proofLMb13} and \eqref{eq:proofLMb23} and complete the whole proof.
\end{proof}

\section{Conclusion}   \label{sec:conclusion}

For the distributed computing systems with heterogeneous communication load constraints, we provided the \emph{information-theoretical} characterization of the   \emph{minimum total computation load} and the \emph{minimum worst-case computation load} for some cases.  
In this setting, we showed that for some cases there is a tradeoff between the minimum total computation load  and the minimum worst-case computation load, in the sense that both cannot be achieved at the same time. 
We also showed that  in some instances, proof by contradiction is a very powerful approach to derive the optimal converse bound. 
Finally, we identified two extreme regimes in which the scheme with coding  and the scheme without coding are optimal, respectively.



\end{document}